\theoremstyle{plain}
\newtheorem{theorem}{Theorem}[section]
\newtheorem{lemma}[theorem]{Lemma}
\newtheorem{corollary}[theorem]{Corollary}
\theoremstyle{definition}
\newtheorem{definition}{Definition}
\theoremstyle{remark}
\def\imghome{./figures}
\newcommand{\mlvec}[1]{\boldsymbol{\mathbf{#1}}}
\newcommand{\partialtheta}[2]{\frac{\partial #1}{\partial \theta_{#2}}}
\newcommand{\diffT}[1]{\frac{d #1}{d t}}
\newcommand{\qnnmeasure}{{\mlvec{M}_0}}
\newcommand{\tildeH}[1]{{\mlvec{H}_{#1}}}
\newcommand{\paramM}{{\mlvec{M}( \mlvec{\theta}( t ) )}}
\newcommand{\varytheta}{{\mlvec{\theta}(t)}}
\newcommand{\generatorH}[1]{{\mlvec{H}^{(#1)}}}
\newcommand{\Kasym}{{\mlvec{K}_\mathsf{asym}}}
\newcommand{\Kpert}{{\mlvec{K}_\mathsf{pert}}}
\newcommand{\varyl}{{L(\varytheta)}}
\newcommand{\varyK}{{\mlvec{K}(\varytheta)}}
\newcommand{\varyM}{{\mlvec{M}(\varytheta)}}
\newcommand{\varyr}{{\mlvec{r}(\varytheta)}}
\newcommand{\compI}{{\mathrm{i}}}
\newcommand{\imagi}{{i}}
\newcommand{\fronorm}[1]{\norm{#1}_{{F}}}
\newcommand{\opnorm}[1]{\norm{#1}_{\mathsf{op}}}
\def\HH{\mlvec{H}}
\def\YY{\mlvec{Y}}
\def\01{\{0,1\}}
\newcommand{\Prob}{{\mathbf{Pr}}}
\newcommand{\tinyspace}{\mspace{1mu}}
\newcommand{\norm}[1]{\left\lVert\tinyspace#1\tinyspace\right\rVert}
\newcommand{\tr}{\operatorname{tr}}
\newcommand{\trnorm}[1]{\norm{#1}_{\tr}}
\def\({\left(}
\def\){\right)}
\def\complex{\mathbb{C}}
\def\real{\mathbb{R}}
\def\<{\langle}
\def\>{\rangle}
\def\S{\mathcal{S}}
\def\1{\mathbbm{1}}
\def\EXP{\mathbb{E}}
\author[1,2]{Xuchen You\thanks{xyou@umd.edu}}
\author[3]{Shouvanik Chakrabarti}
\author[4]{Boyang Chen}
\author[1,2]{Xiaodi Wu\thanks{xwu@cs.umd.edu}}
\affil[1]{\small Department of Computer Science, University of Maryland}
\affil[2]{\small Joint Center for Quantum Information and Computer Science, University of Maryland}
\affil[3]{\small Global Technology Applied Research, JPMorgan Chase \& Co.}
\affil[4]{\small Institute for Interdisciplinary Information Sciences, Tsinghua University}
\title{Analyzing Convergence in Quantum Neural Networks:\\
  Deviations from Neural Tangent Kernels}
\begin{document}
\maketitle

\begin{abstract}
A quantum neural network (QNN) is a parameterized mapping efficiently implementable on near-term Noisy Intermediate-Scale Quantum (NISQ) computers.
It can be used for supervised
learning when combined with classical gradient-based optimizers.
Despite the existing empirical and theoretical investigations, the
convergence of QNN training is not fully understood.
Inspired by the success of the neural tangent kernels (NTKs) in probing into the dynamics of classical neural networks, 
a recent line of works proposes to study
over-parameterized QNNs by examining a quantum version of tangent kernels.
In this work, we study the dynamics of QNNs and show that contrary to popular belief it 
is qualitatively different from
that of any kernel regression: due to the unitarity of quantum operations, there is a non-negligible deviation from the tangent kernel regression derived at the random initialization. 
As a result of the deviation, we prove the at-most sublinear convergence for QNNs with Pauli measurements, which is beyond the explanatory power of any kernel regression dynamics.
We then present the actual dynamics of QNNs in the limit of over-parameterization.
The new dynamics capture the change of convergence rate during training, and implies that the range of measurements is crucial to the fast QNN convergence.


\end{abstract}

\section{Introduction}
\label{sec:intro}
Analogous to the classical logic gates, quantum gates are the basic building
blocks for quantum computing. A variational quantum circuit (also referred to as
an ansatz) is composed of 
parameterized quantum gates. A quantum neural network (QNN) is nothing but an
instantiation of learning with
parametric models using variational quantum circuits and quantum measurements:
A $p$-parameter $d$-dimensional QNN for a dataset $\{\mlvec{x}_{i},y_{i}\}$ is
specified by an encoding $\mlvec{x}_{i} \mapsto \mlvec\rho_{i}$ of the feature vectors into quantum states in an underlying
$d$-dimensional Hilbert space $\mathcal{H}$, a variational circuit
$\mlvec{U}(\mlvec{\theta})$ with real parameters $\mlvec{\theta}\in\real^{p}$, and a quantum
measurement $\mlvec{M}_{0}$. The predicted output $\hat{y}_{i}$ is obtained by
measuring $\mlvec{M}_{0}$ on the output $\mlvec{U}(\mlvec{\theta})\mlvec\rho_{i}\mlvec{U}^{\dagger}(\mlvec\theta)$.
Like deep neural networks, the parameters $\mlvec\theta$ in the variational circuits are
optimized by gradient-based methods to minimize an objective function that
measures the misalignments of the predicted outputs and the ground truth labels.

With the recent development of quantum
technology, the near-term Noisy Intermediate-Scale Quantum (NISQ) \citep{preskill2018quantum}
computer has become an important platform for demonstrating quantum advantage with practical applications.
As a hybrid of classical optimizers and quantum representations, QNNs is a
promising candidate for demonstrating such advantage on quantum computers
available to us in the near future: quantum machine learning models are proved
to have a margin over the classical counterparts in terms of the expressive
power due the to the exponentially large Hilbert space of
quantum states \citep{huang2021power, anschuetz2022critical}. On the other hand
by delegating the optimization procedures to classical computers, the hybrid
method requires significantly less quantum resources, which is crucial for
readily available quantum computers with limited coherence time and error
correction. There have been proposals of QNNs~\citep{Dunjko_2018,schuld2019quantum} for
classification~\citep{farhi2018classification,Romero_2017} and generative
learning~\citep{Lloyd_2018,Zoufal_2019,chakrabarti2019wass}.

Despite their potential there are challenges in the practical deployment of
QNNs. Most notably, the optimization problem for training QNNs can be highly
non-convex. The landscape of QNN training may be swarmed with spurious local
minima and saddle points that can trap gradient-based optimization methods
\citep{you2021exponentially, anschuetz2022quantum}.
QNNs with large dimensions also suffer from a phenomenon called the \emph{barren
plateau}~\citep{mcclean2018barren}, where the gradients of the parameters vanish
at random intializations, making convergence slow even in a trap-free landscape.
These difficulties in training QNNs, together with the challenge of
classically simulating QNNs at a decent scale, calls for a theoretical
understanding of the convergence of QNNs.

\paragraph{Neural Tangent Kernels}
Many of the theoretical difficulties in understanding QNNs have also been
encountered in the study of classical deep neural networks: despite the landscape of neural networks being non-convex and susceptible to
spurious local minima and saddle points, it has been empirically observed that the training errors decays exponentially in the training time~\citep{livni2014on, arora2019exact} in the highly \emph{over-parameterized} regime with sufficiently many number of trainable parameters.
This phenomenon is theoretically explained by connecting the training dynamics of neural networks to the kernel regression:
the kernel regression model generalizes the
linear regression by equipping the linear model with non-linear feature
maps. Given a training set
$\{\mlvec{x}_{j}, y_{j}\}_{j=1}^{m}\subset \mathcal{X}\times\mathcal{Y}$ and a
non-linear feature map $\phi:\mathcal{X}\rightarrow \mathcal{X}'$ mapping the
features to a potentially high-dimensional feature space $\mathcal{X}'$. The kernel
regression solves for the optimal weight $\mlvec{w}$ that minimizes the
mean-square loss
$\frac{1}{2m}\sum_{j=1}^{m}(\mlvec{w}^{T}\phi(\mlvec{x}_{j})-y_{j})^{2}$. The
name of kernel regression stems from the fact that the optimal hypothesis
$\mlvec{w}$ depends on the high-dimensional feature vectors
$\{\phi(\mlvec{x}_{j})\}_{j=1}^{m}$ through a $m\times m$ \emph{kernel} matrix
$\mlvec{K}$, such that $K_{ij}=\phi(\mlvec{x}_{i})^{T}\phi(\mlvec{x}_{j})$. The kernel regression enjoys a linear convergence (i.e. the mean square loss decaying exponentially over time) when $\mlvec{K}$ is positive definite.

The kernel matrix associated with a neural network is determined by tracking how the predictions for each training sample evolve jointly at random initialization. The study of the neural network convergence then reduces to characterizing the corresponding kernel matrices (the neural tangent kernel, or the NTK). 
In addition to the convergence results, NTK also serves as a tool for studying other aspect of neural networks including generalization \citep{canatar2021spectral, chen2020generalized} and stability \citep{bietti2019inductive}.

The key observation that justifies the study of neural networks with neural tangent kernels, is that the NTK becomes a constant (over time) during training in the limit of infinite layer widths. 
This has been theoretically established starting with the analysis of wide fully-connected neural networks~\citep{jacot2018neural,arora2019exact,chizat2018lazy} and later generalized to a variety of architectures (e.g. \citet{allenzhu2019convergence}).

\paragraph{Quantum NTKs}
Inspired by the success of NTKs, recent years have witnessed multiple works attempting to associate over-parameterized QNNs to kernel regression. 
Along the line there are two types of studies.
The first category investigates and compares the properties of the ``quantum'' kernel induced by the quantum encoding of classical features, where $K_{ij}$ associated with the $i$-th and $j$-th feature vectors $\mlvec{x}_{i}$ and $\mlvec{x}_{j}$ equals $\tr(\mlvec{\rho_{i}\mlvec{\rho}_{j}})$ with $\mlvec{\rho}_{i}$ and $\mlvec{\rho}_{j}$ being the quantum state encodings, without referring to the dynamics of training~\citep{schuld2019quantum, huang2021power, liu2021representation}.
The second category seeks to directly establish the quantum version of NTK for QNNs by examining the evolution of the model predictions at random initialization, which is the recipe for calculating the classical NTK in \citet{arora2019exact}:
\citet{shirai2021quantum} empirically evaluates the direct training of the quantum NTK instead of the original QNN formulation.
On the other hand, by analyzing the time derivative of the quantum NTK at initialization, \citet{liu2022analytic} conjectures that in the limit of over-parameterization, the quantum NTK is a constant over time and therefore the dynamics reduces to a kernel regression.

Despite recent efforts, a rigorous answer remains evasive whether the quantum NTK is a constant during training for over-parameterized QNNs. We show that the answer to this question is indeed, surprisingly negative: as a result of the unitarity of quantum circuits, there is a finite change in the conjectured quantum NTK as the training error decreases, even in the the limit of over-parameterization.

\paragraph{Contributions}
In this work, we focus on QNNs equipped with the mean square loss, trained using gradient flow, following~\citet{arora2019exact}.
In Section~\ref{sec:qnn-dynamics}, we show that, despite the formal resemblance to kernel
regression dynamics, the over-parameterized QNN does
not follow the dynamics of \emph{any} kernel regression due to the unitarity: for the widely-considered setting of classifications with Pauli
measurements, we show that the objective function at time $t$ decays at most as
a polynomial function of $1/t$ (Theorem~\ref{thm:sublinear-convergence}). This
contradicts the dynamics of any kernel regression with a positive definite kernel, which exhibits
convergence with $L(t)\leq L(0)\exp(-ct)$ for some  positive constant $c$. We also identify the true asymptotic dynamics of QNN training as regression with a time-varying Gram matrix $\Kasym$ (Lemma~\ref{lm:resid_dyn_decomp}), and show rigorously that the real dynamics concentrates to the asymptotic one in the limit $p \rightarrow \infty$ (Theorem~\ref{thm:qnn-mse-linear}). This reduces the problem of investigating QNN convergence to studying the convergence of the asymptotic dynamics governed by $\Kasym$. 

We also consider a model of QNNs where the final measurement is post-processed by a linear scaling.
In this setting, we provide a complete analysis of the convergence of the asymptotic dynamics in the case of $1$ training sample (Corollary~\ref{cor:onesample}), and provide further theoretical evidence of convergence in the neighborhood of most global minima when the number of samples $m > 1$ (Theorem~\ref{thm:smallest_eig}). These theoretical evidences are supplemented with an empirical study that demonstrates in generality, the convergence of the asymptotic dynamics when $m \ge 1$. Coupled with our proof of convergence, these form the strongest concrete evidences of the convergence of training for over-parameterized QNNs.

\paragraph{Connections to previous works}
Our result extends the existing literature on QNN landscapes (e.g. \citet{anschuetz2022critical,russell2016quantum}) and looks into the training dynamics, which allows us to characterize the rate of convergence and to show how the range of the measurements affects the convergence to global minima.
The dynamics for over-parameterized QNNs proposed by us can be reconciled with the existing calculations of quantum NTK as follows: in the regime of over-parameterization, the QNN dynamics coincides with the quantum NTK dynamics conjectured in \citet{liu2022analytic} at random initialization; yet it deviates from quantum NTK dynamics during training, and the deviation does not vanish in the limit of $p\rightarrow\infty$.


\section{Preliminaries}
\label{sec:prelim}
\paragraph{Empirical risk minimization (ERM)}
A supervised learning problem is specified by a joint distribution $\mathcal{D}$
over the feature space $\mathcal{X}$ and the label space $\mathcal{Y}$, and a
family $\mathcal{F}$ of mappings from $\mathcal{X}$ to $\mathcal{Y}$ (i.e. the
hypothesis set).
The goal is to find an $f\in\mathcal{F}$ that well predicts the label $y$ given the feature $\mlvec{x}$ in
expectation, for pairs of $(\mlvec{x}, y)\in\mathcal{X}\times \mathcal{Y}$ drawn $i.i.d.$
from the distribution $\mathcal{D}$.

Given a training set $\mathcal{S}=\{\mlvec{x}_{j}, y_{j}\}_{j=1}^{m}$ composed of $m$ pairs of
features and labels, we search for the optimal $f\in\mathcal{F}$ by the
\emph{empirical risk minimization} (ERM): let $\ell$ be a loss function
$\ell: \mathcal{Y} \times \mathcal{Y} \rightarrow \real$, ERM finds an $f\in \mathcal{F}$ that minimizes the average loss:
  $\min_{f\in\mathcal{F}}\frac{1}{m}\sum_{i=1}^{m}\ell(\hat{y}_{i}, y_{i}), \text{
  where }\hat{y}_{i} = f(\mlvec{x}_{i})$.
We focus on the common choice of the \emph{square loss} $\ell(\hat{y}, y) = \frac{1}{2}(\hat{y} - y)^{2}$.

\paragraph{Classical neural networks} A popular choice of the hypothesis
set $\mathcal{F}$ in modern-day machine learning is the \emph{classical neural networks}. A
vanilla version of the $L$-layer feed-forward neural network takes the form
$f(x; \mlvec{W}_{1}, \cdots, \mlvec{W}_{L})
  = \mlvec{W}_{L}\sigma(\cdots \mlvec{W}_{2}\sigma(\mlvec{W}_{1}\sigma(x))\cdots )$,
where $\sigma(\cdot)$ is a non-linear activation function, and for all $l\in[L]$, $\mlvec{W}_{l} \in \real^{d_{l}\times d_{l-1}}$ is the
weights in the $l$-th layer, with $d_{L} = 1$ and $d_{0}$ the same as the dimension
of the feature space $\mathcal{X}$. It has been shown that, in the limit
$\min_{l=1}^{L-1}d_{l} \rightarrow \infty$, the training of neural networks with
square loss is close to kernel learning, and therefore enjoys a linear
convergence rate
\citep{jacot2018neural, arora2019exact,
  allenzhu2019convergence, oymak2020toward}.
\paragraph{Quantum neural networks} Quantum neural networks is a family of
parameterized hypothesis set analogous to its classical counterpart. At a high level, it has the layered-structure like a
classical neural network. At each layer, a linear transformation acts on the
output from the last layer. A quantum neural network
is different from its classical counterpart in the following three aspects.

\paragraph{(1) Quantum states as inputs} A $d$-dimensional quantum
state is represented by a \emph{density matrix} $\mlvec{\rho}$, which is a
positive semidefinite $d\times d$ Hermitian with trace $1$. A state is said to
be pure if $\mlvec\rho$ is rank-$1$. Pure states can therefore be equivalently
represented by a state vector $\mlvec{v}$ such that $\mlvec\rho = \mlvec{v}\mlvec{v}^{\dagger}$.
The inputs to QNNs are quantum states. They can either be drawn as samples from a quantum-physical problem or be the encodings of classical feature vectors.

\paragraph{(2) Parameterization}
In classical neural networks, each layer is
composed of a linear transformation and a non-linear activation, and the matrix
associated with the linear transformation can be directly optimized at each entry. In
QNNs, the entries of each linear transformation can not be directly manipulated. Instead
we update parameters in a variational ansatz to update the
linear transformations. More concretely, a general $p$-parameter ansatz $\mlvec{U}(\mlvec\theta)$ in a
$d$-dimensional Hilbert space can be specified by a set of $d\times d$ unitaries
$\{\mlvec{U}_{0}, \mlvec{U}_{1}, \cdots, \mlvec{U}_{p}\}$ and a set of
non-zero $d\times d$ Hermitians $\{\generatorH{1}, \generatorH{2}, \cdots, \generatorH{p}\}$ as
\begin{align}
  & \mlvec{U}_p\exp(-i\theta_p\generatorH{p})\mlvec{U}_{p-1}\exp(-i\theta_{p-1}\generatorH{p-1})
  \cdots\exp(-i\theta_{2}\generatorH{2}) \mlvec{U}_1\exp(-i\theta_1\generatorH{1})\mlvec{U}_0. \label{eq:general-ansatz}
\end{align}
Without loss of generality, we  assume that $\tr(\generatorH{l}) = 0$. This
is because adding a Hermitian proportional to $\mlvec{I}$ on the generator
$\generatorH{l}$ does not change the density matrix of the output states.
Notice that most $p$-parameter ansatze
$\mlvec{U}:\real^{p}\rightarrow \complex^{d\times d}$ can be expressed as
Equation~\ref{eq:general-ansatz}. One exception may be the anastz design with
intermediate measurements (e.g. \citet{cong2019quantum}).
In Section~\ref{sec:theory-fast}, we will also consider the periodic
anastz:
\begin{definition}[Periodic ansatz]
  \label{def:partial-ansatz}
  A $d$-dimensional $p$-parameter periodic anasatz  $\mlvec{U}(\mlvec\theta)$ is defined as
  \begin{align}
    \mlvec{U}_p\exp(-i\theta_p\mlvec{H})\cdot \cdots \cdot \mlvec{U}_1\exp(-i\theta_1\mlvec{H})\mlvec{U}_0, \label{eq:partial-ansatz}
  \end{align}
  where $\mlvec{U}_l$ are sampled $i.i.d.$ with respect to the Haar measure over
  the special unitary group $SU(d)$, and $\mlvec{H}$ is a non-zero trace-$0$ Hermitian.
\end{definition}
Up to a unitary transformation, the periodic ansatz is equivalent to an ansatz
in Line~(\ref{eq:general-ansatz}) where $\{\generatorH{l}\}_{l=1}^{p}$
sampled as $\mlvec{V}_{l}\mlvec{H}\mlvec{V}_{l}^{\dagger}$ with $\mlvec{V}_{l}$
being haar random $d\times d$ unitary matrices.
Similar ansatze have been considered in
\citet{mcclean2018barren,anschuetz2022critical, you2021exponentially, ourvqe2022}.

\paragraph{(3) Readout with measurements} Contrary to classical neural networks, the readout from a QNN
requires performing quantum \emph{measurements}. A measurement is specified by a
Hermitian $\mlvec{M}$.
The outcome of measuring a quantum state $\mlvec\rho$ with a measurement
$\mlvec{M}$ is $\tr(\mlvec\rho \mlvec{M})$, which is a linear function of $\mlvec\rho$. A common choice is the Pauli measurement: Pauli matrices are $2\times 2$ Hermitians that are also unitary. The Pauli measurements are tensor products of Pauli matrices, featuring eigenvalues of $\pm 1$.

A common choice is the Pauli measurement: Pauli matrices are $2\times 2$ Hermitians that are also unitary:
    \begin{align*}
        \sigma_X=\begin{bmatrix}0 & 1\\ 1 & 0\end{bmatrix},
        \sigma_Y=\begin{bmatrix}0 & -\imagi\\ \imagi & 0\end{bmatrix},
        \sigma_Z=\begin{bmatrix}1 & 0\\ 0 & -1\end{bmatrix}.
    \end{align*}
The Pauli measurements are tensor products of Pauli matrices, featuring eigenvalues of $\pm 1$.
\paragraph{ERM of quantum neural network.}
We focus on quantum neural networks equipped with the mean-square loss. Solving the
ERM for a dataset
$\S:=\{(\mlvec\rho_{j}, y_{j})\}_{j=1}^{m}\subseteq (\complex^{d\times d}\times \real)^{m}$
involves optimizing the objective function
$\min_{\mlvec\theta}L(\mlvec\theta):=\frac{1}{2m}\sum_{j=1}^{m}\big(\hat{y}_{j}(\mlvec\theta) - y_{j}\big)^{2}$,
where
$\hat{y}_{j}(\mlvec\theta) =\tr(\mlvec\rho_{j}\mlvec{U}^{\dagger}(\mlvec\theta)\qnnmeasure\mlvec{U}(\mlvec\theta))$
for all $j\in[m]$ with $\qnnmeasure$ being the quantum measurement and $\mlvec{U}(\mlvec\theta)$ being the variational ansatz. Typically, a QNN is trained by optimizing the ERM
objective function by gradient descent: at the $t$-th iteration, the parameters
are updated as
$\mlvec{\theta}(t+1) \leftarrow \mlvec{\theta}(t) - \eta \nabla L(\mlvec{\theta}(t))$,
where $\eta$ is the learning rate; for sufficiently small $\eta$, the
dynamics of gradient descent reduces to that of the gradient flow:
$d\mlvec{\theta}(t)/dt=-\eta \nabla L(\mlvec{\theta}(t))$. Here we focus on the
gradient flow setting following \citet{arora2019exact}.

\paragraph{Rate of convergence} In the optimization literature, the rate of
convergence describes how fast an iterative algorithm approaches an (approximate) solution. For
a general function $L$ with variables $\mlvec\theta$, let $\mlvec\theta(t)$ be
the solution maintained at the time step $t$ and  $\mlvec\theta^{\star}$ be the
optimal solution. The algorithm is said to be converging
\emph{exponentially fast} or at a \emph{linear rate} if
  $L(\mlvec\theta(t)) - L(\mlvec\theta^{\star}) \leq
  \alpha\exp(-c t)$
for some constants $c$ and $\alpha$.
In contrast, algorithms with the sub-optimal gap
$L(\mlvec\theta(t)) - L(\mlvec\theta^{\star})$ decreasing slower than
exponential are said to be converging with a \emph{sublinear} rate
(e.g.  $L(\mlvec\theta(t)) - L(\mlvec\theta^{\star})$ decaying with $t$ as a
polynomial of $1/t$).
We will mainly consider the setting where
$L(\mlvec\theta^{\star}) = 0$ (i.e. the \emph{realizable} case) with continuous time $t$.

\paragraph{Other notations}
We use $\opnorm{\cdot}$,
$\fronorm{\cdot}$ and $\trnorm{\cdot}$ to denote the operator norm (i.e. the
largest eigenvalue in terms of the absolute values), Frobenius norm and the
trace norm of matrices; we use $\norm{\cdot}_{p}$ to denote the $p$-norm of
vectors, with the subscript omitted for $p=2$. We use $\tr(\cdot)$ to denote the trace operation.


\section{Deviations of QNN Dynamics from NTK}
\label{sec:qnn-dynamics}
Consider a regression model on an $m$-sample training set:
for all $j\in[m]$, let $y_{j}$ and $\hat{y}_j$ be the label and the model prediction of the $j$-th sample. The \emph{residual} vector $\mlvec{r}$ is a
$m$-dimensional vector with $r_{j}:= y_{j}-\hat{y}_{j}$.
The dynamics of the kernel regression is signatured by the first-order linear
dynamics of the residual vectors: let $\mlvec{w}$ be the learned model
parameter, and let $\phi(\cdot)$ be the fixed non-linear map. Recall
that the kernel regression minimizes
$L(\mlvec{w})=\frac{1}{2m}\sum_{j=1}^{m}(\mlvec{w}^{T}\phi(\mlvec{x}_{j})-y_{j})^{2}$
for a training set $\mathcal{S}=\{(\mlvec{x}_{j}, y_{j})\}_{j=1}^{m}$, and the
gradient with respect to $\mlvec{w}$ is
$\frac{1}{m}\sum_{j=1}^{m}(\mlvec{w}^{T}\phi(\mlvec{x}_{j})-y_{j})\phi(\mlvec{x}_{j}) = -\frac{1}{m}\sum_{j=1}^{m}r_{j}\phi(\mlvec{x}_{j})$.
Under the gradient flow with learning rate $\eta$, the weight $\mlvec{w}$ updates as
$\frac{d\mlvec{w}}{dt} = \frac{\eta}{m}\sum_{j=1}^{m}r_{j}\phi(\mlvec{x}_{j})$,
and the $i$-th entry of the residual vector updates as
$dr_{i}/dt = -\phi(\mlvec{x}_{i})^{T}\frac{d\mlvec{w}}{dt} = -\frac{\eta}{m}\sum_{j=1}^{m}\phi(\mlvec{x}_{i})^{T}\phi(\mlvec{x}_{j})r_{j}$,
or more succinctly $d\mlvec{r}/dt = -\frac{\eta}{m}\mlvec{K}\mlvec{r}$ with
$\mlvec{K}$ being the kernel/Gram matrix defined as $K_{ij}=\phi(\mlvec{x}_{i})^{T}\phi(\mlvec{x}_{j})$ (see also
\citet{arora2019exact}). Notice that the kernel matrix $\mlvec{K}$ is a constant
of time and is independent of the weight $\mlvec{w}$ or the labels.

\paragraph{Dynamics of residual vectors}
We start by characterizing the dynamics of the residual vectors
for the general form of $p$-parameter QNNs and highlight the limitation of
viewing the over-parameterized QNNs as kernel regressions.
Similar to the kernel regression,
$\frac{dr_{j}}{dt} = -\frac{d \hat{y}_{j}}{dt}= -\tr(\mlvec{\rho}_{j}\frac{d}{dt}\mlvec{U}^{\dagger}(\mlvec\theta(t))\qnnmeasure\mlvec{U}(\mlvec\theta(t)))$
in QNNs. We derive the following dynamics of $\mlvec{r}$ by tracking the
parameterized measurement
$\mlvec{M}(\mlvec{\theta})=\mlvec{U}^{\dagger}(\mlvec\theta)\mlvec{M}_{0}\mlvec{U}(\mlvec\theta)$
as a function of time $t$.
\begin{restatable}[Dynamics of the residual vector]{lemma}{qnndynamics}
\label{lm:qnndynamics}
Consider a QNN instance with an ansatz $\mlvec{U}(\mlvec\theta)$ defined as in
Line~(\ref{eq:general-ansatz}), a training
dataset $\mathcal{S} = \{(\mlvec\rho_{j},y_{j})\}_{j=1}^{m}$, and a measurement
$\qnnmeasure$. Under the gradient flow for the objective function
  $L(\mlvec\theta)=\frac{1}{2m}\sum_{j=1}^{m}\big(\tr(\mlvec{\rho}_{j}\mlvec{U}^{\dagger}(\mlvec\theta)\qnnmeasure\mlvec{U}(\mlvec\theta))-y_{j}\big)^{2}$
  with learning rate $\eta$,
the residual vector $\mlvec{r}$ satisfies the differential equation
\begin{align}
  \label{eq:qnn-residue-kernel}
  \diffT{\mlvec{r}(\varytheta)} = -\frac{\eta}{m}\mlvec{K}(\paramM) \mlvec{r}(\varytheta),
\end{align}
where $\mlvec{K}$ is a positive semi-definite matrix-valued function of the
parameterized measurement. The $(i,j)$-th element of $\mlvec{K}$ is defined as
\begin{align}
  \sum_{l=1}^p\big(
  \tr\big(\compI[\paramM,\mlvec{\rho}_{i}]\tildeH{l}\big)
  \tr\big(\compI[\paramM,\mlvec{\rho}_{j}]\tildeH{l}\big)
  \big).
\end{align}
Here $\tildeH{l}:=\mlvec{U}_{0}^{\dagger}\mlvec{U}_{1:l-1}^{\dagger}(\mlvec\theta)\generatorH{l}\mlvec{U}_{1:l-1}(\mlvec\theta)\mlvec{U}_{0}$,
is a function of $\mlvec\theta$
with $ \mlvec{U}_{1:r}(\mlvec\theta)$ being the shorthand for
$\mlvec{U}_{r}\exp(-i\theta_{r}\generatorH{r})\cdots \mlvec{U}_{1}\exp(-i\theta_{1}\generatorH{1})$.
\end{restatable}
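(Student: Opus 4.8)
The plan is to reduce the claim to a direct differentiation of the model predictions, after which the positive-semidefinite Gram-matrix structure is immediate. Write $\hat y_j(\mlvec\theta) = \tr(\mlvec{\rho}_j\,\mlvec{M}(\mlvec\theta))$ with $\mlvec{M}(\mlvec\theta) = \mlvec{U}^\dagger(\mlvec\theta)\,\qnnmeasure\,\mlvec{U}(\mlvec\theta)$, so that $r_j(\mlvec\theta(t)) = y_j - \hat y_j(\mlvec\theta(t))$. By the chain rule along the flow,
\begin{align}
  \diffT{r_j} \;=\; -\diffT{\hat y_j} \;=\; -\sum_{l=1}^p \partialtheta{\hat y_j}{l}\,\diffT{\theta_l},
\end{align}
and since $L(\mlvec\theta) = \tfrac{1}{2m}\sum_i(\hat y_i - y_i)^2$, the gradient flow $d\mlvec\theta/dt = -\eta\nabla L$ yields $d\theta_l/dt = \tfrac{\eta}{m}\sum_i r_i\,\partialtheta{\hat y_i}{l}$. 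Substituting,
\begin{align}
  \diffT{r_j} \;=\; -\frac{\eta}{m}\sum_{i=1}^m\Big(\sum_{l=1}^p \partialtheta{\hat y_i}{l}\,\partialtheta{\hat y_j}{l}\Big)r_i,
\end{align}
which is Equation~(\ref{eq:qnn-residue-kernel}) with $K_{ij} = \sum_{l=1}^p (\partialtheta{\hat y_i}{l})(\partialtheta{\hat y_j}{l})$. As each $\hat y_j$ is a real-valued function of the real parameters $\mlvec\theta$, this $\mlvec{K}$ is the Gram matrix $\mlvec{J}\mlvec{J}^{\t}$ of the real Jacobian $\mlvec{J}$ with $J_{jl} = \partialtheta{\hat y_j}{l}$, hence positive semi-definite.

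It remains to evaluate $\partialtheta{\hat y_j}{l} = \tr(\mlvec{\rho}_j\,\partialtheta{\mlvec{M}(\mlvec\theta)}{l})$. Since $\theta_l$ enters $\mlvec{U}(\mlvec\theta)$ only through the single factor $\exp(-\compI\theta_l\generatorH{l})$, I would write $\mlvec{U}(\mlvec\theta) = \mlvec{L}_l\exp(-\compI\theta_l\generatorH{l})\mlvec{R}_l$ with $\mlvec{R}_l = \mlvec{U}_{1:l-1}(\mlvec\theta)\mlvec{U}_0$ and $\mlvec{L}_l$ the remaining left factor, so that $\partialtheta{\mlvec{U}}{l} = -\compI(\mlvec{L}_l\generatorH{l}\mlvec{L}_l^\dagger)\mlvec{U}$ and (taking adjoints) $\partialtheta{\mlvec{M}(\mlvec\theta)}{l} = \compI\,[\,\mlvec{U}^\dagger\mlvec{L}_l\generatorH{l}\mlvec{L}_l^\dagger\mlvec{U},\,\mlvec{M}(\mlvec\theta)\,]$. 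Using $\mlvec{L}_l^\dagger\mlvec{U} = \exp(-\compI\theta_l\generatorH{l})\mlvec{R}_l$, which commutes past $\generatorH{l}$, the conjugated generator telescopes to $\mlvec{U}^\dagger\mlvec{L}_l\generatorH{l}\mlvec{L}_l^\dagger\mlvec{U} = \mlvec{R}_l^\dagger\generatorH{l}\mlvec{R}_l = \tildeH{l}$. Hence $\partialtheta{\hat y_j}{l} = \compI\tr(\mlvec{\rho}_j[\tildeH{l},\mlvec{M}(\mlvec\theta)]) = \tr(\compI[\mlvec{M}(\mlvec\theta),\mlvec{\rho}_j]\tildeH{l})$ by cyclicity of the trace, and plugging this into $K_{ij}$ gives the stated formula.

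The only step that needs genuine care is the bookkeeping in the last paragraph: correctly identifying $\mlvec{L}_l$ and $\mlvec{R}_l$ for the product ansatz in Line~(\ref{eq:general-ansatz}), and verifying that the unitary factors in $\mlvec{U}^\dagger\mlvec{L}_l\generatorH{l}\mlvec{L}_l^\dagger\mlvec{U}$ really telescope down to $\tildeH{l}$, together with the elementary trace identity $\tr(\mlvec{\rho}_j[\tildeH{l},\mlvec{M}(\mlvec\theta)]) = \tr([\mlvec{M}(\mlvec\theta),\mlvec{\rho}_j]\tildeH{l})$ that matches the commutator form to the one in the statement. Everything else mirrors the classical NTK chain-rule/Gram-matrix computation (cf.~\citet{arora2019exact}), here carried out for the parameterized-measurement picture $\mlvec{M}(\mlvec\theta) = \mlvec{U}^\dagger(\mlvec\theta)\qnnmeasure\mlvec{U}(\mlvec\theta)$ rather than for a network output.
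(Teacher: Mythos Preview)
Your proposal is correct and follows essentially the same approach as the paper: both compute $\partial\mlvec{M}/\partial\theta_l = \compI[\tildeH{l},\mlvec{M}(\mlvec\theta)]$ by differentiating the ansatz and telescoping the conjugation, then combine this with the gradient-flow chain rule to obtain the Gram-matrix form of $\mlvec{K}$. The only organizational difference is that the paper routes the calculation through $d\mlvec{M}/dt$ before taking $\tr(\mlvec\rho_i\,\cdot\,)$, whereas you work directly with $\partial\hat y_j/\partial\theta_l$ and make the PSD structure explicit as $\mlvec{K}=\mlvec{J}\mlvec{J}^{\t}$; the substance is identical.
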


While Equation~(\ref{eq:qnn-residue-kernel}) takes a similar form to that of the kernel
regression, the matrix $\mlvec{K}$ is \emph{dependent} on the parameterized measurement
$\mlvec{M}(\mlvec\theta)$. This is a consequence of the unitarity: consider an alternative
parameterization, where the objective function
$\mlvec{L}(\mlvec{M})=\frac{1}{2m}\sum_{j=1}^{m}\big(\tr(\mlvec\rho_{j}\mlvec{M})-y_{j}\big)^{2}$
is optimized over all Hermitian matrices $\mlvec{M}$. It can be easily verified that the
corresponding dynamics is exactly the kernel regression with $K_{ij}=\tr(\mlvec{\rho}_{i}\mlvec{\rho}_{j})$.

Due to the unitarity of the evolution of quantum states, the spectrum of eigenvalues
of the parameterized measurement $\mlvec{M}(\mlvec\theta)$ is required to remain the
same throughout training.
In the proof of Lemma~\ref{lm:qnndynamics} (deferred to
Section~\ref{subsec:qnndynamics_proof} in the appendix),
we see that the derivative of $\mlvec{M}(\mlvec\theta)$ takes the form of a linear
combination of commutators $i[\mlvec{A}, \mlvec{M}(\mlvec\theta)]$ for some Hermitian $\mlvec{A}$.
As a result, the traces of the $k$-th matrix powers
$\tr(\mlvec{M}^{k}(\mlvec\theta))$ are constants of time for any integer $k$, since
$d\tr(\mlvec{M}^{k}(\mlvec\theta))/dt = k\tr(\mlvec{M}^{k-1}(\mlvec\theta)d\mlvec{M}(\mlvec{\theta})/dt) = k\tr(\mlvec{M}^{k-1}(\mlvec\theta)i[\mlvec{A}, \mlvec{M}(\mlvec{\theta})])=0$
for any Hermitian $\mlvec{A}$.
The spectrum of eigenvalues remains unchanged because the
coefficients of the characteristic polynomials of $\mlvec{M}(\mlvec\theta)$ is
completely determined by the traces of matrix powers. On the contrary, the eigenvalues are in
general not preserved for $\mlvec{M}$ evolving under the kernel regression.

Another consequence of the unitarity constraint is that a QNN can not
make predictions outside the range of the eigenvalues of $\qnnmeasure$, while
for the kernel regression with a strictly positive definite kernel, the model can
(over-)fit training sets with arbitrary label assignments. Here we further show that
the unitarity is pronounced in a typical QNN instance where the predictions are
within the range of the measurement.


\paragraph{Sublinear convergence in QNNs}
One of the most common choices for designing QNNs is to use a (tensor product of)
Pauli matrices as the measurement (see e.g.
\citet{farhi2018classification,Dunjko_2018}). Such a choice
features a measurement $\qnnmeasure$ with eigenvalues $\{\pm 1\}$ and trace zero. Here we
show that in the setting of supervised learning on pure states with Pauli measurements,
the (neural tangent) kernel regression is insufficient to capture the convergence of
QNN training. For the kernel regression with a positive definite kernel $\mlvec{K}$,
the objective function $L$ can be expressed as
$\frac{1}{2m}\sum_{j=1}^{m}(\hat{y}_{j}-y_{j})^{2} = \frac{1}{2m}\mlvec{r}^{T}\mlvec{r}$; under the kernel dynamics of
$\frac{d\mlvec{r}}{dt} = - \frac{\eta}{m} \mlvec{K}\mlvec{r}$, it is easy to verify that
$\frac{d \ln L}{dt} = -\frac{2\eta}{m}\frac{\mlvec{r}^{T}\mlvec{K}\mlvec{r}}{\mlvec{r}^{T}\mlvec{r}}\leq -\frac{2\eta}{m}\lambda_{\min}(\mlvec{K})$
with $\lambda_{\min}(\mlvec{K})$ being the smallest eigenvalue of $\mlvec{K}$.
This
indicates that $L$ decays at a linear rate, i.e. $L(T) \leq L(0) \exp(-\frac{2\eta}{m}\lambda_{\min}(\mlvec{K})T)$.
In contrast,
we show that the rate of convergence of the QNN dynamics \emph{must} be sublinear,
slower than the linear convergence rate predicted by the kernel regression model
with a positive definite kernel.
\begin{restatable}[No faster than sublinear convergence]{theorem}{qnnsublinear}
 \label{thm:sublinear-convergence}
 Consider a QNN instance with a training set
 $\mathcal{S}=\{(\mlvec{\rho}_{j}, y_{j})\}$ such that $\mlvec{\rho}_{j}$ are
 pure states and $y_{j}\in\{\pm 1\}$, and a
 measurement $\qnnmeasure$ with eigenvalues in $\{\pm 1\}$. Under the gradient
 flow for the objective function $L(\mlvec\theta)=\frac{1}{2m}\sum_{j=1}^{m}\tr(\mlvec\rho_{j}\mlvec{M}(\mlvec\theta)-y_{j})^{2}$, for any ansatz $\mlvec{U}(\mlvec\theta)$ defined
in Line~(\ref{eq:general-ansatz}), $L$ converges to $zero$ at most at a
sublinear convergence rate.
More concretely, for $\mlvec{U}(\mlvec\theta)$ generated by $\{\generatorH{l}\}_{l=1}^{p}$,
let $\eta$ be the learning rate and $m$ be the sample size, the objective
function at time $t$:
\begin{align}
L(\mlvec{\theta}(t)) \geq 1 / (c_{0} + c_{1}t)^{2}.
\end{align}
Here the constant $c_{0} = 1/\sqrt{L(\mlvec\theta(0))}$ depends on the objective function at
initialization, and $c_{1}=12\eta\sum_{l=1}^{p}\opnorm{\generatorH{l}}^{2}$.
\end{restatable}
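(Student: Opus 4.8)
The plan is to bound the time derivative of $L(\mlvec\theta(t))^{-1/2}$ by the constant $c_1$; integrating this differential inequality from $0$ to $t$ then gives $L(\mlvec\theta(t))^{-1/2}\le c_0+c_1t$, which is exactly the claimed bound after inverting and squaring. Writing $L=\frac{1}{2m}\norm{\mlvec r}^2$ and inserting the residual dynamics $\diffT{\mlvec r}=-\frac{\eta}{m}\mlvec K\mlvec r$ from Lemma~\ref{lm:qnndynamics} (with $\mlvec K\succeq 0$), we get $\diffT{L}=-\frac{\eta}{m^2}\mlvec r^\t\mlvec K\mlvec r\le 0$, so $L^{-1/2}$ is non-decreasing, and $\diffT{(L^{-1/2})}=\frac{\eta}{2m^2}L^{-3/2}\mlvec r^\t\mlvec K\mlvec r=\frac{\eta\,(2m)^{3/2}}{2m^2}\cdot\frac{\mlvec r^\t\mlvec K\mlvec r}{\norm{\mlvec r}^3}$. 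Everything therefore reduces to proving $\mlvec r^\t\mlvec K\mlvec r\le 8\sqrt m\,\norm{\mlvec r}^3\sum_{l=1}^p\opnorm{\generatorH{l}}^2$, i.e. the quadratic form must lose an \emph{extra} factor $\norm{\mlvec r}$ relative to the naive bound $\mlvec r^\t\mlvec K\mlvec r\le\lambda_{\max}(\mlvec K)\norm{\mlvec r}^2$; producing that factor is the crux, and it is where the $\{\pm1\}$ spectrum and purity enter.

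To obtain the extra factor, I first rewrite the quadratic form using the Gram structure of $\mlvec K$ from Lemma~\ref{lm:qnndynamics}: setting $\mlvec R:=\sum_{j}r_j\mlvec\rho_j$ and using linearity of the commutator, $\mlvec r^\t\mlvec K\mlvec r=\sum_{l=1}^{p}\big(\tr(\compI[\paramM,\mlvec R]\tildeH{l})\big)^2$. Since conjugation by unitaries preserves the operator norm, $\opnorm{\tildeH{l}}=\opnorm{\generatorH{l}}$, and H\"older's inequality for Schatten norms ($\abs{\tr(\mlvec A\mlvec B)}\le\trnorm{\mlvec A}\,\opnorm{\mlvec B}$) gives $\mlvec r^\t\mlvec K\mlvec r\le\trnorm{[\paramM,\mlvec R]}^2\sum_{l}\opnorm{\generatorH{l}}^2$. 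The key single-sample estimate is $\trnorm{[\paramM,\mlvec\rho_j]}^2\le 8\abs{r_j}$. To see it, note that $\paramM$, a unitary conjugate of $\qnnmeasure$, is both Hermitian and unitary with eigenvalues in $\{\pm1\}$, so $\paramM^2=\II$; for a pure state $\mlvec\rho_j=\mlvec v_j\mlvec v_j^\dagger$ write $\paramM\mlvec v_j=\hat y_j\mlvec v_j+\mlvec w_j$ with $\mlvec w_j\perp\mlvec v_j$, where $\hat y_j=\mlvec v_j^\dagger\paramM\mlvec v_j$ is real and $\norm{\mlvec w_j}^2=1-\hat y_j^2$. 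Then the $\hat y_j\mlvec v_j\mlvec v_j^\dagger$ terms cancel and $[\paramM,\mlvec\rho_j]=\mlvec w_j\mlvec v_j^\dagger-\mlvec v_j\mlvec w_j^\dagger$, which is normal with exactly two nonzero eigenvalues $\pm\compI\norm{\mlvec w_j}$, hence $\trnorm{[\paramM,\mlvec\rho_j]}=2\sqrt{1-\hat y_j^2}$. Finally, because $y_j\in\{\pm1\}$ and $r_j=y_j-\hat y_j$, a two-line case check (when $y_j=1$, $r_j=1-\hat y_j\ge0$ and $1+\hat y_j\le2$; symmetrically for $y_j=-1$) gives $1-\hat y_j^2=(1-\hat y_j)(1+\hat y_j)\le2\abs{r_j}$, so $\trnorm{[\paramM,\mlvec\rho_j]}^2=4(1-\hat y_j^2)\le8\abs{r_j}$.

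Combining these by the triangle inequality and H\"older, $\trnorm{[\paramM,\mlvec R]}\le\sum_j\abs{r_j}\trnorm{[\paramM,\mlvec\rho_j]}\le 2\sqrt2\sum_j\abs{r_j}^{3/2}\le 2\sqrt2\,m^{1/4}\norm{\mlvec r}^{3/2}$, whence $\mlvec r^\t\mlvec K\mlvec r\le8\sqrt m\,\norm{\mlvec r}^3\sum_l\opnorm{\generatorH{l}}^2$. Substituting into the expression for $\diffT{(L^{-1/2})}$ above, the powers of $m$ and of $\norm{\mlvec r}$ cancel and we obtain $\diffT{(L^{-1/2})}\le 2^{7/2}\eta\sum_l\opnorm{\generatorH{l}}^2\le 12\eta\sum_l\opnorm{\generatorH{l}}^2=c_1$ (since $2^{7/2}\approx11.3$). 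A short continuity argument shows $L(\mlvec\theta(t))>0$ for all finite $t$ (the inequality holds on any interval where $L>0$, and it prevents $L^{-1/2}$ from diverging in finite time), so we may integrate to conclude $L(\mlvec\theta(t))^{-1/2}\le L(\mlvec\theta(0))^{-1/2}+c_1 t=c_0+c_1 t$, i.e. $L(\mlvec\theta(t))\ge 1/(c_0+c_1t)^2$. The main obstacle is the estimate $\trnorm{[\paramM,\mlvec\rho_j]}^2\le 8\abs{r_j}$: without the $\{\pm1\}$ spectrum this commutator norm need not vanish as $r_j\to0$, and one would only recover a kernel-regression-style linear bound rather than the sublinear lower bound; getting the correct power of $\abs{r_j}$ (and hence a constant close to $12$) requires the exact rank-two geometry of $[\paramM,\mlvec\rho_j]$ for pure states and an eigenvalue-$\pm1$ measurement.
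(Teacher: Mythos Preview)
Your proof is correct and follows essentially the same approach as the paper: both establish that the commutator $[\mlvec{M}(\mlvec\theta),\mlvec\rho_j]$ has norm of order $\sqrt{|r_j|}$ (using the $\{\pm1\}$ spectrum and purity), derive the differential inequality $\diffT{(L^{-1/2})}\le 8\sqrt2\,\eta\sum_l\opnorm{\generatorH{l}}^2\le c_1$, and integrate. The only difference is cosmetic---the paper routes through $\mlvec r^\t\mlvec K\mlvec r\le\tr(\mlvec K)\,\norm{\mlvec r}^2$ and bounds each diagonal $K_{jj}$ via the eigenspace decomposition $\mlvec v_j=\mlvec u_j+\mlvec w_j$ (then applies Cauchy--Schwarz to $\sum_j|r_j|$), whereas you aggregate into $\mlvec R=\sum_jr_j\mlvec\rho_j$ and bound $\trnorm{[\mlvec M,\mlvec R]}$ via the triangle inequality and H\"older on $\sum_j|r_j|^{3/2}$---but the two computations land on the identical constant $8\sqrt2$.
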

The constant $c_{1}$ in the theorem depends on the number of parameters $p$
through $\sum_{l=1}^{p}\opnorm{\generatorH{l}}^{2}$ if the operator norm of
$\generatorH{l}$ is a constant of $p$. We can get rid of the dependency on $p$
by scaling the learning rate $\eta$ or changing the time scale, which does not
affect the sublinearity of convergence.

By expressing the objective function $L(\varytheta)$ as
$\frac{1}{2m}\mlvec{r}(\varytheta)^{T}\mlvec{r}(\varytheta)$,
Lemma~\ref{lm:qnndynamics} indicates that the decay of
$\frac{d L(\varytheta)}{dt}$ is lower-bounded by
$\frac{-2\eta}{m}\lambda_{\max}(\mlvec{K}(\varytheta)) L(\varytheta)$, where
$\lambda_{\max}(\cdot)$ is the largest eigenvalue of a Hermitian matrix.
The full proof of Theorem~\ref{thm:sublinear-convergence} is deferred to
Section~\ref{subsec:slow_proof}, and follows from the fact that
when the QNN prediction for an input state $\mlvec{\rho}_{j}$ is close to the ground truth
$y_{j}= 1$ or $-1$, the diagonal entry $K_{jj}(\varytheta)$ vanishes. As a
result the largest eigenvalue $\lambda_{\max}(\mlvec{K}(\varytheta))$ also
vanishes as the objective function $L(\varytheta)$ approaches $0$ (which is the
global minima). Notice the sublinearity of convergence is independent of the
system dimension $d$, the choices of $\{\generatorH{l}\}_{l=1}^{p}$ in
$\mlvec{U}(\mlvec\theta)$ or the number of parameters $p$. This means that
the dynamics of QNN training is completely different from kernel regression even
in the limit where $d$ and/or $p\rightarrow\infty$.

\paragraph{Experiments: sublinear QNN convergence} To support
Theorem~\ref{thm:sublinear-convergence}, we simulate the training of QNNs using
$\qnnmeasure$ with eigenvalues $\pm 1$. For dimension $d=32$ and $64$, we
randomly sample four $d$-dimensional pure states that are orthogonal, with two
of samples labeled $+1$ and the other two labeled $-1$.
The training curves (plotted under the log scale) in Figure~\ref{fig:scale1_varyp} flattens as $L$
approaches $0$, suggesting the rate of convergence $-d\ln L/dt$ vanishes around
global minima, which is a signature of the sublinear convergence. Note that the
sublinearity of convergence is independent of the number of parameters $p$.
For gradient flow or gradient descent with sufficiently small step-size, the
scaling of a constant learning rate $\eta$ leads to a scaling of time $t$ and
does not fundamentally change the (sub)linearity of the convergence. For the
purpose of visual comparison, we scale $\eta$ with $p$ by choosing the learning
rate as $10^{-3} / p$. For more details on the experiments, please refer to
Section~\ref{sec:app_exp}.

\begin{figure}[!htbp]
  \centering
  \includegraphics[width=.7\linewidth]{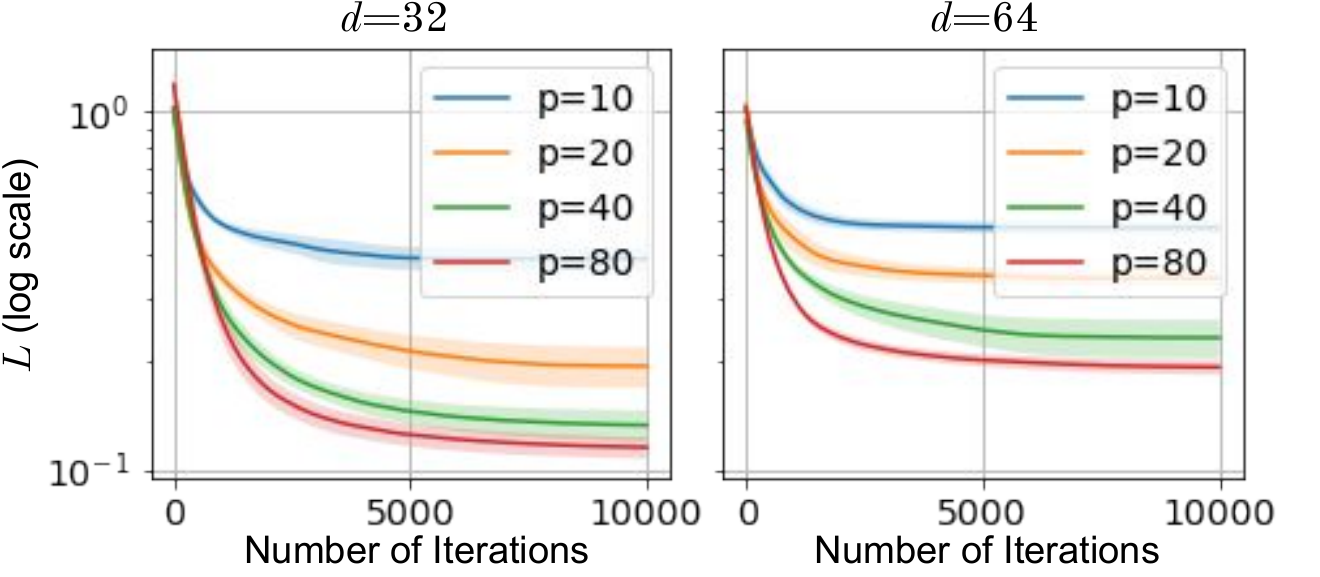}
  \caption{Sublinear convergence of QNN training. For QNNs with Pauli
    measurements for a classification task, the (log-scaled) training curves
    flatten as the number of iterations increases, indicating a sublinear
    convergence. The flattening of training curves remains for increasing
    numbers of parameters $p=10, 20, 40, 80$. The training curves are averaged
    over 10 random initialization, and the error bars are the halves of standard
    deviations.}
  \label{fig:scale1_varyp}
\end{figure}


\section{Asymptotic Dynamics of QNNs}
\label{sec:theory-fast}
As demonstrated in the previous section, the dynamics of the QNN training deviates
from the kernel regression for any choices of the number of parameters $p$ and the
dimension $d$ in the setting of Pauli measurements for classification. This
calls for a new characterization of the QNN dynamics in the regime of over-parameterization.
For a concrete definition of over-parameterization, we consider the family of the periodic
ansatze in Definition~\ref{def:partial-ansatz}, and refer to the limit of
$p\rightarrow\infty$ with a fixed generating Hamiltonian $\mlvec{H}$ as the
regime of over-parameterization.
In this section, we derive the asymptotic dynamics of QNN training when number
of parameters $p$ in the periodic ansatze goes to infinity. We start
by decomposing the dynamics of the residual $\varyr$ into a term corresponding
to the asymptotic dynamics, and a term of perturbation that vanishes as
$p\rightarrow\infty$.
As mentioned before, in the context of the gradient flow, the choice of
$\eta$ is merely a scaling of the time and therefore arbitrary.
For a QNN instance with $m$ training samples and
a $p$-parameter ansatz generated by a Hermitian $\mlvec{H}$ as defined in
Line~(\ref{eq:partial-ansatz}), we choose $\eta$ to be
$\frac{m}{p}\frac{d^{2}-1}{\tr(\mlvec{H}^{2})}$ to facilitate the presentation:
\begin{restatable}[Decomposition of the residual dynamics]{lemma}{decompresid}
  \label{lm:resid_dyn_decomp}
Let $\mathcal{S}$ be a training set with $m$ samples
$\{(\mlvec\rho_{j}, y_{j})\}_{j=1}^{m}$, and let $\mlvec{U}(\mlvec\theta)$ be
a $p$-parameter ansatz generated by a non-zero $\mlvec{H}$ as in Line~\ref{eq:partial-ansatz}.
Consider a QNN instance with a training set $\mathcal{S}$, ansatz
$\mlvec{U}(\mlvec\theta)$ and a measurement $\qnnmeasure$.  Under the
gradient flow with $\eta = \frac{m}{p}\frac{d^{2}-1}{\tr(\mlvec{H}^{2})}$, the
residual vector $\mlvec{r}(t)$ as a function of time $t$ through $\varytheta$
evolves as
\begin{align}
  \frac{d\mlvec{r}(t)}{dt} = - (\Kasym(t) + \Kpert(t)) \mlvec{r}(t)
\end{align}
where both $\Kasym$ and $\Kpert$ are functions of time through the parameterized measurement
$\varyM$, such that
\begin{align}
  (\Kasym(t))_{ij} &:= \tr\big(\imagi[\mlvec{M}(t), \mlvec{\rho}_{i}] \ \imagi[\mlvec{M}(t), \mlvec{\rho}_{j}]\big),\\
  (\Kpert(t))_{ij} &:= \tr\big(\imagi[\mlvec{M}(t), \mlvec{\rho}_{i}]
                      \otimes \imagi[\mlvec{M}(t), \mlvec{\rho}_{j}] \Delta(t)\big).
\end{align}
Here $\Delta(t)$ is a $d^{2}\times d^{2}$ Hermitian as a function of $t$ through $\mlvec\theta(t)$.
\end{restatable}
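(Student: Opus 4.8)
The plan is to start from the general residual dynamics of Lemma~\ref{lm:qnndynamics}, specialize it to the periodic ansatz of Definition~\ref{def:partial-ansatz}, and then split the resulting Gram matrix $\mlvec{K}(\paramM)$ into an ``average'' piece and a ``fluctuation'' piece. Concretely, for the periodic ansatz the generators appearing in Lemma~\ref{lm:qnndynamics} become $\tildeH{l} = \mlvec{W}_{l}^{\dagger} \mlvec{H} \mlvec{W}_{l}$ where $\mlvec{W}_{l} := \mlvec{U}_{1:l-1}(\varytheta)\mlvec{U}_{0}$, so that
\begin{equation*}
  (\mlvec{K}(\paramM))_{ij} = \sum_{l=1}^{p} \tr\big(\compI[\varyM,\mlvec{\rho}_{i}]\, \mlvec{W}_{l}^{\dagger}\mlvec{H}\mlvec{W}_{l}\big)\tr\big(\compI[\varyM,\mlvec{\rho}_{j}]\, \mlvec{W}_{l}^{\dagger}\mlvec{H}\mlvec{W}_{l}\big).
\end{equation*}
First I would rewrite each summand using the vectorization identity $\tr(\mlvec{A}\mlvec{W}^{\dagger}\mlvec{H}\mlvec{W}) = \mathrm{vec}(\mlvec{H})^{\dagger}(\overline{\mlvec{W}}\otimes\mlvec{W})\,\mathrm{vec}(\mlvec{A})$ (or, more symmetrically, as a trace over $\complex^{d^2}$ against the rank-one operator $\ketbra{\mathrm{vec}(\mlvec{H})}$ conjugated by $\mlvec{W}_l\otimes\overline{\mlvec{W}_l}$). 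This turns the product of two traces into a single trace of $\compI[\varyM,\mlvec{\rho}_{i}]\otimes\compI[\varyM,\mlvec{\rho}_{j}]$ against $p^{-1}\sum_{l}(\text{conjugation of }\ketbra{\mathrm{vec}(\mlvec{H})})$, up to the scalar $p$.

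The second step is the definition of $\Delta(t)$. With the chosen learning rate $\eta = \frac{m}{p}\frac{d^{2}-1}{\tr(\mlvec{H}^{2})}$, the prefactor $\frac{\eta}{m}$ exactly cancels the $p$ and normalizes $\ketbra{\mathrm{vec}(\mlvec{H})}$ so that the ``expected'' operator is the projector-like object whose value is dictated by the Haar average of $(\mlvec{V}\otimes\overline{\mlvec{V}})\ketbra{\mathrm{vec}(\mlvec{H})}(\mlvec{V}^{\dagger}\otimes\overline{\mlvec{V}}^{\dagger})$ over $\mlvec{V}\sim\mathrm{Haar}(SU(d))$; by Schur--Weyl / the standard formula for the second moment of the Haar measure, this average equals (a constant times) the projector onto the traceless sector of $\complex^{d}\otimes\complex^{d}$, which precisely reproduces $(\Kasym(t))_{ij} = \tr(\compI[\varyM,\mlvec{\rho}_{i}]\,\compI[\varyM,\mlvec{\rho}_{j}])$ once one contracts the maximally-mixed/traceless projector against a tensor product of two traceless Hermitians (the commutators are traceless). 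I would then \emph{define}
\begin{equation*}
  \Delta(t) := \frac{1}{p}\sum_{l=1}^{p}\frac{d^{2}-1}{\tr(\mlvec{H}^{2})}(\mlvec{W}_{l}\otimes\overline{\mlvec{W}_{l}})\,\ketbra{\mathrm{vec}(\mlvec{H})}\,(\mlvec{W}_{l}^{\dagger}\otimes\overline{\mlvec{W}_{l}}^{\dagger}) - \Pi_{\mathrm{asym}},
\end{equation*}
where $\Pi_{\mathrm{asym}}$ is the fixed operator on $\complex^{d^2}$ such that $\tr((\mlvec{A}\otimes\mlvec{B})\Pi_{\mathrm{asym}}) = \tr(\mlvec{A}\mlvec{B})$ for all traceless $\mlvec{A},\mlvec{B}$; this makes $\mlvec{K}_{\mathsf{pert}}(t)_{ij} = \tr(\compI[\varyM,\mlvec{\rho}_i]\otimes\compI[\varyM,\mlvec{\rho}_j]\,\Delta(t))$ by construction, and $\Delta(t)$ is Hermitian because it is a difference of Hermitians. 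Finally I would check that $\mlvec{K}_{\mathsf{asym}}(t) + \mlvec{K}_{\mathsf{pert}}(t)$ recombines to exactly $\frac{\eta}{m}\mlvec{K}(\paramM)$, so that Lemma~\ref{lm:qnndynamics} gives $d\mlvec{r}/dt = -(\Kasym(t)+\Kpert(t))\mlvec{r}(t)$ as claimed.

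The main obstacle I anticipate is \emph{bookkeeping the exact constants}: getting the Haar second-moment formula, the traceless-projector normalization, and the choice $\eta = \frac{m}{p}\frac{d^{2}-1}{\tr(\mlvec{H}^{2})}$ to line up so that $\Kasym$ has precisely the stated form with coefficient $1$ (no stray factors of $d$, $d^2-1$, or $\tr(\mlvec{H}^2)$). The conceptual content is light — this lemma is essentially a rewriting of Lemma~\ref{lm:qnndynamics} plus a definition — so no genuine estimate is needed here; the claim that $\Kpert$ (equivalently $\Delta$) vanishes as $p\to\infty$ is \emph{not} part of this lemma and is left to the concentration argument behind Theorem~\ref{thm:qnn-mse-linear}. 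One subtlety worth flagging in the write-up: the operators $\mlvec{W}_l$ depend on $\varytheta$, hence on $t$, so $\Delta(t)$ is genuinely time-dependent through $\mlvec\theta(t)$ — but since we only need an algebraic identity valid pointwise in $t$, this causes no difficulty beyond notation.
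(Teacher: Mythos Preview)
Your strategy is the paper's: rewrite the $K_{ij}$ from Lemma~\ref{lm:qnndynamics} as a trace of $\compI[\varyM,\mlvec\rho_i]\otimes\compI[\varyM,\mlvec\rho_j]$ against a single $d^2\times d^2$ Hermitian, then split that Hermitian into a fixed piece (producing $\Kasym$) and a remainder $\Delta(t)$ (producing $\Kpert$), with the choice of $\eta$ absorbing the normalization.

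There is, however, an algebraic slip in your vectorization step. The product $\tr(\mlvec A\,\tildeH{l})\,\tr(\mlvec B\,\tildeH{l})$ is \emph{not} equal to $\tr\bigl((\mlvec A\otimes\mlvec B)\,(\mlvec W_l\otimes\overline{\mlvec W_l})\ketbra{\mathrm{vec}(\mlvec H)}(\mlvec W_l^\dagger\otimes\overline{\mlvec W_l}^\dagger)\bigr)$: with the rank-one projector you obtain a matrix element $\langle\mathrm{vec}(\mlvec A)|\cdots|\mathrm{vec}(\mlvec B)\rangle$, which differs from the tensor-product trace by a reshuffling (partial transpose) and therefore does not land in the form $(\Kpert)_{ij}=\tr\bigl(\compI[\mlvec M,\mlvec\rho_i]\otimes\compI[\mlvec M,\mlvec\rho_j]\,\Delta(t)\bigr)$ stated in the lemma. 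The paper instead uses the direct identity $\tr(\mlvec A\,\tildeH{l})\,\tr(\mlvec B\,\tildeH{l})=\tr\bigl((\mlvec A\otimes\mlvec B)(\tildeH{l}\otimes\tildeH{l})\bigr)$, sets $\mlvec Y(\varytheta):=\frac{1}{pZ}\sum_{l}\tildeH{l}\otimes\tildeH{l}$ with $Z=\tr(\mlvec H^2)/(d^2-1)$, and takes the fixed piece to be the explicit operator $\mlvec Y^\star:=\mlvec W_{d^2\times d^2}-\tfrac{1}{d}\mlvec I_{d^2\times d^2}$ (swap minus scaled identity), so that $\Delta(t)=\mlvec Y(\varytheta)-\mlvec Y^\star$. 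The swap gives $\tr\bigl((\mlvec A\otimes\mlvec B)\mlvec W_{d^2\times d^2}\bigr)=\tr(\mlvec A\mlvec B)$ and the identity piece vanishes because the commutators are traceless, which recovers $\Kasym$ with coefficient exactly~$1$; no Schur--Weyl computation is actually needed for this lemma (that $\mlvec Y^\star$ is the Haar average only matters later, in the concentration argument). If you replace your rank-one object by $\tildeH{l}\otimes\tildeH{l}$ and your $\Pi_{\mathrm{asym}}$ by the explicit $\mlvec W_{d^2\times d^2}-\tfrac{1}{d}\mlvec I$, the rest of your outline goes through verbatim.
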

Under the random initialization by sampling $\{\mlvec{U}_{l}\}_{l=1}^{p}$ i.i.d.
from the haar measure over the special unitary group $SU(d)$,  $\Delta(0)$
concentrates at zero as $p$ increases. We further show that
$\Delta(t)-\Delta(0)$ has a bounded operator norm decreasing with number of
parameters. This allows us to associate the convergence of
the over-parameterized QNN with the properties of $\Kasym(t)$:
\begin{restatable}[Linear convergence of QNN with mean-square loss]{theorem}{qnnmselinear}
\label{thm:qnn-mse-linear}
Let $\mathcal{S}$ be a training set with $m$ samples
$\{(\mlvec\rho_{j}, y_{j})\}_{j=1}^{m}$, and let $\mlvec{U}(\mlvec\theta)$ be
a $p$-parameter ansatz generated by a non-zero $\mlvec{H}$ as in Line~(\ref{eq:partial-ansatz}).
Consider a QNN instance with the training set $\mathcal{S}$, ansatz
$\mlvec{U}(\mlvec\theta)$ and a measurement $\qnnmeasure$, trained by
gradient flow with $\eta = \frac{m}{p}\frac{d^{2}-1}{\tr(\mlvec{H}^{2})}$.
Then for sufficiently large number of parameters $p$, if the smallest eigenvalue of
$\Kasym(t)$ is greater than a constant $C_{0}$, then
with high probability over the random initialization of the periodic
ansatz, the loss function converges to zero at a linear rate
\begin{align}
  L(t) \leq L(0) \exp(-\frac{C_{0}t}{2}).
\end{align}
\end{restatable}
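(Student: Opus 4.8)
The plan is to reprise the classical NTK stability argument, but with the positive-definiteness hypothesis moved onto the \emph{asymptotic} Gram matrix $\Kasym(t)$ and with $\Kpert(t)$ cast as the correction that must be shown negligible. Write $L(t)=\frac{1}{2m}\mlvec{r}(t)^{\t}\mlvec{r}(t)$, so that $\diffT{L(t)}=\frac{1}{m}\mlvec{r}(t)^{\t}\diffT{\mlvec{r}(t)}$; plugging in the decomposition of Lemma~\ref{lm:resid_dyn_decomp}, using $\mlvec{r}^{\t}\Kasym(t)\mlvec{r}\ge C_{0}\norm{\mlvec{r}}^{2}$ (the hypothesis, assumed along the trajectory) together with $\mlvec{r}^{\t}\Kpert(t)\mlvec{r}\ge-\opnorm{\Kpert(t)}\norm{\mlvec{r}}^{2}$ and $\norm{\mlvec{r}(t)}^{2}=2mL(t)$, gives
\begin{align}
  \diffT{L(t)} \;=\; -\frac{1}{m}\,\mlvec{r}(t)^{\t}\big(\Kasym(t)+\Kpert(t)\big)\mlvec{r}(t) \;\le\; -2\big(C_{0}-\opnorm{\Kpert(t)}\big)\,L(t).
\end{align}
Hence, once we guarantee $\opnorm{\Kpert(t)}\le\tfrac{3}{4}C_{0}$ for all $t$ along the trajectory, we get $\diffT{L(t)}\le-\tfrac{C_{0}}{2}L(t)$, and integrating yields $L(t)\le L(0)\exp(-C_{0}t/2)$, which is the claim. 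Everything therefore reduces to a uniform-in-time bound on $\opnorm{\Kpert(t)}$.

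To bound $\Kpert(t)$ the plan is to route through $\Delta(t)$. Since $\mlvec{M}(\varytheta)$ carries the same spectrum as $\qnnmeasure$ at every time (the unitarity invariant from Section~\ref{sec:qnn-dynamics}), we have $\trnorm{\imagi[\mlvec{M}(t),\mlvec{\rho}_{j}]}\le 2\opnorm{\qnnmeasure}$, and trace--operator norm duality bounds each entry by $|(\Kpert(t))_{ij}|\le\trnorm{\imagi[\mlvec{M}(t),\mlvec{\rho}_{i}]}\,\trnorm{\imagi[\mlvec{M}(t),\mlvec{\rho}_{j}]}\,\opnorm{\Delta(t)}$, so
\begin{align}
  \opnorm{\Kpert(t)} \;\le\; \fronorm{\Kpert(t)} \;\le\; 4m\,\opnorm{\qnnmeasure}^{2}\,\opnorm{\Delta(t)}.
\end{align}
It thus suffices that $\opnorm{\Delta(t)}\le\frac{3C_{0}}{16\,m\,\opnorm{\qnnmeasure}^{2}}$ along the trajectory. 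At initialization this follows from the stated concentration of $\Delta(0)$ at $0$ under the Haar sampling of $\{\mlvec{U}_{l}\}$ (a second-moment estimate via Weingarten calculus), giving $\opnorm{\Delta(0)}=O(1/\sqrt{p})$ with high probability; to carry it to later times one invokes the stated bound on $\opnorm{\Delta(t)-\Delta(0)}$, which is controlled by the parameter displacement $\norm{\varytheta-\mlvec{\theta}(0)}$.

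What remains is a short continuity/bootstrap argument to close the loop, and this is the only delicate point. Let $T^{\star}=\sup\{T\ge 0:\opnorm{\Kpert(t)}\le\tfrac{3}{4}C_{0}\text{ for all }t\in[0,T]\}$; the base step $T^{\star}>0$ comes from the concentration of $\Delta(0)$. On $[0,T^{\star})$ the first display gives $L(t)\le L(0)\exp(-C_{0}t/2)$, and since gradient flow obeys $\eta\norm{\nabla L(\varytheta)}^{2}=-\diffT{L(t)}$, the total displacement is controlled by the loss decay:
\begin{align}
  \norm{\varytheta-\mlvec{\theta}(0)} \;\le\; \int_{0}^{t}\eta\,\norm{\nabla L(\mlvec{\theta}(s))}\,ds \;=\; \int_{0}^{t}\sqrt{-\eta\,\tfrac{dL}{ds}}\;ds \;\le\; \sqrt{\eta}\,\sqrt{\tfrac{C_{0}}{2}L(0)}\int_{0}^{\infty}e^{-C_{0}s/4}\,ds \;=\; O\!\left(\sqrt{\eta\,L(0)/C_{0}}\right),
\end{align}
which is $O(1/\sqrt{p})$ because $\eta=\frac{m}{p}\frac{d^{2}-1}{\tr(\mlvec{H}^{2})}$ scales as $1/p$. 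Feeding this into the $\opnorm{\Delta(t)-\Delta(0)}$ estimate and combining with $\opnorm{\Delta(0)}=O(1/\sqrt{p})$ gives $\opnorm{\Delta(t)}=O(1/\sqrt{p})$ uniformly on $[0,T^{\star})$, hence $\opnorm{\Kpert(t)}<\tfrac{3}{4}C_{0}$ strictly once $p$ exceeds a threshold depending only on $C_{0},m,\opnorm{\qnnmeasure},d$ and $\mlvec{H}$; continuity of $t\mapsto\opnorm{\Kpert(t)}$ then forces $T^{\star}=\infty$, so the bound holds for all $t$. The main obstacle is exactly this self-consistency: the estimate for $\opnorm{\Delta(t)-\Delta(0)}$ must not grow in $t$ fast enough to swamp the $1/\sqrt{p}$ factor before the loss has decayed, so it is crucial that on $[0,T^{\star})$ one already has \emph{exponential} decay of $L$ (not merely boundedness), which is what keeps the displacement integral finite and makes the bootstrap close.
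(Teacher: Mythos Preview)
Your strategy is the paper's: decompose via Lemma~\ref{lm:resid_dyn_decomp}, bound $\Kpert$ through $\opnorm{\Delta(t)}$, split $\Delta(t)$ into an initialization piece and a drift piece, and close with a bootstrap. Your bootstrap is in fact spelled out more explicitly than the paper's terse proof, which just cites Lemmas~\ref{lm:concentration_init} and~\ref{lm:concentration_train} and asserts $\opnorm{\Kpert(t)}\le C_{0}/10$ for large $p$. Two points, however, deserve attention.

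First, a minor slip in your displacement estimate: on $[0,T^{\star})$ you have $-dL/ds\ge\tfrac{C_{0}}{2}L(s)$, a \emph{lower} bound, so the step $\sqrt{-\eta\,dL/ds}\le\sqrt{\eta}\sqrt{\tfrac{C_{0}}{2}L(0)}\,e^{-C_{0}s/4}$ runs the inequality in the wrong direction. The repair is easy---either bound $-dL/ds\le 2\lambda_{\max}(\Kasym+\Kpert)\,L(s)$ using an a~priori upper bound on $\lambda_{\max}(\Kasym)$ (which exists since the entries of $\Kasym$ are bounded by $4\opnorm{\qnnmeasure}^{2}$), or do what the paper does in Lemma~\ref{lm:qnn-slow-theta}: bound each $|\partial L/\partial\theta_{l}|\le\tfrac{1}{p}\cdot\text{const}\cdot\sqrt{L}$ directly from the formula and integrate coordinate-wise. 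Either way one recovers $\|\varytheta-\mlvec{\theta}(0)\|_{2}=O(1/\sqrt{p})$, so your conclusion survives.

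Second, and more important: the line ``invokes the stated bound on $\opnorm{\Delta(t)-\Delta(0)}$, which is controlled by the parameter displacement'' hides the real work. A \emph{deterministic} Lipschitz bound is not enough: one has $\opnorm{\partial\mlvec{Y}/\partial\theta_{k}}=O(1)$ for every $k$ (each $\theta_{k}$ enters all $\mlvec{H}_{l}$ with $l>k$), so deterministically $\opnorm{\Delta(t)-\Delta(0)}\lesssim\|\varytheta-\mlvec{\theta}(0)\|_{1}\le\sqrt{p}\,\|\varytheta-\mlvec{\theta}(0)\|_{2}=O(1)$, which is useless. The paper's Lemma~\ref{lm:concentration_train} instead shows that the random field $\mlvec{\theta}\mapsto\mlvec{Y}(\mlvec{\theta})$ has \emph{sub-Gaussian increments} (via McDiarmid over the i.i.d.\ Haar unitaries $\{\mlvec{U}_{l}\}$) and then applies a matrix Dudley/chaining inequality to obtain a high-probability uniform-in-$t$ bound of order $1/\sqrt{p}$. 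This probabilistic improvement---from deterministic $O(1)$ to typical $O(1/\sqrt{p})$---is the crux that makes your bootstrap close, so it should be named as such rather than treated as a routine Lipschitz consequence of small displacement.
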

We defer the proof to Section~\ref{subsec:thm_qnn_mse_linear}. Similar to $\mlvec{r}(t)$, the
evolution of $\mlvec{M}(t)$ decomposes into an asymptotic term
\begin{align}
\frac{d}{dt}\mlvec{M}(t) = \sum_{j=1}^{m}r_{j}[\mlvec{M}(t), [\mlvec{M}(t),\mlvec\rho_{j}]]\label{eq:m_asymp_dynamics}
\end{align}
and a
perturbative term depending on $\Delta(t)$. Theorem~\ref{thm:qnn-mse-linear}
allows us to study the behavior of an over-parameterized QNN by simulating/characterizing the
asymptotic dynamics of $\mlvec{M}(t)$, which is significantly more accessible.

\paragraph{Application: QNN with one training sample}
To demonstrate the proposed asymptotic dynamics as a tool for analyzing
over-parameterized QNNs, we study the convergence of the QNN with one training
sample $m=1$. To set a separation from the regime of the sublinear convergence,
consider the following setting: let
$\qnnmeasure$ be a Pauli measurement, for any input
state $\mlvec{\rho}$, instead of assigning
$\hat{y}=\tr(\mlvec\rho\mlvec{U}(\mlvec\theta)^{\dagger}\qnnmeasure\mlvec{U}(\mlvec\theta))$,
take
$\gamma\tr(\mlvec\rho\mlvec{U}(\mlvec\theta)^{\dagger}\qnnmeasure\mlvec{U}(\mlvec\theta))$
as the prediction $\hat{y}$ at $\mlvec\theta$ for a scaling factor
$\gamma > 1.0$.
The $\gamma$-scaling of the measurement outcome can be viewed as a classical
processing in the context of quantum information, or as an activation function
(or a link function) in the context of machine learning, and is equivalent to a
QNN with measurement $\gamma\qnnmeasure$. The following corollary
implies the convergence of 1-sample QNN for $\gamma > 1.0$ under a mild
initial condition:
\begin{corollary}
  \label{cor:onesample}
  Let $\mlvec{\rho}$ be a $d$-dimensional pure state, and let $y$ be $\pm 1$.
  Consider a QNN instance with a Pauli measurement $\qnnmeasure$, an one-sample training set
  $\mathcal{S} = \{(\mlvec\rho, y)\}$ and an ansatz
  $\mlvec{U}(\mlvec\theta)$ defined in Line~(\ref{eq:partial-ansatz}). Assume the
  scaling factor $\gamma > 1.0$ and $p\rightarrow\infty$ with
  $\eta = \frac{d^{2}-1}{p\tr(\mlvec{H}^{2})}$. Under the initial condition that
  the prediction at $t=0$,
  $\hat{y}(0)$ is less than 1, the objective function converges linearly with
  \begin{align}
    L(t) \leq L(0) \exp(-C_{1}t)
  \end{align}
  with the convergence rate $C_{1} \geq \gamma^{2}-1$.
\end{corollary}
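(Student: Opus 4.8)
The plan is to reduce Corollary~\ref{cor:onesample} to a scalar ODE for the single residual $r(t) = y - \hat y(t)$ and then solve it explicitly. By Theorem~\ref{thm:qnn-mse-linear}, in the limit $p\to\infty$ the dynamics concentrates on the asymptotic dynamics, so it suffices to show that the asymptotic $\Kasym(t)$ (which is now a scalar since $m=1$) stays above the claimed constant as long as $\hat y(t) < 1$, and then invoke the theorem. Writing $\mlvec M(t)$ for the parameterized measurement $\gamma\,\mlvec U^\dagger(\varytheta)\qnnmeasure\mlvec U(\varytheta)$, we have the scalar quantity $\Kasym(t) = \tr\big(\imagi[\mlvec M(t),\mlvec\rho]\,\imagi[\mlvec M(t),\mlvec\rho]\big) = \fronorm{[\mlvec M(t),\mlvec\rho]}^2 \ge 0$, and the asymptotic ODE is $\frac{dr}{dt} = -\Kasym(t)\,r$, equivalently $\frac{d}{dt}\hat y = \Kasym(t)(y-\hat y)$.

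First I would exploit the pure-state structure: write $\mlvec\rho = \ketbra{v}$, so $\hat y(t) = \lbra v\rbra \mlvec M(t)\lket v\rket$, and compute $\fronorm{[\mlvec M(t),\mlvec\rho]}^2$ in terms of $\mlvec M(t)$ restricted to the state $\lket v\rket$. A direct calculation (using $\tr(\rho A\rho B)$ identities for rank-one $\rho$) gives $\fronorm{[\mlvec M,\ketbra{v}]}^2 = \lbra v\rbra \mlvec M^2\lket v\rket - \big(\lbra v\rbra \mlvec M\lket v\rket\big)^2$, i.e. the variance of $\mlvec M(t)$ in the state $\lket v\rket$. Now the key point: since $\qnnmeasure$ is a Pauli measurement, $\qnnmeasure^2 = \mlvec I$, hence $\mlvec M(t)^2 = \gamma^2\,\mlvec U^\dagger\qnnmeasure^2\mlvec U = \gamma^2 \mlvec I$ is \emph{constant in time} — this is where the spectrum-preservation remark after Lemma~\ref{lm:qnndynamics} does the real work. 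Therefore $\lbra v\rbra \mlvec M(t)^2\lket v\rket = \gamma^2$ for all $t$, and
\begin{align}
  \Kasym(t) = \gamma^2 - \hat y(t)^2.
\end{align}

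Next I would solve the resulting closed scalar system. With $y = +1$ (the case $y=-1$ is symmetric by $\mlvec M \mapsto -\mlvec M$), set $r = 1 - \hat y$; then $\frac{dr}{dt} = -(\gamma^2 - \hat y^2) r = -(\gamma - \hat y)(\gamma + \hat y) r$. Under the initial condition $\hat y(0) < 1$, i.e. $r(0) > 0$, one checks $\hat y(t)$ is increasing and bounded above by $1 < \gamma$, so $\hat y(t)$ stays in $[\hat y(0), 1)$ for all $t$; on this interval $\gamma - \hat y(t) > \gamma - 1 > 0$ and $\gamma + \hat y(t) > \gamma - 1$ (in fact larger), giving the lower bound $\Kasym(t) = (\gamma-\hat y(t))(\gamma+\hat y(t)) \ge \gamma^2 - 1$. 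Hence $\frac{d}{dt}\ln L(t) = 2\,\frac{d}{dt}\ln|r(t)| = -2\Kasym(t) \le -2(\gamma^2-1)$, wait — more carefully $L = \frac12 r^2$, so $\frac{dL}{dt} = r\frac{dr}{dt} = -\Kasym(t) r^2 = -2\Kasym(t) L \le -2(\gamma^2-1)L$; integrating yields $L(t) \le L(0)\exp(-2(\gamma^2-1)t)$, so $C_1 \ge \gamma^2 - 1$ as claimed (in fact one gets a factor $2$ to spare, consistent with the $\exp(-C_0 t/2)$ slack in Theorem~\ref{thm:qnn-mse-linear}). Finally I would package this with Theorem~\ref{thm:qnn-mse-linear}: the lower bound $\Kasym(t) \ge \gamma^2 - 1 > 0$ on the invariant region is exactly the hypothesis $\lambda_{\min}(\Kasym(t)) > C_0$ needed there, so for sufficiently large $p$ the true QNN loss obeys the same linear rate with high probability.

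The main obstacle I anticipate is not the scalar ODE but justifying the invariance of the region $\{\hat y < 1\}$ for the \emph{true} (finite-$p$, perturbed) dynamics rather than the idealized asymptotic one — Theorem~\ref{thm:qnn-mse-linear} is stated with a hypothesis on $\Kasym(t)$ along the trajectory, so one must argue that the perturbation $\Kpert(t)$ (controlled by $\norm{\Delta(t)}_{\mathsf{op}}$, which vanishes as $p\to\infty$) does not push $\hat y$ past $1$ before the loss has decayed; a short bootstrap/continuity argument handles this, since once $\hat y$ is bounded away from $1$ the rate is bounded below and the loss shrinks, feeding back to keep $\hat y$ controlled. The other mild technical point is the rank-one commutator-norm identity, which is a routine trace computation.
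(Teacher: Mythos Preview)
Your proposal is correct and follows essentially the same route as the paper: reduce to the asymptotic scalar dynamics via Theorem~\ref{thm:qnn-mse-linear}, use $\mlvec{M}(t)^2 = \gamma^2 \mlvec{I}$ to evaluate $\Kasym(t)$ explicitly in terms of $\hat y(t)$, invoke invariance of the region $|\hat y|<1$ to lower-bound $\Kasym(t)$, and integrate. One minor slip: for a pure state $\mlvec\rho=\ketbra{v}$ the identity is $\fronorm{[\mlvec M,\mlvec\rho]}^2 = 2\big(\lbra v\rbra \mlvec M^2\lket v\rket - (\lbra v\rbra \mlvec M\lket v\rket)^2\big)$, i.e.\ twice the variance, so $\Kasym(t)=2(\gamma^2-\hat y(t)^2)$ as the paper states; this only strengthens your bound and does not affect the argument.
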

With a scaling factor $\gamma$ and training set
$\{(\mlvec\rho_{j}, y_{j})\}_{j=1}^{m}$, the objective function, as a
function of the parameterized measurement $\mlvec{M}(t)$, reads as:
$L(\mlvec{M}(t)) = \frac{1}{2m}\sum_{j=1}^{m}(\gamma\tr(\mlvec\rho_{j}\mlvec{M}(t)) - y_{j})^{2}$.
As stated in Theorem~\ref{thm:qnn-mse-linear}, for sufficiently large number of
parameters $p$, the convergence rate of the residual $\mlvec{r}(t)$ is
determined by $\Kasym(t)$, as the asymptotic dynamics of $\mlvec{r}(t)$ reads as
  $\frac{d}{dt}\mlvec{r} = -\Kasym(\mlvec{M}(t))\mlvec{r}(t)$ with the chosen $\eta$.
For $m = 1$, the asymptotic matrix $\Kasym$ reduces to a scalar
$k(t) = -\tr([\gamma\mlvec{M}(t),\mlvec\rho]^{2}) = 2(\gamma^{2}-\hat{y}(t)^{2})$.
$\hat{y}(t)$ approaches the label $y$ if $k(t)$ is strictly positive, which is
guaranteed for $\hat{y}(t) < \gamma$. Therefore $|\hat{y}(0)|<1$ implies that
$|\hat{y}(t)|<1$ and $k(t)\geq 2(\gamma^{2}-1)$ for all $t>0$.

In Figure~\ref{fig:onesample} (top), we plot the training curves of one-sample QNNs with
$p=320$ and
varying $\gamma = 1.2, 1.4, 2.0, 4.0, 8.0$ with the same learning rate
$\eta=1e-3/p$. As predicted in Corollary~\ref{cor:onesample}, the rate of convergence increases with the
scaling factor $\gamma$.
The proof of the corollary additionally implies that $k(t)$ depends on
$\hat{y}(t)$: the convergence rate changes over time as the prediction $\hat{y}$
changes. Therefore, despite the linear convergence, the dynamics is different from that of kernel
regression, where the kernel remains constant during training in the limit $p\rightarrow\infty$.

In Figure~\ref{fig:onesample} (bottom), we plot the empirical rate of convergence
$-\frac{d}{dt}\ln L(t)$ against the rate predicted by $\hat{y}$. Each data
point is calculated for QNNs with different $\gamma$ at different time steps by
differentiating the logarithms of the training curves. The scatter plot displays
an approximately linear dependency, indicating the proposed asymptotic dynamics is
capable of predicting how the convergence rate changes during training, which is
beyond the explanatory power of the kernel regression model. Note that the slope
of the linear relation is not exactly one. This is because we choose a learning rate much
smaller than $\eta$ in the corollary statement to simulate the dynamics of gradient flow.

\begin{figure}[!htbp]
  \centering
  \includegraphics[width=0.65\linewidth]{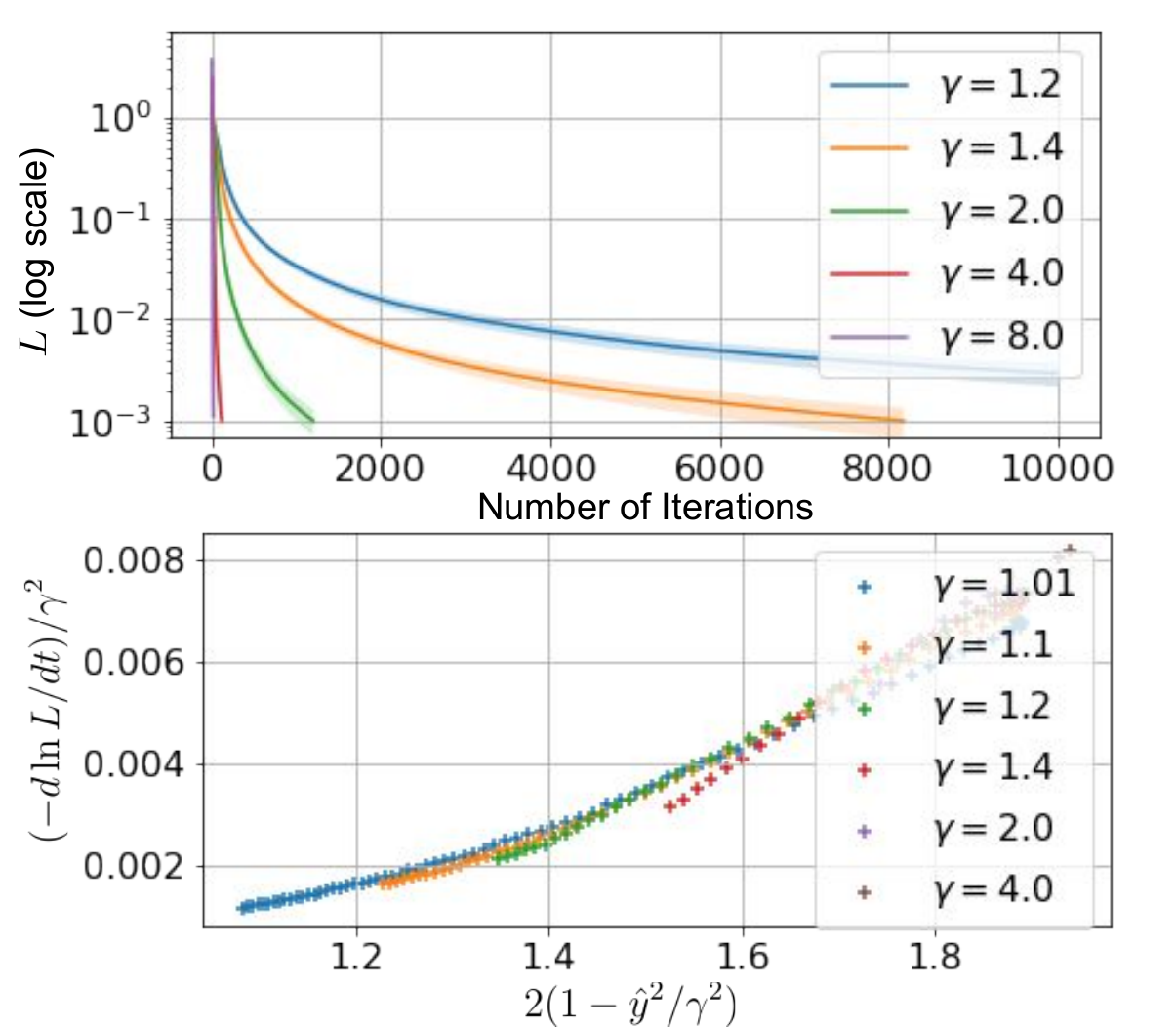}
  \caption{
    (Top) The training curves of one-sample QNNs with varying $\gamma$. The
    smallest convergence rate $-d\ln L / dt$ during training (i.e. the slope of
    the training curves under the log scale) increases with $\gamma$.
    (Bottom) The convergence rate $-d\ln L/dt|_{t=T}$ as a function of
    $2(\gamma^{2}-\hat{y}^{2}(T))$ (jointly scaled by
    $1/\gamma^{2}$ for visualization) are
    evaluated at different time steps $T$ for different $\gamma$. The approximately
    linear dependency shows that the proposed dynamics captures the QNN
    convergence beyond the explanatory power of the kernel regressions.
  }
  \label{fig:onesample}
\end{figure}
QNNs with one training sample have been considered before (e.g.
\cite{liu2022analytic}), where the linear convergence has been shown under the assumption of ``frozen QNTK", namely assuming $\mlvec{K}$,
the time derivative of the log residual remains almost constant throughout training. In the corollary above, we provide an
end-to-end proof for the one-sample linear convergence without assuming a frozen
$\mathbf{K}$. In fact, we observe that in our setting
$\mathbf{K} = 2(\gamma^2 - \hat{y}(t))$ changes with $\hat{y}(t)$ (see also
Figure~\ref{fig:onesample}) and is therefore not frozen.

\paragraph{QNN convergence for $m>1$}
To characterize the convergence of QNNs with $m>1$, we
seek to empirically study the asymptotic dynamics in Line~(\ref{eq:m_asymp_dynamics}). According to
Theorem~\ref{thm:qnn-mse-linear}, the (linear) rate of convergence is
lower-bounded by the smallest eigenvalue of $\Kasym(t)$, up to an constant scaling. In
Figure~\ref{fig:smallest_eig}, we simulate the asymptotic dynamics with various
combinations of $(\gamma, d, m)$, and evaluate the smallest eigenvalue of
$\Kasym(t)$ throughout the dynamics (Figure~\ref{fig:smallest_eig}, details
deferred to Section~\ref{sec:app_exp}). For sufficiently large dimension $d$,
the smallest eigenvalue of $\Kasym$ depends on the ratio between the number of
samples and the system dimension $m/d$ and is proportional to the square of the
scaling factor $\gamma^{2}$.

\begin{figure}[!htbp]
  \centering
  \includegraphics[width=.7\linewidth]{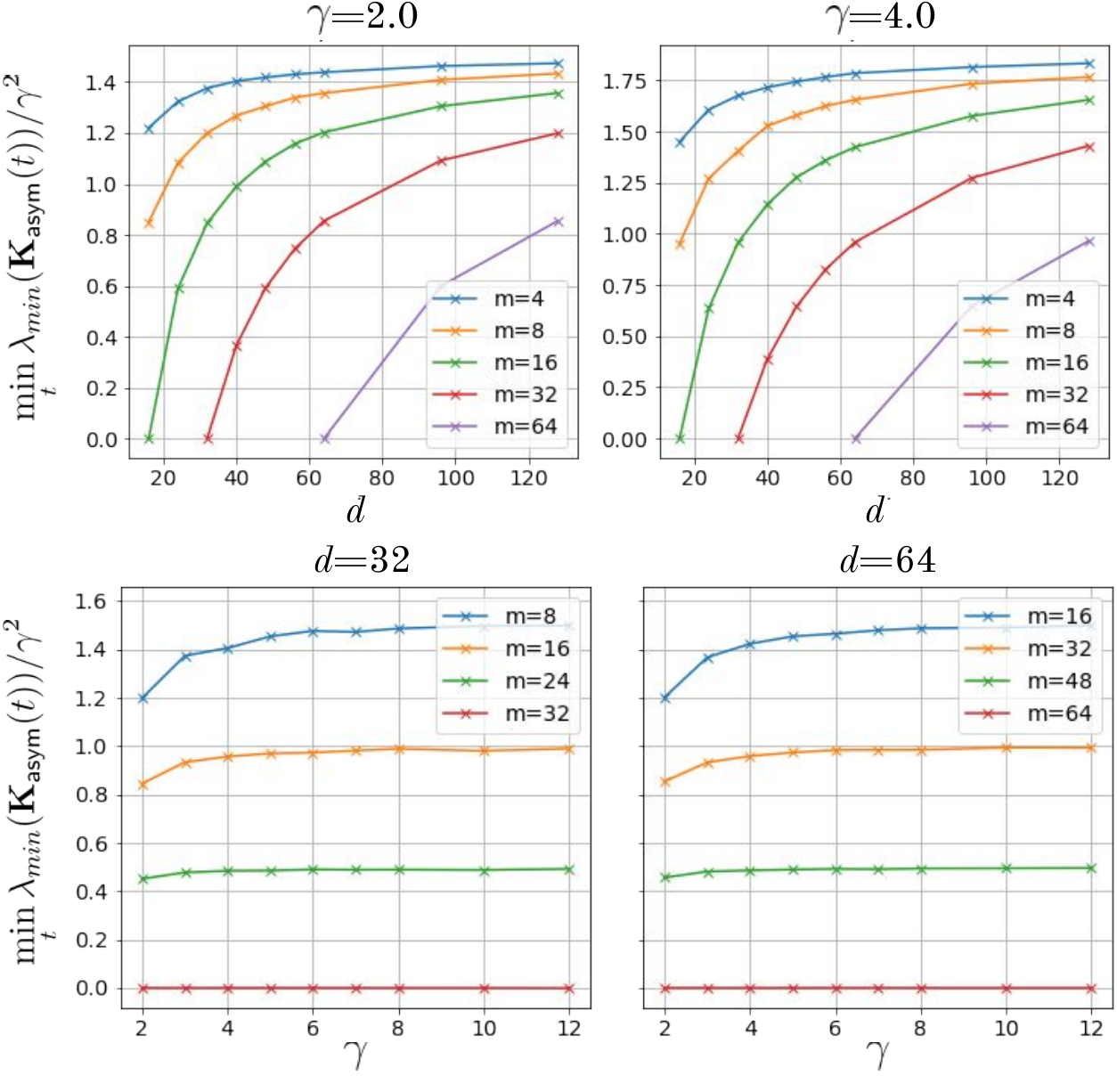}
  \caption{
    The smallest eigenvalue of $\Kasym$ for the asymptotic dynamics with
    varying system dimension $d$, scaling factor $\gamma$ and number of
    training samples $m$. For sufficiently large $d$, the smallest eigenvalue
    depends on the ratio $m/d$ and is proportional to the square of the scaling factor $\gamma^{2}$.
  }
  \label{fig:smallest_eig}
\end{figure}

Empirically, we observe that the smallest convergence rates for training QNNs
are obtained near the global minima (See
Figure~\ref{fig:Kchg_vary_qnn} in the appendix), suggesting the bottleneck of convergence occurs when $L$ is small. 

We now give theoretical evidence that, at most of the global minima, the eigenvalues of $\Kasym$ are lower bounded by $2\gamma^2(1-1/\gamma^2 - O(m^2/d))$, suggesting a linear convergence in the neighborhood of these minima. 
To make this notion precise, we define the uniform measure over global minima as follows:
consider a set of pure input states $\{\mlvec\rho_{j}=\mlvec{v}_{j}\mlvec{v}_{j}^{\dagger}\}_{j=1}^{m}$ that are  mutually
orthogonal (i.e. $\mlvec{v}_{i}^{\dagger}\mlvec{v}_{j}=0$ if $i\neq j$).
For a large dimension $d$, the global minima of the asymptotic dynamics is achieved when the
objective function is $0$. Let $\mlvec{u}_{j}(t)$ (resp. $\mlvec{w}_{j}(t)$)
denote the components of $\mlvec{v}_{j}$ projected to the positive (resp. negative)
subspace of the measurement $\mlvec{M}(t)$ at the global minima. Recall that for a $\gamma$-scaled QNN with a
Pauli measurement, the predictions $\hat{y}(t) = \gamma\tr(\rho_{j}\mlvec{M}(t)) = \gamma(\mlvec{u}_{j}^{\dagger}(t)\mlvec{u}_{j}(t)-\mlvec{w}_{j}^{\dagger}(t)\mlvec{w}_{j}(t))$.
At the global minima, we have $\mlvec{u}_{j}(t) = \frac{1}{2}(1\pm 1/\gamma)\hat{\mlvec{u}}_{j}(t)$ for some unit
vector $\hat{\mlvec{u}}_{j}(t)$ for the $j$-th training sample with label $\pm 1$. On the other hand, given a set of unit vectors
$\{\hat{\mlvec{u}}_{j}\}_{j=1}^{m}$ in the positive subspace, there is a corresponding set of $\{\mlvec{u}_{j}(t)\}_{j=1}^{m}$ and
$\{\mlvec{w}_{j}(t)\}_{j=1}^{m}$ such that $L=0$ for sufficiently large $d$. By uniformly and independently
sampling a set of unit vectors $\{\hat{\mlvec{u}}_{j}\}_{j=1}^{m}$ from the
$d/2$-dimensional subspace associated with the positive eigenvalues of
$\mlvec{M}(t)$, we induce a uniform distribution over all the global minima.
The next theorem characterizes $\Kasym$ under such an induced uniform distribution over all the global minima:
\begin{restatable}{theorem}{asympeig}
\label{thm:smallest_eig}
Let $\mathcal{S}=\{(\mlvec\rho_{j}, y_{j})\}_{j=1}^{m}$ be a training set with
orthogonal pure states $\{\mlvec\rho_{j}\}_{j=1}^{m}$ and equal number of
positive and negative labels $y_{j}\in\{\pm 1\}$.
Consider the smallest eigenvalue $\lambda_{g}$ of $\Kasym$ at the global minima of the asymptotic dynamics of an over-parameterized
QNN with the training set $\mathcal{S}$, scaling factor $\gamma$ and system dimension $d$. With probability
$\geq 1 - \delta$ over the uniform
measure over all the global minima
\begin{align}
  \lambda_{g}\geq 2\gamma^{2}(1-\frac{1}{\gamma^{2}}-C_{2}\max\{\frac{m^{2}}{d}, \frac{m}{d}\log\frac{2}{\delta}\}),
\end{align}
which is strictly positive for large $\gamma > 1$ and $d=\Omega(\mathsf{poly}(m))$.
Here $C_{2}$ is a positive constant.
\end{restatable}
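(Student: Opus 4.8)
The goal is to lower-bound the smallest eigenvalue of $\Kasym$ at a typical global minimum, where $(\Kasym)_{ij} = \tr(i[\gamma\mlvec{M}, \mlvec\rho_i]\, i[\gamma\mlvec{M},\mlvec\rho_j])$ (absorbing the scaling $\gamma$ into $\mlvec{M}$). Let me think about what this matrix looks like in terms of the decomposition $\mlvec{v}_j = \mlvec{u}_j \oplus \mlvec{w}_j$ into the positive/negative eigenspaces of $\mlvec{M}$. The key algebraic step is to expand $i[\mlvec{M},\mlvec\rho_j] = i[\mlvec{M}, \mlvec{v}_j\mlvec{v}_j^\dagger]$ explicitly using $\mlvec{M} = \Pi_+ - \Pi_-$; the commutator only has "off-diagonal" blocks connecting the $+$ and $-$ eigenspaces, and one computes $\tr(i[\mlvec{M},\mlvec\rho_i]\,i[\mlvec{M},\mlvec\rho_j])$ in terms of the inner products $\mlvec{u}_i^\dagger\mlvec{u}_j$ and $\mlvec{w}_i^\dagger\mlvec{w}_j$. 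For the diagonal one recovers $2(\gamma^2 - \hat{y}_j^2)$ as in the $m=1$ case; at a global minimum $\hat{y}_j = y_j = \pm1$ so the diagonal is exactly $2(\gamma^2-1)$ after reinstating $\gamma$. The off-diagonal entries will be (up to sign) of the form $2\gamma^2(\mlvec{u}_i^\dagger\mlvec{u}_j\,\overline{\mlvec{w}_i^\dagger\mlvec{w}_j} + \text{c.c.})$ or similar, and the point is that these are \emph{small} when the $\hat{\mlvec u}_j$ are random.

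\textbf{Step 1: exact formula for $\Kasym$ at a global minimum.} Carry out the commutator expansion to write $\Kasym = 2(\gamma^2-1)\II + \mlvec{E}$, where $\mlvec{E}$ is an $m\times m$ Hermitian "error" matrix whose entries are explicit bilinear expressions in $\{\mlvec{u}_i^\dagger\mlvec{u}_j\}$ and $\{\mlvec{w}_i^\dagger\mlvec{w}_j\}$ with zero diagonal. Using the parametrization $\mlvec{u}_j = \tfrac12(1+y_j/\gamma)\hat{\mlvec{u}}_j$ and $\mlvec{w}_j$ fixed up to choice (one can take the $\mlvec{w}_j$ to lie in an $m$-dimensional slice of the negative eigenspace with prescribed norms $\tfrac12(1-y_j/\gamma)$, and in the large-$d$ regime with orthogonal inputs the off-diagonal $\mlvec{w}_i^\dagger\mlvec{w}_j$ can be taken $0$ — this should be argued as part of defining the global-minimum manifold), the entries of $\mlvec{E}$ reduce to expressions controlled by the Gram matrix of the random unit vectors $\{\hat{\mlvec{u}}_j\}$.

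\textbf{Step 2: concentration of the random Gram matrix.} By Weyl's inequality, $\lambda_g \ge 2(\gamma^2-1) - \opnorm{\mlvec{E}}$, so it suffices to bound $\opnorm{\mlvec{E}}$. Since $\mlvec E$ is a fixed bilinear function of the off-diagonal part of the Gram matrix $G_{ij} = \hat{\mlvec u}_i^\dagger\hat{\mlvec u}_j$ of $m$ Haar-random unit vectors in $\complex^{d/2}$, I would bound $\opnorm{\mlvec{E}} \le c\,\gamma^2\,\opnorm{G - \II}$ (with the precise constant coming from the Step-1 formula and the bounded factors $\tfrac12(1\pm1/\gamma)$). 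Then invoke a standard concentration bound for the Gram matrix of random unit vectors: $\opnorm{G - \II} \le O(\sqrt{m/d}) + O(\sqrt{(m/d)\log(1/\delta)})$ with probability $\ge 1-\delta$ — this follows from, e.g., a matrix Bernstein/$\epsilon$-net argument on $\sum_j (\hat{\mlvec u}_j\hat{\mlvec u}_j^\dagger - \tfrac{1}{d/2}\II)$ restricted to the span, or from standard results on sub-Gaussian columns. Squaring where appropriate gives the $\max\{m^2/d, (m/d)\log(2/\delta)\}$ form in the statement; I expect the $m^2/d$ term to come from a union bound / crude norm bound on the $m(m-1)$ off-diagonal entries each of size $O(1/\sqrt d)$, and the $(m/d)\log(2/\delta)$ term from the sharper spectral bound, with the theorem taking the max to cover both regimes.

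\textbf{Main obstacle.} The delicate part is Step 1: getting the commutator bookkeeping exactly right so that the diagonal is \emph{precisely} $2(\gamma^2-1)$ (not merely $\ge$) and the off-diagonal error is genuinely $O(\gamma^2/\sqrt d)$ per entry rather than $O(1)$ — in particular one must verify that the "dangerous" terms that would be $\Theta(1)$ at a global minimum (involving $\mlvec{u}_i^\dagger\mlvec{u}_j$ alone, which need not be small) actually cancel against the $\mlvec{w}$-terms because at the minimum $\|\mlvec{u}_j\|^2 + \|\mlvec{w}_j\|^2 = 1$ forces a relation, or else are multiplied by $\mlvec{w}_i^\dagger\mlvec{w}_j = 0$. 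A secondary subtlety is making rigorous the claim that the uniform measure on $\{\hat{\mlvec u}_j\}$ really does induce the uniform measure on the global-minimum manifold in the $d\to\infty$ asymptotic model, and that the choice of $\{\mlvec{w}_j\}$ can be fixed consistently; I would handle this by noting that for $d = \Omega(\mathrm{poly}(m))$ orthogonal $\mlvec{v}_j$, the feasibility constraints decouple and the residual freedom in $\mlvec w_j$ does not affect $\Kasym$ to leading order.
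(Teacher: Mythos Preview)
Your high-level plan (compute $\Kasym$ explicitly at a global minimum, isolate the diagonal $2\gamma^2(1-1/\gamma^2)$, bound the off-diagonal via Gram-matrix concentration) matches the paper's, but two key steps are mis-specified and would not yield the stated bound.

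\textbf{The off-diagonal constraint is wrong.} You propose to ``take the off-diagonal $\mlvec{w}_i^\dagger\mlvec{w}_j = 0$.'' You cannot: since $\mlvec{v}_j = \mlvec{u}_j + \mlvec{w}_j$ and the $\mlvec{v}_j$ are orthogonal, one has the exact identity $\mlvec{w}_i^\dagger\mlvec{w}_j = -\,\mlvec{u}_i^\dagger\mlvec{u}_j$ for $i\neq j$. Combined with the exact formula (which you should simply compute --- there are no ``dangerous $\Theta(1)$'' terms to cancel)
\[
(\Kasym)_{ij} \;=\; 8\gamma^2\,\mathrm{Re}\bigl(\mlvec{u}_j^\dagger\mlvec{u}_i\cdot \mlvec{w}_i^\dagger\mlvec{w}_j\bigr),
\]
this forces the off-diagonal to be $-8\gamma^2|\mlvec{u}_i^\dagger\mlvec{u}_j|^2$, i.e.\ \emph{quadratic} in the Gram entries and hence $O(\gamma^2/d)$ per entry, not $O(\gamma^2/\sqrt d)$ as you guessed.

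\textbf{The operator-norm bound on $\mlvec{E}$ has the wrong form.} Your proposed inequality $\opnorm{\mlvec{E}}\le c\gamma^2\opnorm{G-\II}$ is linear in $G-\II$ and would give $O(\gamma^2\sqrt{m/d})$, not the $m^2/d$ in the theorem. The paper writes the error as a Hadamard product
\[
\mlvec{E} \;=\; -8\gamma^2\,\mlvec{R}\circ(\mlvec{G}-\II)\circ(\mlvec{G}^T-\II),
\]
where $\mlvec{R}$ is the rank-one matrix with blocks $\tfrac12(1\pm 1/\gamma)$ and $\opnorm{\mlvec{R}}=m/2$; then the submultiplicativity of the operator norm under Hadamard products gives $\opnorm{\mlvec{E}}\le 8\gamma^2\cdot\tfrac{m}{2}\cdot\opnorm{\mlvec{G}-\II}^2$. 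The concentration bound $\opnorm{\mlvec{G}-\II}\lesssim \max\{\sqrt{m},t\}/\sqrt{d}$ (with $t^2\sim\log(2/\delta)$) then produces exactly $2\gamma^2\cdot C_2\max\{m^2/d,\ (m/d)\log(2/\delta)\}$ --- the extra factor of $m$ comes from $\opnorm{\mlvec R}$, not from a union bound over $m^2$ entries as you speculated.
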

We defer the proof of Theorem~\ref{thm:smallest_eig} to Section~\ref{sec:app_globalminima} in the appendix. A similar notion of a uniform measure over global minima was also used in \citet{canatar2021spectral}.
Notice that the uniformness is dependent on the parameterization of the global minima, and the uniform measure over all the global minima is not necessarily the
measure induced by random initialization and gradient-based training. Therefore
Theorem~\ref{thm:smallest_eig} is not a rigorous depiction of the distribution of convergence rate for a randomly-initialized over-parameterized QNN.
Yet the prediction of the theorem aligns well with the empirical observations in Figure~\ref{fig:smallest_eig} and
suggests that by scaling the QNN measurements, a faster convergence can be achieved:
In Figure~\ref{fig:scale2_varyp}, we simulate $p$-parameter QNNs with dimension $d = 32$ and
$64$ with a scaling factor $\gamma = 4.0$ using the same setup as in
Figure~\ref{fig:scale1_varyp}. The training early stops  when the average $L(t)$ over the random seeds is less than $1\times 10^{-2}$.
In contrast to Figure~\ref{fig:scale1_varyp}, the convergence rate $-d\ln L/dt$
does not vanish as $L\rightarrow 0$, suggesting a simple (constant) scaling of the measurement outcome can lead to convergence within much fewer number of iterations. 
\begin{figure}[!htbp]
  \centering
  \includegraphics[width=.7\linewidth]{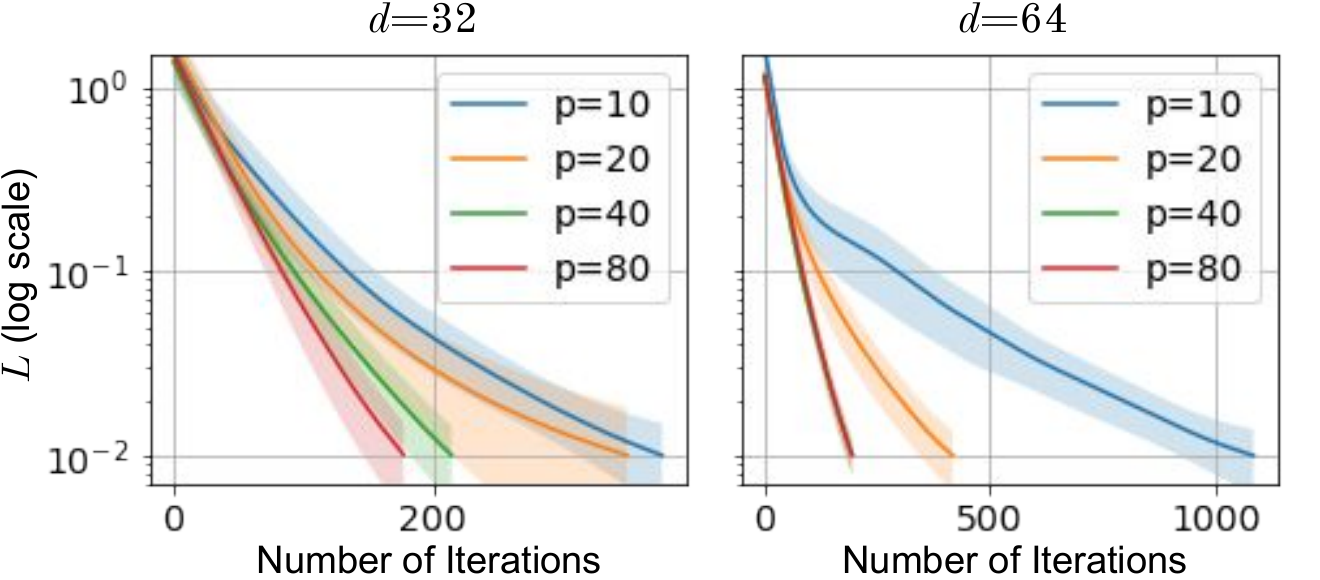}
  \caption{
    Training curves of QNNs with $\gamma=4.0$ for learning a $4$-sample dataset
    with labels $\pm 1$. For $p=10, 20, 40, 80$, the rate of convergence is
    greater than $0$ as $L\rightarrow 0$, and it takes less than $1000$
    iterations for $L$ in most of the instances to convergence below
    $1\times 10^{-2}$. In contrast, in Figure~\ref{fig:scale1_varyp},
    $L > 1\times 10^{-1}$ after $10000$ iterations despite the increasing number
    of parameters.
  }
  \label{fig:scale2_varyp}
\end{figure}

Another implication of Theorem~\ref{thm:smallest_eig} is the deviation of QNN dynamics from any kernel regressions. By straight-forward calculation, the normalized matrix $\Kasym(0) / \gamma^2$  at the random initialization is independent of the choices of $\gamma$. In contrast, the typical value of $\lambda_g / \gamma^2$ in Theorem~\ref{thm:smallest_eig} is dependent on $\gamma^2$, suggesting non-negligible changes in the matrix $\Kasym(t)$ governing the dynamics of $\mlvec{r}$ for finite scaling factors $\gamma$. Such phenomenon is empirically verified in Figure~\ref{fig:reKchg_vary} in the appendix.


\section{Limitations and Outlook}
\label{sec:conclusion}
In the setting of $m>1$, the proof of the linear convergence of QNN
training (Section~\ref{sec:theory-fast}) relies on the convergence of the
asymptotic QNN dynamics as a premise. Given our empirical results, an interesting future direction might be to rigorously characterize
the condition for the convergence of the asymptotic dynamics. Also we mainly consider (variants of) two-outcome measurements $\qnnmeasure$ with two
eigensubspaces. It might be interesting to look into measurements with more
complicated spectrums and see how the shapes of the spectrums affect the rates
of convergence.

A QNN for learning a classical dataset is composed of three parts: a classical-to-quantum
encoder, a quantum classifier and a readout measurement. Here we have mainly
focused on the stage after encoding, i.e. training a QNN classifier to
manipulate the density matrices containing classical information that are
potentially too costly for a classically-implemented linear model. Our analysis
highlights the necessity for measurement design, assuming the design of the
quantum classifier mixes to the full $d\times d$ special unitary group. Our
result can be combined with existing techniques of classifier designs (i.e.
ansatz design) (\cite{ragone2022representation, larocca2021theory, wang2022symmetric,ourvqe2022}) by
engineering the invariant subspaces, or be combined with encoder designs
explored in \citep{huang2021power,du2022demystify}.

\section*{Acknowledgements}
We thank E. Anschuetz, B. T. Kiani, J. Liu and anonymous reviewers for useful comments. This work received support from the U.S. Department of Energy, Office
of Science, Office of Advanced Scientific Computing Research, Accelerated Research in Quantum Computing and
Quantum Algorithms Team programs, as well as the U.S.
National Science Foundation grant CCF-1816695, and CCF-1942837 (CAREER).

\section*{Disclaimer}
This paper was prepared with synthetic data and for informational purposes by the teams of researchers from the various institutions identified above, including the Global Technology Applied Research Center of JPMorgan Chase \& Co. This paper is not a product of the JPMorgan Chase Institute.  Neither JPMorgan Chase \& Co. nor any of its affiliates make any explicit or implied representation or warranty and none of them accept any liability in connection with this paper, including, but limited to, the completeness, accuracy, reliability of information contained herein and the potential legal, compliance, tax or accounting effects thereof.  This document is not intended as investment research or investment advice, or a recommendation, offer or solicitation for the purchase or sale of any security, financial instrument, financial product or service, or to be used in any way for evaluating the merits of participating in any transaction.

\bibliography{references}
\bibliographystyle{abbrvnat}

\newpage
\appendix
\section{Proofs for Section~\ref{sec:qnn-dynamics}}
\label{sec:app-proofqnndynamics}
\subsection{Proof of Lemma~\ref{lm:qnndynamics}}
\label{subsec:qnndynamics_proof}

\qnndynamics*
\begin{proof}
  For succinctness, we drop the dependency on $\varytheta$ when there are no ambiguity.
    The unitary $\mlvec{U}_{r:p}(\mlvec\theta)$  depends on $\mlvec\theta_{l}$
    for $l\geq r$:
  \begin{align}
      \frac{\partial \mlvec{U}_{r:p}}{\partial\theta_{l}}
    = \mlvec{U}_{l:p}(\mlvec{\theta}) (-\compI\generatorH{l}) \mlvec{U}_{r:l-1}(\mlvec\theta)
    = -\compI\mlvec{U}_{l:p}\generatorH{l}\mlvec{U}_{l:p}^{\dagger}\mlvec{U}_{r:p}.
  \end{align}
  Therefore for all $l \in [p]$
  \begin{align}
    \frac{\partial \mlvec{M}(\mlvec{\theta}(t))}{\partial \theta_{l}}
    &=
    \mlvec{U}_{0}^{\dagger}
      \left(\partialtheta{\mlvec{U}_{1:p}}{l} \right)^{\dagger}\qnnmeasure\mlvec{U}_{{1:p}}\mlvec{U}_{0} + \mlvec{U}_{0}^{\dagger}\mlvec{U}_{1:p}^{\dagger}\qnnmeasure\partialtheta{\mlvec{U}_{{1:p}}}{l}\mlvec{U}_{0}, \\
    &=
      \compI(\mlvec{U}_{0}^{\dagger}\mlvec{U}_{1:p}^{\dagger}\mlvec{U}_{l:p}\generatorH{l}\mlvec{U}_{l:p}^{\dagger}\qnnmeasure \mlvec{U}_{1:p}\mlvec{U}_{0}) - \compI(\mlvec{U}_{0}^{\dagger}\mlvec{U}_{1:p}^{\dagger}\qnnmeasure \mlvec{U}_{l:p}\generatorH{l}\mlvec{U}_{l:p}^{\dagger}\mlvec{U}_{1:p}\mlvec{U}_{0}), \\
    &= \compI[\tildeH{l},\mlvec{M}(\mlvec{\theta}(t))].
  \end{align}
  By the chain rule with matrix parameters, we have
  \begin{align}
    \partialtheta{L(\mlvec{\theta}(t))}{l} = \tr\big(\nabla_{\mlvec{M}}L \partialtheta{\mlvec{M}}{l} \big) = \compI\tr(\nabla_{\mlvec{M}}L[\tildeH{l},\mlvec{M}(\mlvec{\theta}(t))]).
  \end{align}
  Furthermore, due to the gradient flow dynamics,
  \begin{align}
    \diffT{\paramM} &= \sum_{l=1}^{p}\diffT{\theta_{l}}\partialtheta{\mlvec{M}(\mlvec{\theta}(t))}{l} = - \eta\sum_{l=1}^{p}\partialtheta{L(\mlvec{\theta}(t))}{l} \partialtheta{\mlvec{M}(\mlvec{\theta}(t))}{l},\\
    &= \eta \sum_{l=1}^{p}\tr(\nabla_{\mlvec{M}}L[\tildeH{l},\mlvec{M}(\mlvec{\theta}(t))])[\tildeH{l},\mlvec{M}(\mlvec{\theta}(t))].
  \end{align}
By plugging in
$\nabla_{\mlvec{M}}L = -\frac{1}{m}\sum_{j=1}^{m}r_{j}\mlvec{\rho}_{j}$, we show
that the parameterized measurement
$\mlvec{M}(\mlvec\theta)=\mlvec{U}^{\dagger}(\mlvec\theta)\qnnmeasure\mlvec{U}(\mlvec\theta)$
follows the dynamics
\begin{align}
d\mlvec{M}(\mlvec\theta(t))/dt=
\frac{\eta}{m} \sum_{l=1}^{p}\tr\big(\sum_{j=1}^{m}r_{j}\mlvec\rho_{j} \compI[\tildeH{l}, \mlvec{M}(\mlvec{\theta}(t))]\big) \compI[\tildeH{l}, \mlvec{M}(\mlvec{\theta}(t))].
\end{align}
By definition $r_{i}:=y_{i}-\hat{y}_{i}$, and
    \begin{align}
      \diffT{r_{i}} &= -\diffT{\tr(\mlvec\rho_{i}\mlvec{M}(\mlvec{\theta}(t)))} = -\tr\big(\mlvec\rho_{i}\diffT{\paramM}\big)\\
                    &=  -\frac{\eta}{m} \sum_{l=1}^{p}\tr\big(\sum_{j=1}^{m}r_{j}\mlvec\rho_{j} \compI[\tildeH{l}, \mlvec{M}(\mlvec{\theta}(t))]\big) \tr\big(\mlvec{\rho_{i}}\compI[\tildeH{l}, \mlvec{M}(\mlvec{\theta}(t))]\big)\\
                    &= -\frac{\eta}{m} \sum_{j=1}^{m} r_{j} \big(\tr\big(\mlvec{\rho}_{i}\compI[\tildeH{l}, \mlvec{M}(\mlvec{\theta}(t))]\big)\tr\big(\mlvec{\rho}_{j}\compI[\tildeH{l}, \mlvec{M}(\mlvec{\theta}(t))]\big)\big)\\
      &= -\frac{\eta}{m} \sum_{j=1}^{m} r_{j} \big(\tr\big(\compI[\paramM,\mlvec{\rho}_{i}]\tildeH{l}\big)\tr\big(\compI[\paramM,\mlvec{\rho}_{j}]\tildeH{l}\big)\big).
    \end{align}
    The last equality is due to the cyclicity of the trace operation. Making the
    identification
    $K_{ij}(\paramM) = \big(\tr\big(\compI[\paramM,\mlvec{\rho}_{i}]\tildeH{l}\big)\tr\big(\compI[\paramM,\mlvec{\rho}_{j}]\tildeH{l}\big)\big)$,
    we have
    \begin{align}
    \frac{d\varyr}{dt}=-\frac{\eta}{m}\mlvec{K}(\varyM)\varyr.
    \end{align}
  \end{proof}

\subsection{Proof of Theorem~\ref{thm:sublinear-convergence}}
\label{subsec:slow_proof}

\begin{proof}
The mean squared loss function $L(\varytheta)$ can be expressed as
$\frac{1}{2m}\mlvec{r}(\varytheta)^{T}\mlvec{r}(\varytheta)$. Using
Lemma~\ref{lm:qnndynamics}, the rate of convergence can be lower-bounded as
\begin{align}
  &\frac{1}{\varyl}\frac{d\varyl}{dt}\\
  =&\frac{1}{\varyr^{T}\varyr}\frac{d}{dt}\varyr^{T}\varyr, \\
  =&-\frac{2\eta}{m}\cdot\frac{\varyr^{T}\mlvec{K}(\varytheta)\varyr}{\varyr^{T}\varyr}, \\
  \geq& -\frac{2\eta}{m} \lambda_{\max}(\mlvec{K}(\varytheta)).
\end{align}
The positive semi-definiteness of $\varyK$ suggests that
$\lambda_{\max}(\varyK) \leq \tr(\varyK)$. We now proceed to bound
$\tr(\varyK)$. Since the eigenvalues of $\qnnmeasure$ and $\mlvec{M}(\mlvec\theta)$ all lie
in $\{\pm 1\}$, $\mlvec{M}(\varytheta)$ decomposes into the difference of to
projections, $\mlvec{\Pi}_{+}(\varytheta)$ and $\mlvec{\Pi}_{-}(\varytheta)$,
projecting onto the subspaces associated with eigenvalues of $+1$ and $-1$
respectively. When $\hat{y}_{j}$ approaches $y_{j}$, the input state
$\mlvec{\rho}_{j}$ lies almost completely in one of the eigen-subspaces, leading to a
vanishing commutator $\compI[\varyM, \mlvec{\rho}_{j}]$ such that
$K_{jj}(\varytheta)$ approaches zero:

Let $\mlvec{v}_{j}$ be the statevector representation of the pure state
$\mlvec{\rho}_{j}$, such that
$\mlvec{\rho}_{j}=\mlvec{v}_{j}\mlvec{v}_{j}^{\dagger}$. Vector $\mlvec{v}_{j}$
decomposes into the components within the positive and negative
eigen-subspaces of $\varyM$:
$\mlvec{v}_{j} = \mlvec{u}_{j}(\varytheta) + \mlvec{w}_{j}(\varytheta)$, where
$\mlvec{u}_{j}(\varytheta) = \mlvec{\Pi}_{+}(\varytheta)\mlvec{v}_{j}$ and $\mlvec{w}_{j}(\varytheta) = \mlvec{\Pi}_{-}(\varytheta)\mlvec{v}_{j}$.
In the following we omit the arguments $\varytheta$ in $\mlvec{u}_{j}$ and $\mlvec{v}_{j}$ for succinctness, but the time dependence is to be implicitly understood.
The commutator between the parameterized measurement and the input state can be
written as
$[\varyM, \mlvec{\rho}_{j}] = 2(\mlvec{u}_{j}\mlvec{w}_{j}^{\dagger}-\mlvec{w}_{j}\mlvec{u}_{j}^{\dagger})$. Therefore
  \begin{align}
    |\tr(\compI[\mlvec{M},\mlvec{\rho}_{j}]\tildeH{l})| \le 4\opnorm{\tildeH{l}}\norm{\mlvec{u}_{j}}\norm{\mlvec{w}_{j}}.
  \end{align}
  Assume without loss of generality that the $j$-{th} label $y_{j}$ is $+1$. Then
  $\norm{\mlvec{u}_{j}}^{2} + \norm{\mlvec{w}_{j}}^{2} = \norm{\mlvec{v}_{j}}^{2} = 1$
  by definition, and
  $\norm{\mlvec{u}_{j}}^{2} - \norm{\mlvec{w}_{j}}^{2} = \tr(\varyM\mlvec{\rho}_{j}) = y_{j} - r_{j} = 1 - r_{j}$.
  Then
  $\norm{\mlvec{w}_{j}}^{2} = |r_{j}|/2$, and
  $\norm{\mlvec{u}_{j}}^{2}\norm{\mlvec{w}_{j}}^{2} = (1-|r_{j}|/2)|r_{j}|/2$.

  Therefore we have,
  \begin{align}
    K_{jj}(\mlvec\theta(t))
    =&\sum_{l=1}^{p}\tr^{2}(\compI[\varyM, \mlvec{\rho}_{j}]\mlvec{H}_{l})\\
    \leq& 16\sum_{l=1}^{p}\opnorm{\mlvec{H}_{l}}^{2}\frac{|r_{j}|}{2}(1-\frac{|r_{j}|}{2})\\
    \leq& 16\sum_{l=1}^{p}\opnorm{\mlvec{H}_{l}}^{2}\frac{|r_{j}|}{2}
  \end{align}
  As a result
  \begin{align}
    & \frac{1}{L(\varytheta)}\frac{d\varyl}{dt}\\
    & \geq -\frac{2\eta}{m} \tr(\mlvec{K}(\varytheta))\ge -\frac{2\eta}{m} \sum_{i=1}^{m}K_{ii}\\
    & \geq -\frac{16\eta}{m}\sum_{l=1}^{p}\opnorm{\mlvec{H}_{l}}^{2}\sum_{i=1}^{m}|r_{j}|\\
    & \ge - 16\sqrt{2} \eta \sum_{l=1}^{p}\opnorm{\mlvec{H}_{l}}^{2} \sqrt{L(\varytheta)}\\
    & = - 16\sqrt{2} \eta \sum_{l=1}^{p}\opnorm{\generatorH{l}}^{2} \sqrt{L(\varytheta)}.
  \end{align}
  Here we use the fact that
  $\sum_{j=1}^{m}|r_{j}|\leq \sqrt{m} \sqrt{\varyr^{T}\varyr} = \sqrt{2m^{2}\varyl}$.

  The theorem statement follows directly by integrating the inequality above:
  \begin{align}
    &\varyl^{-\frac{3}{2}}d\varyl \geq -24 \eta\sum_{l=1}^{p}\opnorm{\generatorH{l}}^{2} dt\\
    \implies& -2 d(\varyl^{-1/2}) \geq -24\eta\sum_{l=1}^{p}\opnorm{\generatorH{l}}^{2}dt\\
    \implies& L(\mlvec\theta(T))^{-\frac{1}{2}}-L(\mlvec\theta(0))^{-\frac{1}{2}} \leq 12\eta\sum_{l=1}^{p}\opnorm{\generatorH{l}}^{2}T\\
    \implies& L(\mlvec\theta(T))^{-\frac{1}{2}}-c_{0}\leq c_{1}T
  \end{align}
\end{proof}
Note that the same ``at most sublinear convergence'' holds for a measurement
$\mathbf{M}_{0}$ such that $\mathbf{M}_0 = \boldsymbol{\Pi}_+-\boldsymbol{\Pi}_-$ and $\boldsymbol{\Pi}_+ + \boldsymbol{\Pi}_- + \boldsymbol{\Pi}_0 = \mathbf{I}$ for some non-zero projection $\boldsymbol{\Pi}_0$. The proof still holds with the following modification: define $s_j:=\|\mathbf{u}_j\|^2 + \|\mathbf{w}_j\|^2 \leq 1$, we have
 \begin{align*}
   &\|\mathbf{u}_j\|^2\cdot\|\mathbf{w}_j\|^2\\
   =& (\frac{s_j-y_j+r_j}{2})(\frac{s_j+y_j-r_j}{2})\\
   =& \frac{s_j^2-(y_j-r_j)^2}{4}\\
   \leq& \frac{1-(y_j-r_j)^2}{4}\\
   =& \frac{(1-y_j)^2+2y_jr_j - r^2_j}{4}\\
   =& \frac{y_jr_j}{2} - \frac{r^2_j}{4}\leq \frac{y_jr_j}{2}\\
   = & \frac{|r_j|}{2}
 \end{align*}
 The last equality follows from the fact that  $r_j \geq 0$ (resp. $r_j\leq 0$) for $y_j=1$ (resp. $y_j=-1$).


\section{Proofs for the asymptotic dynamics}
\label{sec:app_prooffast}
\subsection{Proof of Lemma~\ref{lm:resid_dyn_decomp}}
\label{subsec:lm_resid_dyn_decomp}
\decompresid*
Throughout the proof, we make use of the following notations. Let $\mathcal{H}$ be a $d$-dimensional Hilbert space, and let $\{\mlvec{e}_a\}_{a\in[d]}$ be a basis of $\mathcal{H}$. We use $\mlvec{I}_{d\times d}$ denote the identity matrix $\sum_{a\in[d]}\mlvec{e}_a\mlvec{e}_a^\dagger$.
We use $\otimes$ for kronecker products on vectors, matrices and Hilbert spaces.
For the $d^2\times d^2$-dimensional product space $\mathcal{H}\otimes \mathcal{H}$, let $\mlvec{W}_{d^2\times d^2}$ denote the swap matrix $\sum_{a,b\in[d]}\mlvec{e}_a\mlvec{e}_b^\dagger\otimes \mlvec{e}_b\mlvec{e}_a^\dagger$. 

We will also make use of the well-known integration formula with respect to the haar measure over $d$-dimensional unitaries (see e.g. \citet{collins2006integration} for more details). 
\begin{proof}
As proven in Lemma~\ref{lm:qnndynamics}, we track the
dynamics of the parameterized measurement $\mlvec{M}(\mlvec\theta)$:
\begin{align}
  \frac{d\mlvec{M}(\mlvec\theta)}{dt}
  = &\sum_{l=1}^{p}\frac{d\theta_{l}}{dt} \cdot \frac{\partial \mlvec{M}(\mlvec\theta)}{\partial \theta_{l}}\\
  = &\sum_{l=1}^{p}(-\eta)\tr\big(\imagi[\mlvec{H}_{l}, \mlvec{M}(\mlvec\theta)] \nabla_{\mlvec{M}}L \big) \imagi [\mlvec{H}_{l}, \mlvec{M}(\mlvec\theta)]\\
  = &\sum_{l=1}^{p}\eta\tr\big(\imagi[\nabla_{\mlvec{M}}L, \mlvec{M}(\mlvec\theta)] \mlvec{H}_{l} \big) \imagi [\mlvec{H}_{l}, \mlvec{M}(\mlvec\theta)]\\
  = &\sum_{l=1}^{p}\eta \imagi [\tr\big(\imagi[\nabla_{\mlvec{M}}L, \mlvec{M}(\mlvec\theta)] \mlvec{H}_{l} \big)\mlvec{H}_{l}, \mlvec{M}(\mlvec\theta)]\\
  = &\sum_{l=1}^{p}\eta \imagi [\tr_{1}\big((\imagi[\nabla_{\mlvec{M}}L, \mlvec{M}(\mlvec\theta)]\otimes\mlvec{I}) (\mlvec{H}_{l}\otimes\mlvec{H}_{l})\big), \mlvec{M}(\mlvec\theta)].
\end{align}
Here $\tr_{{1}}(\cdot)$ is the partial trace: Given the product of two Hilbert spaces $\mathcal{H}_1 \otimes \mathcal{H}_2$, the partial trace on the first Hilbert space is a linear mapping such that
\begin{align*}
    \tr_1\big(\mathbf{A}\otimes \mathbf{B}\big) = \tr(\mathbf{A})\mathbf{B}
\end{align*}
for any Hermitians $\mathbf{A}$ and $\mathbf{B}$ on the spaces $\mathcal{H}_1$ and $\mathcal{H}_2$. By linearity,
  \begin{align*}
    \tr_1\big(\sum_l\mathbf{A}_l\otimes \mathbf{B}_l\big) = \sum_l\tr(\mathbf{A}_l)\mathbf{B}_l
\end{align*}
for any Hermitians $\{\mathbf{A}_l\}$ and $\{\mathbf{B}_l\}$ on the spaces $\mathcal{H}_1$ and $\mathcal{H}_2$.

Let $Z(\mlvec{H},d)$ denote the ratio $\frac{\tr(\mlvec{H}^{2})}{d^{2}-1}$, the
learning rate $\eta$ can be expressed as $\frac{m}{pZ(\mlvec{H},d)}$. Let $\mlvec{Y}(\mlvec\theta(t))$ denote the normalized
  $d^{2}\times d^{2}$-complex matrix
  $\frac{1}{pZ(\mlvec{H},d)}\sum_{l=1}^{p}\mlvec{H}_{l}\otimes\mlvec{H}_{l}$ for $\mlvec{H}_l$ defined in Lemma~\ref{lm:qnndynamics}
  and let $\mlvec{Y}^{\star}$ denote
  $\mlvec{W}_{d^{2}\times d^{2}} - \frac{1}{d} \mlvec{I}_{d^{2}\times d^{2}}$,
  the asymptotic version of $\mlvec{Y}$. We can accordingly decompose the dynamics into the
  asymptotic dynamics and the deviation (perturbation) from the asymptotic
  dynamics:
  \begin{align}
    \frac{d\mlvec{M}(\mlvec\theta)}{dt}
    = &(\eta p Z(\mlvec{H},d)) \imagi [\tr_{1}\big((\imagi[\nabla_{\mlvec{M}}L, \mlvec{M}(\mlvec\theta)]\otimes\mlvec{I}) \mlvec{Y}\big), \mlvec{M}(\mlvec\theta)]\\
    =
    & (\eta p Z(\mlvec{H},d)) \imagi [\tr_{1}\big((\imagi[\nabla_{\mlvec{M}}L, \mlvec{M}(\mlvec\theta)]\otimes\mlvec{I}) \mlvec{Y}^{\star}\big), \mlvec{M}(\mlvec\theta)]\\
    & + (\eta p Z(\mlvec{H},d)) \imagi [\tr_{1}\big((\imagi[\nabla_{\mlvec{M}}L, \mlvec{M}(\mlvec\theta)]\otimes\mlvec{I}) (\mlvec{Y}(\mlvec\theta(t)) - \mlvec{Y}^{\star})\big), \mlvec{M}(\mlvec\theta)]\\
    =
    & (\eta p Z(\mlvec{H},d)) \imagi [(\imagi[\nabla_{\mlvec{M}}L, \mlvec{M}(\mlvec\theta)], \mlvec{M}(\mlvec\theta)]\\
    & + (\eta p Z(\mlvec{H},d)) \imagi [\tr_{1}\big((\imagi[\nabla_{\mlvec{M}}L, \mlvec{M}(\mlvec\theta)]\otimes\mlvec{I}) (\mlvec{Y}(\mlvec\theta(t)) - \mlvec{Y}^{\star})\big), \mlvec{M}(\mlvec\theta)]\\
    =
      & - (\eta p Z(\mlvec{H},d)) [\mlvec{M}(\mlvec\theta), [\mlvec{M}(\mlvec\theta), \nabla_{\mlvec{M}}L]]\\
      & -(\eta p Z(\mlvec{H},d)) [\mlvec{M}(\mlvec\theta), \tr_{1}\big(([\mlvec{M}(\mlvec\theta), \nabla_{\mlvec{M}}L]\otimes\mlvec{I}) (\mlvec{Y}(\mlvec\theta(t)) - \mlvec{Y}^{\star})\big)]
  \end{align}
  Plugging in that
  $\nabla_{\mlvec{M}}L(\mlvec{M}(\mlvec\theta)) = -\frac{1}{m}\sum_{i=1}^{m}r_{i}\mlvec\rho_{i}$
  with the residual
  $r_{i}:=y_{i}-\hat{y}_{i} = \tr(\mlvec{M}(\mlvec\theta)\rho_{i}) - y_{i}$:
  \begin{align}
    \frac{d\mlvec{M}(\mlvec\theta)}{dt}
      =& \sum_{j=1}^{m} r_{j} [\mlvec{M}(\mlvec\theta), [\mlvec{M}(\mlvec\theta), \mlvec{\rho}_{j}]]\\
      +& \sum_{j=1}^{m}r_{j} [\mlvec{M}(\mlvec\theta), \tr_{1}\big(([\mlvec{M}(\mlvec\theta), \mlvec{\rho}_{j}]\otimes\mlvec{I}) (\mlvec{Y}(\mlvec\theta(t)) - \mlvec{Y}^{\star})\big)]
  \end{align}
  Trace after multiplying $\mlvec{\rho}_{i}$ on both sides:
  \begin{align}
    \frac{dr_{i}}{dt} = -\tr(\mlvec{\rho}_{i}\frac{d\mlvec{M}(\mlvec\theta)}{dt}) =
      & - \sum_{j=1}^{m} r_{j} \tr\big(\mlvec{\rho}_{i} [\mlvec{M}(\mlvec\theta), [\mlvec{M}(\mlvec\theta), \mlvec{\rho}_{j}]]\big)\\
      & - \sum_{j=1}^{m}r_{j} \tr\big(\mlvec{\rho}_{i} [\mlvec{M}(\mlvec\theta), \tr_{1}\big(([\mlvec{M}(\mlvec\theta), \mlvec{\rho}_{j}]\otimes\mlvec{I}) (\mlvec{Y}(\mlvec\theta(t)) - \mlvec{Y}^{\star})\big)]\big)
  \end{align}
  The lemma follows directly from rearranging: for the first term,
  \begin{align}
    &-\sum_{j=1}^{m}r_{j}\tr(\mlvec{\rho}_{i}[\mlvec{M}(\mlvec\theta), [\mlvec{M}(\mlvec\theta), \mlvec{\rho}_{j}]])\\
    =&-\sum_{j=1}^{m}r_{j}\tr([\mlvec{\rho}_{i}, \mlvec{M}(\mlvec\theta)][\mlvec{M}(\mlvec\theta), \mlvec{\rho}_{j}])\\
    =&-\sum_{j=1}^{m}r_{j}\tr(\imagi[\mlvec{M}(\mlvec\theta),\mlvec{\rho}_{i}]\imagi[\mlvec{M}(\mlvec\theta), \mlvec{\rho}_{j}]).
  \end{align}
  For the second term,
  \begin{align}
    & - \sum_{j=1}^{m}r_{j} \tr\big(\mlvec{\rho}_{i} [\mlvec{M}(\mlvec\theta), \tr_{1}\big(([\mlvec{M}(\mlvec\theta), \mlvec{\rho}_{j}]\otimes\mlvec{I}) (\mlvec{Y}(\mlvec\theta(t)) - \mlvec{Y}^{\star})\big)]\big)\\
    =& - \sum_{j=1}^{m}r_{j} \tr\big( \imagi[\mlvec{M}(\mlvec\theta), \mlvec{\rho}_{i}] \tr_{1}\big((\imagi[\mlvec{M}(\mlvec\theta), \mlvec{\rho}_{j}]\otimes\mlvec{I}) (\mlvec{Y}(\mlvec\theta(t)) - \mlvec{Y}^{\star})\big)\big)\\
    =& - \sum_{j=1}^{m}r_{j} \tr\big((\mlvec{I}\otimes\imagi[\mlvec{M}(\mlvec\theta), \mlvec{\rho}_{i}]) (\imagi[\mlvec{M}(\mlvec\theta), \mlvec{\rho}_{j}]\otimes\mlvec{I}) (\mlvec{Y}(\mlvec\theta(t)) - \mlvec{Y}^{\star})\big)\\
    =& - \sum_{j=1}^{m}r_{j} \tr\big((\imagi[\mlvec{M}(\mlvec\theta), \mlvec{\rho}_{j}]\otimes\imagi[\mlvec{M}(\mlvec\theta), \mlvec{\rho}_{i}]) (\mlvec{Y}(\mlvec\theta(t)) - \mlvec{Y}^{\star})\big)\\
    =& - \sum_{j=1}^{m}r_{j} \tr\big((\imagi[\mlvec{M}(\mlvec\theta), \mlvec{\rho}_{i}]\otimes\imagi[\mlvec{M}(\mlvec\theta), \mlvec{\rho}_{j}]) (\mlvec{Y}(\mlvec\theta(t)) - \mlvec{Y}^{\star})\big)
  \end{align}
  The last equality follows from the fact that $\mlvec{Y}$ and
  $\mlvec{Y}^{\star}$ are invariant under the swapping of spaces.
  The lemma follows by identifying the matrix $\Delta(t)$ with $\mlvec{Y}(\mlvec\theta(t))-\mlvec{Y}^\star$.
\end{proof}

\subsection{Proof of Theorem~\ref{thm:qnn-mse-linear}}
\label{subsec:thm_qnn_mse_linear}
\qnnmselinear*
\begin{proof}
In Lemma~\ref{lm:resid_dyn_decomp}, we decompose the QNN dynamics into the
asymptotic term and the perturbation term depending on
$\Delta(t) = \mlvec{Y}(\mlvec\theta(t)) - \mlvec{Y}^{\star}$. We now show that the use of
the terms
``asymptotic'' and ``perturbation'' are exact, by showing that
$\mlvec{Y}(\mlvec\theta(t)) - \mlvec{Y}^{\star}$ vanishes as
$p\rightarrow \infty$. We make use of the characterization of a similarly-defined quantity in \cite{ourvqe2022}, restated as
Lemma~\ref{lm:concentration_init} and \ref{lm:concentration_train}, such that
for sufficiently large $p$,
$\opnorm{\mlvec{Y}(\mlvec\theta(t)) - \mlvec{Y}^{\star}}$ vanishes for all $t$
with high probability over the randomness in $\{\mlvec{U}_{l}\}_{l=0}^{p}$.
Recall that the perturbation term $\Kpert$ is defined as
\begin{align}
  (\Kpert(t))_{ij} &:=
      \tr\big((\imagi[\mlvec{M}(\mlvec\theta), \mlvec{\rho}_{i}]\otimes\imagi[\mlvec{M}(\mlvec\theta), \mlvec{\rho}_{j}]) (\mlvec{Y}(\mlvec\theta(t)) - \mlvec{Y}^{\star})\big).
\end{align}
By choosing sufficiently large $p$, we have $\opnorm{\Kpert(t)} \leq C_{0} / 10$
and therefore the loss function converging to zero at a rate $\geq C_{0} / 2$.
\end{proof}
\begin{lemma}[Concentration at initialization, adapted from Lemma 3.4 in \cite{ourvqe2022}]
  \label{lm:concentration_init}
     Over the randomness of ansatz initialization (i.e. for
     $\{\mlvec{U}_{l}\}_{l=1}^{p}$ sampled $i.i.d.$ with respect to
     the Haar measure), for any initial $\mlvec\theta(0)$, with probability $1 - \delta$:
     \begin{align}
     \opnorm{\YY(\mlvec\theta(0)) - \YY^{\star}} \leq \frac{1}{\sqrt{p}}\cdot \frac{2\opnorm{\mlvec{H}}^{2}}{Z}\sqrt{2\log\frac{d^{2}}{\delta}}.
     \end{align}
\end{lemma}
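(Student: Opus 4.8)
The plan is to view $\YY(\mlvec\theta(0)) = \frac{1}{pZ}\sum_{l=1}^{p}\mlvec{H}_{l}\ot\mlvec{H}_{l}$ as a normalized sum of $p$ Hermitian matrices on $\mathcal{H}\ot\mathcal{H}$, each of which is (marginally, and conditionally on the past) a uniformly random unitary conjugate of $\mlvec{H}\ot\mlvec{H}$ with mean $Z\YY^{\star}$, and then to apply a concentration bound for Hermitian matrix martingales. First I would record the one-shot second moment: by Lemma~\ref{lm:qnndynamics} with $\generatorH{l}=\mlvec{H}$ as in Line~(\ref{eq:partial-ansatz}), $\mlvec{H}_{l}=\mlvec{W}_{l}^{\dagger}\mlvec{H}\mlvec{W}_{l}$ with $\mlvec{W}_{l}:=\mlvec{U}_{1:l-1}(\mlvec\theta(0))\mlvec{U}_{0}$. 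For a Haar-random $\mlvec{V}\in SU(d)$ and traceless Hermitian $\mlvec{H}$, $\EXP\big[(\mlvec{V}^{\dagger}\mlvec{H}\mlvec{V})\ot(\mlvec{V}^{\dagger}\mlvec{H}\mlvec{V})\big]$ lies in the commutant of $\{\mlvec{W}\ot\mlvec{W}\}$ (the $SU(d)$ and $U(d)$ twirls agree on traceless operators), which is $\mathrm{span}\{\mlvec{I}_{d^{2}\times d^{2}},\mlvec{W}_{d^{2}\times d^{2}}\}$; matching traces against $\mlvec{I}$ and against the swap operator $\mlvec{W}_{d^{2}\times d^{2}}$, and using $\tr(\mlvec{H})=0$ together with $\tr(\mlvec{H}^{2})=(d^{2}-1)Z$, pins this expectation down to $Z\big(\mlvec{W}_{d^{2}\times d^{2}}-\tfrac1d\mlvec{I}_{d^{2}\times d^{2}}\big)=Z\YY^{\star}$. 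This is also where the normalization $Z=\tr(\mlvec{H}^{2})/(d^{2}-1)$ enters, and it makes the bound uniform in $\mlvec\theta(0)$.

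Next I would expose the Haar factors one at a time to obtain a martingale. Writing $\mlvec{W}_{l}=\mlvec{U}_{l-1}\exp(-\imagi\theta_{l-1}\mlvec{H})\mlvec{W}_{l-1}$ and setting $\mathcal{G}_{l}:=\sigma(\mlvec{U}_{0},\dots,\mlvec{U}_{l-1})$ (with $\mathcal{G}_{0}$ trivial), the unitary $\mlvec{U}_{l-1}$ is Haar on $SU(d)$ and independent of $\mathcal{G}_{l-1}$ while $\exp(-\imagi\theta_{l-1}\mlvec{H})\mlvec{W}_{l-1}\in SU(d)$ is $\mathcal{G}_{l-1}$-measurable, so by left invariance of the Haar measure the conditional law of $\mlvec{W}_{l}$ given $\mathcal{G}_{l-1}$ is Haar; hence $\EXP[\mlvec{H}_{l}\ot\mlvec{H}_{l}\mid\mathcal{G}_{l-1}]=Z\YY^{\star}$. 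Consequently $\mlvec{X}_{l}:=\tfrac1Z\big(\mlvec{H}_{l}\ot\mlvec{H}_{l}-Z\YY^{\star}\big)$ is a Hermitian, $\mathcal{G}_{l}$-measurable matrix with $\EXP[\mlvec{X}_{l}\mid\mathcal{G}_{l-1}]=\mlvec{0}$, so $\mlvec{S}_{k}:=\sum_{l\le k}\mlvec{X}_{l}$ is a $d^{2}\times d^{2}$ Hermitian matrix martingale and $\YY(\mlvec\theta(0))-\YY^{\star}=\tfrac1p\mlvec{S}_{p}$.

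Finally I would bound the increments and invoke the matrix Azuma inequality, as in the proof of Lemma~3.4 of \cite{ourvqe2022}. Since $\mlvec{H}_{l}$ is unitarily conjugate to $\mlvec{H}$ we have $\opnorm{\mlvec{H}_{l}\ot\mlvec{H}_{l}}=\opnorm{\mlvec{H}}^{2}$, while $\opnorm{Z\YY^{\star}}=Z(1+\tfrac1d)$ and $Z\le\tfrac{d}{d^{2}-1}\opnorm{\mlvec{H}}^{2}$, so $\opnorm{\mlvec{X}_{l}}\le\tfrac{2\opnorm{\mlvec{H}}^{2}}{Z}=:b$ deterministically (for $d\ge 2$), hence $\mlvec{X}_{l}^{2}\preceq b^{2}\mlvec{I}$. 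Matrix Azuma applied to $\pm\mlvec{S}_{p}$ with variance proxy $\opnorm{\sum_{l=1}^{p}b^{2}\mlvec{I}}=pb^{2}$ gives $\Prob\big[\opnorm{\mlvec{S}_{p}}\ge t\big]\le d^{2}\exp(-c\,t^{2}/(pb^{2}))$ for a suitable constant $c$; setting the right-hand side equal to $\delta$, solving for $t$, and dividing by $p$ yields $\opnorm{\YY(\mlvec\theta(0))-\YY^{\star}}\le\tfrac{1}{\sqrt p}\cdot\tfrac{2\opnorm{\mlvec{H}}^{2}}{Z}\sqrt{2\log\tfrac{d^{2}}{\delta}}$, the claimed bound.

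I expect the one genuinely non-mechanical step to be the martingale decomposition — recognizing that revealing a single fresh Haar unitary $\mlvec{U}_{l-1}$ re-randomizes $\mlvec{H}_{l}$ into a uniformly random unitary conjugate of $\mlvec{H}$ regardless of the past and of $\mlvec\theta(0)$, which is exactly what kills the conditional means and lets the strong correlations among $\{\mlvec{H}_{l}\}_{l=1}^{p}$ be absorbed by an off-the-shelf matrix concentration bound. The remaining pieces — the single-unitary twirl, the deterministic operator-norm bound on the increments, and the substitution — are routine, with the only mildly delicate point being to extract the precise constant factors inside the logarithm rather than a cruder version of them.
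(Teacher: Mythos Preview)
Your proposal is correct and follows essentially the same approach as the paper: both define the centered increments $\mlvec{X}_{l}$, compute their mean via the Haar second-moment formula, bound their operator norm by $2\opnorm{\mlvec{H}}^{2}/Z$, and apply a matrix concentration inequality. The only difference is that you invoke matrix Azuma for a martingale, whereas the paper observes that the fresh Haar unitary at each layer makes the $\{\mlvec{X}_{l}\}$ genuinely i.i.d.\ (your own conditional-law argument actually establishes this stronger fact) and applies matrix Hoeffding directly.
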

\begin{proof}
Define
\begin{align}
\mlvec{X}_{l}:=\frac{1}{Z(\mlvec{H},d)}\big(\mlvec{U}_{0:l-1}(\mlvec\theta(0))^{\dagger}\mlvec{H}\mlvec{U}^{\dagger}_{0:l-1}(\mlvec\theta(0))\big)^{\otimes 2} - \mlvec{Y}^{\star}.
\end{align}
By straight-forward calculation (e.g. using results in \citet{collins2006integration}) we know that $X_{l}$ is centered (i.e
$\EXP[X_{l}] = 0$). The set $\{\mlvec{X}_{l}\}$ can be viewed as independent random matrices as the Haar
random unitary removes all the correlation. The matrix on the left-hand side can
therefore be expressed as the arithmetic average of $p$ independent random matrices.
The square of $\mlvec{X}_{l}$ is bounded in operator norm:
\begin{align}
  \opnorm{\mlvec{X}_{l}^{2}} = \opnorm{\mlvec{X}_{l}}^{2}\leq (\frac{\opnorm{\mlvec{H}}^{2}}{Z} + \frac{d+1}{d})^{2} \leq (\frac{2\opnorm{\mlvec{H}}^{2}}{Z(\mlvec{H},d)})^{2}
 \end{align}
where the second inequality follows from the fact that the ratio $g_{1} = \opnorm{\HH}^{2} / \tr(\HH^{2})$ satisfies that $1 \geq g_{1} \geq 1/d$.
By Hoeffding's inequality(\cite{tropp2012user}, Thm 1.3), with probability
$\geq 1-\delta$,
\begin{align}
  \opnorm{\YY(\mlvec\theta(0)) - \YY^{\star}} \leq \frac{1}{\sqrt{p}}\cdot \frac{2\opnorm{\mlvec{H}}^{2}}{Z(\mlvec{H},d)}\sqrt{\log\frac{2d^{2}}{\delta}}.
\end{align}
\end{proof}

As we pointed out in the main body, a vanishing perturbation term at initialization is not
sufficient to guarantee the term remain perturbative throughout the training. We
now show in Lemma~\ref{lm:concentration_train} that
$\mlvec{Y}(\mlvec\theta(t)) - \mlvec{Y}^{\star}$ remain small during training by
showing $\mlvec{Y}(\mlvec\theta(t)) - \mlvec{Y}(\mlvec(0))$ vanishes in the
limit $p \rightarrow \infty$. But before that, we show that, while the QNN
predictions changes much during training, the change in the parameters measured
in $\ell_{2}$- or $\ell_{\infty}$-norm
($\|\mlvec{\theta}(t) - \mlvec{\theta}(0)\|_{2}$ or
$\|\mlvec{\theta}(t) - \mlvec{\theta}(0)\|_{\infty}$) vanishes as
$p \rightarrow \infty$ during the training of QNN:
\begin{restatable}[Slow-varying $\theta$ in QNNs]{lemma}{qnnslowtheta}
  \label{lm:qnn-slow-theta}
  Suppose that under learning rate $\eta=\frac{m}{p Z(\mlvec{H},d)}$, for all $0 \le t \le T$,
  the loss function $L(\mlvec\theta(t)) \leq  L(\mlvec\theta(0)) \exp(-at)$ for
  some constant $a$,
  then for all $0 \le t_1,t_2 \le T$:
  \begin{align}
    \|\mlvec{\theta}(t_2) - \mlvec{\theta}(t_1)\|_{\infty}
    &\leq \frac{1}{p}\frac{\sqrt{2}m\fronorm{\mlvec{H}}\fronorm{\mlvec{M}}\sqrt{L(\mlvec\theta(0))}}{Z}|t_{1} - t_{2}|,\\
    \|\mlvec{\theta}(t_2) - \mlvec{\theta}(t_1)\|_{2}
    &\leq \frac{1}{\sqrt{p}}\frac{\sqrt{2}m\fronorm{\mlvec{H}}\fronorm{\mlvec{M}}\sqrt{L(\mlvec\theta(0))}}{Z}|t_{1} - t_{2}|.
  \end{align}
\end{restatable}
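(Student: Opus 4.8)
The plan is to bound $\|d\mlvec{\theta}/dt\|_{\infty}$ and $\|d\mlvec{\theta}/dt\|_{2}$ uniformly over $t\in[0,T]$ and then integrate. Under the gradient flow $d\theta_{l}/dt=-\eta\,\partial L/\partial\theta_{l}$, and the proof of Lemma~\ref{lm:qnndynamics} already identifies $\partial L/\partial\theta_{l}=\compI\tr(\nabla_{\mlvec{M}}L\,[\tildeH{l},\mlvec{M}(\mlvec{\theta})])$ together with $\nabla_{\mlvec{M}}L=-\frac{1}{m}\sum_{j=1}^{m}r_{j}\mlvec{\rho}_{j}$. Since $\mlvec{M}(\mlvec{\theta})=\mlvec{U}^{\dagger}(\mlvec{\theta})\qnnmeasure\mlvec{U}(\mlvec{\theta})$ is unitarily equivalent to $\qnnmeasure$, we have $\fronorm{\mlvec{M}(\mlvec{\theta})}=\fronorm{\mlvec{M}}$, and for the periodic ansatz each $\tildeH{l}$ is (by its definition in Lemma~\ref{lm:qnndynamics}, with all $\generatorH{l}=\mlvec{H}$) a unitary conjugate of $\mlvec{H}$, so $\fronorm{\tildeH{l}}=\fronorm{\mlvec{H}}$. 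By cyclicity of the trace, $|d\theta_{l}/dt|=\frac{\eta}{m}\,|\sum_{j=1}^{m}r_{j}\,\tr(\compI[\mlvec{M}(\mlvec{\theta}),\mlvec{\rho}_{j}]\,\tildeH{l})|$.

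The core estimate is the per-sample, per-coordinate bound $|\tr(\compI[\mlvec{M}(\mlvec{\theta}),\mlvec{\rho}_{j}]\,\tildeH{l})|\le\fronorm{[\mlvec{M}(\mlvec{\theta}),\mlvec{\rho}_{j}]}\,\fronorm{\tildeH{l}}\le\fronorm{\mlvec{M}}\,\fronorm{\mlvec{H}}$: the first inequality is Cauchy--Schwarz in the Hilbert--Schmidt inner product, and the second uses $\fronorm{[\mlvec{M}(\mlvec{\theta}),\mlvec{\rho}_{j}]}^{2}=2\,\mathrm{Var}_{\mlvec{\rho}_{j}}(\mlvec{M}(\mlvec{\theta}))$ for a pure state, together with the elementary fact that the variance of a Hermitian operator in any state is at most $\frac{1}{4}(\lambda_{\max}-\lambda_{\min})^{2}\le\frac{1}{2}(\lambda_{\max}^{2}+\lambda_{\min}^{2})\le\frac{1}{2}\tr(\mlvec{M}^{2})=\frac{1}{2}\fronorm{\mlvec{M}}^{2}$. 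Summing over the $m$ samples by Cauchy--Schwarz, $\sum_{j}|r_{j}|\le\sqrt{m}\,\|\mlvec{r}\|_{2}=\sqrt{m}\cdot\sqrt{2mL(\mlvec{\theta}(t))}=m\sqrt{2L(\mlvec{\theta}(t))}$, so $|d\theta_{l}/dt|\le\sqrt{2}\,\eta\,\fronorm{\mlvec{H}}\,\fronorm{\mlvec{M}}\,\sqrt{L(\mlvec{\theta}(t))}$ for every $l$; substituting $\eta=\frac{m}{pZ}$, using the hypothesis $L(\mlvec{\theta}(t))\le L(\mlvec{\theta}(0))$, and integrating $|\theta_{l}(t_{2})-\theta_{l}(t_{1})|\le\int_{t_{1}}^{t_{2}}|d\theta_{l}/dt|\,dt$ gives the $\ell_{\infty}$ bound. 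For the $\ell_{2}$ bound, write $d\mlvec{\theta}/dt=\frac{\eta}{m}\mlvec{A}\mlvec{r}$ with $A_{lj}:=\tr(\compI[\mlvec{M}(\mlvec{\theta}),\mlvec{\rho}_{j}]\,\tildeH{l})$; then $\|d\mlvec{\theta}/dt\|_{2}\le\frac{\eta}{m}\fronorm{\mlvec{A}}\,\|\mlvec{r}\|_{2}$, and $\fronorm{\mlvec{A}}^{2}=\sum_{l,j}A_{lj}^{2}\le pm\,\fronorm{\mlvec{H}}^{2}\fronorm{\mlvec{M}}^{2}$, so $\|d\mlvec{\theta}/dt\|_{2}\le\frac{\sqrt{2}\,m\fronorm{\mlvec{H}}\fronorm{\mlvec{M}}\sqrt{L(\mlvec{\theta}(t))}}{\sqrt{p}\,Z}$; integrating again yields the second inequality.

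The argument is short and uses no quantum input beyond Lemma~\ref{lm:qnndynamics}; the two places that need care are: (i) obtaining the Frobenius-norm per-term estimate with the tight constant --- a crude operator-norm bound such as $\fronorm{[\mlvec{M},\mlvec{\rho}_{j}]}\le 2\fronorm{\mlvec{M}}$ would cost an extra factor of $2$, so one must use the variance identity and the unitary invariance of $\fronorm{\cdot}$ --- and (ii) tracking the $p$-dependence correctly: the $1/p$ in the $\ell_{\infty}$ bound is precisely the $1/p$ in the learning rate $\eta=m/(pZ)$, while the $1/\sqrt{p}$ in the $\ell_{2}$ bound is this $1/p$ times the $\sqrt{p}$ produced by summing the $p$ gradient coordinates in quadrature. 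These scalings are exactly what makes $\mlvec{\theta}(t)$ slowly varying, and are what later allows one to control $\mlvec{Y}(\mlvec{\theta}(t))-\mlvec{Y}(\mlvec{\theta}(0))$ in Lemma~\ref{lm:concentration_train}.
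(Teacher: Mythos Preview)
Your proposal is correct and follows essentially the same strategy as the paper's proof: bound each $|d\theta_l/dt|$ via Cauchy--Schwarz in the Hilbert--Schmidt inner product, convert $\|\mlvec{r}\|_2$ to $\sqrt{2mL}$, substitute $\eta=m/(pZ)$, and integrate (the paper then gets the $\ell_2$ bound from the $\ell_\infty$ bound by the trivial $\|\cdot\|_2\le\sqrt{p}\,\|\cdot\|_\infty$, which is equivalent to your $\fronorm{\mlvec{A}}$ argument). The only substantive difference is the core per-term estimate: the paper writes $\sum_j a_j^2=\tr\big((\compI[\mlvec{M},\tildeH{l}])^{\otimes 2}\sum_j\mlvec{\rho}_j^{\otimes 2}\big)$ and bounds $\fronorm{[\mlvec{M},\tildeH{l}]}\le 2\fronorm{\mlvec{M}}\fronorm{\mlvec{H}}$, whereas you place Cauchy--Schwarz on the other factor and use the pure-state variance identity $\fronorm{[\mlvec{M},\mlvec{\rho}_j]}^2=2\,\mathrm{Var}_{\mlvec{\rho}_j}(\mlvec{M})\le\fronorm{\mlvec{M}}^2$ --- a cleaner route that avoids the tensor-product detour and the spare factor of~$2$, at the minor cost of explicitly invoking purity of the input states.
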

\begin{proof}
  We first bound the absolute value of the derivative $\frac{d\theta_{l}}{dt}$:
  \begin{align}
    |\frac{d\theta_{l}}{dt}| = \eta |\frac{\partial L}{\partial \theta_{l}}|
    = \frac{\eta}{2m}|\sum_{i=1}^{m}r_{i}\tr(\imagi[\mlvec{M}(\mlvec\theta), \mlvec{H}_{l}]\mlvec\rho_{i})|.
  \end{align}
  Plugging in $\eta = \frac{m}{p Z}$, we have
  \begin{align}
    |\frac{d\theta_{l}}{dt}| = \frac{1}{2pZ}|\sum_{i=1}^{m}r_{i}\tr(\imagi[\mlvec{M}(\mlvec\theta), \mlvec{H}_{l}]\mlvec\rho_{i})|
    = \frac{1}{2pZ}|\<\mlvec{r}, \mlvec{a}\>|,
  \end{align}
  where the vector $\mlvec{a}$ is defined such that
  $a_{j} = \tr(i[\mlvec{M}(\mlvec\theta), \mlvec{H}_{l}]\mlvec{\rho}_{j})$ for
  $j \in [m]$. The $\ell_{2}$-norm of $\mlvec{a}$
  \begin{align}
    \|\mlvec{a}\|_{2}^{2} &= \sum_{j=1}^{m}\tr^{2}(\imagi [\mlvec{M}, \mlvec{H}_{l}]\mlvec{\rho}_{j})\\
    &= \tr\big((i[\mlvec{M}, \mlvec{H}_{l}])^{\otimes 2} \sum_{j=1}^{m}\mlvec{\rho_{j}}^{\otimes 2}\big)\\
    &\leq \fronorm{(i[\mlvec{M}, \mlvec{H}_{l}])^{\otimes 2}}\fronorm{\sum_{j=1}^{m}\mlvec{\rho_{j}}^{\otimes 2}}\\
    &\leq \fronorm{i[\mlvec{M}, \mlvec{H}_{l}]}^{2}\fronorm{\sum_{j=1}^{m}\mlvec{\rho_{j}}^{\otimes 2}}\\
    &\leq (2\fronorm{\mlvec{M}}\fronorm{\mlvec{H}_{l}})^{2}\sum_{j=1}^{m}\fronorm{\mlvec{\rho_{j}}^{\otimes 2}}\\
    &\leq (2 \fronorm{\mlvec{M}}\fronorm{\mlvec{H}} \sqrt{m})^{2}.
  \end{align}
  Therefore we can bound
  $|\frac{d\theta_{l}}{dt}|$ as
  \begin{align}
    |\frac{d\theta_{l}}{dt}|
    &\leq \frac{1}{2pZ}\|\mlvec{r}\|_{2}\|\mlvec{a}\|_{2}\\
    &\leq \frac{1}{2pZ}\sqrt{2mL(\mlvec\theta(t))} \cdot 2 \fronorm{\mlvec{M}}\fronorm{\mlvec{H}} \sqrt{m}\\
    &= \frac{1}{p}\frac{\sqrt{2} m \fronorm{\mlvec{M}}\fronorm{\mlvec{H}}}{Z} \sqrt{L(\mlvec\theta(t))}\\
    &\leq \frac{1}{p}\frac{\sqrt{2} m \fronorm{\mlvec{M}}\fronorm{\mlvec{H}}}{Z} \sqrt{L(\mlvec\theta(0))} \exp(-at / 2)
  \end{align}
  Hence for all $l\in[p]$:
  \begin{align}
    |\theta_{l}(t_{2}) - \theta_{l}(t_{1})|
    &= |\int_{t_{1}}^{t_{2}}dt d\theta_l(t)/dt| \leq \int_{t_{1}}^{t_{2}}dt |d\theta_l(t)/dt|\\
    &\leq \int_{t_{1}}^{t_{2}}dt \frac{1}{p}\frac{\sqrt{2} m \fronorm{\mlvec{M}}\fronorm{\mlvec{H}}}{Z} \sqrt{L(\mlvec\theta(0))} \exp(-at / 2)\\
    &\leq \frac{2}{a} \cdot \frac{1}{p}\frac{\sqrt{2} m \fronorm{\mlvec{M}}\fronorm{\mlvec{H}}}{Z} \sqrt{L(\mlvec\theta(0))} |\exp(-at_{1} / 2) - \exp(-at_{2}/2)|\\
    &\leq \frac{1}{p}\frac{\sqrt{2} m \fronorm{\mlvec{M}}\fronorm{\mlvec{H}}}{Z} \sqrt{L(\mlvec\theta(0))} |t_{1}-t_{2}|\\
  \end{align}
  The bounds on the $\ell_{2}$- and $\ell_{\infty}$-norm follows from direct computation.
\end{proof}

We are now ready to show $\mlvec{Y}(t_{2}) - \mlvec{Y}(t_{1})$ vanishes as
$p\rightarrow \infty$:
\begin{lemma}[Concentration during training, adapted from Lemma 3.5 in \cite{ourvqe2022}]
  \label{lm:concentration_train}
      Suppose that under learning rate $\eta=\frac{m}{p Z(\mlvec{H},d)}$, for
      all $0 \le t \le T$, the loss function $L(\mlvec\theta(t))$ decreases as $L(\mlvec\theta(0))\exp(-at)$
       then with probability $\geq 1-\delta$, for all $0 \leq t\leq T$:
      $\opnorm{\mlvec{Y}(\mlvec{\theta}(t)) - \mlvec{Y}(\mlvec{\theta}(0))} \leq C_3\cdot\frac{T}{\sqrt{p}}$,
      where $C_3$ is a constant of $T$ and $p$.
\end{lemma}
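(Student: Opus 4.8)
The plan is to control the instantaneous rate of change $\opnorm{\frac{d}{dt}\mlvec{Y}(\mlvec\theta(t))}$ and integrate it over $[0,T]$. Recall from the proof of Lemma~\ref{lm:resid_dyn_decomp} that $\mlvec{Y}(\mlvec\theta(t)) = \frac{1}{pZ(\mlvec{H},d)}\sum_{l=1}^{p}\mlvec{H}_{l}(\mlvec\theta(t))\otimes\mlvec{H}_{l}(\mlvec\theta(t))$ with $\mlvec{H}_{l}(\mlvec\theta) = \mlvec{U}_{0:l-1}^{\dagger}(\mlvec\theta)\mlvec{H}\mlvec{U}_{0:l-1}(\mlvec\theta)$. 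Applying the computation that gave $\partial_{\theta_{l}}\mlvec{M}(\mlvec\theta) = \imagi[\mlvec{H}_{l},\mlvec{M}(\mlvec\theta)]$ in Lemma~\ref{lm:qnndynamics} to the truncated circuit $\mlvec{U}_{0:l-1}$ yields $\partial_{\theta_{k}}\mlvec{H}_{l}(\mlvec\theta) = \imagi[\mlvec{H}_{k},\mlvec{H}_{l}]$ for $k<l$ and $0$ for $k\ge l$, so by the chain and product rules
\begin{align*}
  \frac{d}{dt}\mlvec{Y}(\mlvec\theta(t)) = \sum_{k=1}^{p}\frac{d\theta_{k}}{dt}\,\frac{\partial\mlvec{Y}}{\partial\theta_{k}}, \qquad \frac{\partial\mlvec{Y}}{\partial\theta_{k}} = \frac{1}{pZ(\mlvec{H},d)}\sum_{l=k+1}^{p}\bigl(\imagi[\mlvec{H}_{k},\mlvec{H}_{l}]\otimes\mlvec{H}_{l} + \mlvec{H}_{l}\otimes\imagi[\mlvec{H}_{k},\mlvec{H}_{l}]\bigr).
\end{align*}
The computation inside the proof of Lemma~\ref{lm:qnn-slow-theta} already supplies the per-coordinate bound $|\tfrac{d\theta_{k}}{dt}| \le \tfrac{1}{p}\tfrac{\sqrt{2}m\fronorm{\mlvec{M}}\fronorm{\mlvec{H}}}{Z(\mlvec{H},d)}\sqrt{L(\mlvec\theta(t))}\le \tfrac{1}{p}\tfrac{\sqrt{2}m\fronorm{\mlvec{M}}\fronorm{\mlvec{H}}}{Z(\mlvec{H},d)}\sqrt{L(\mlvec\theta(0))}$, so $\sum_{k}|\tfrac{d\theta_{k}}{dt}|$ is $O(1)$ in $p$; it therefore remains to show $\opnorm{\partial\mlvec{Y}/\partial\theta_{k}} = \widetilde{O}(1/\sqrt{p})$ uniformly over $k\in[p]$ and over the $\mlvec\theta$ visited by the flow.

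This last estimate is the crux. The deterministic triangle inequality gives only $\opnorm{\imagi[\mlvec{H}_{k},\mlvec{H}_{l}]\otimes\mlvec{H}_{l}}\le 2\opnorm{\mlvec{H}}^{3}$ and hence $\opnorm{\partial\mlvec{Y}/\partial\theta_{k}} = O(1)$, which is a full factor of $\sqrt{p}$ too weak. To extract the $1/\sqrt{p}$ one uses the Haar randomness of $\{\mlvec{U}_{l}\}$: at a \emph{fixed} $\mlvec\theta$, conditioning on $\mlvec{U}_{0},\dots,\mlvec{U}_{k}$, each $\mlvec{H}_{l}$ with $l>k$ is a Haar-distributed conjugate of $\mlvec{H}$ whose conditional law does not depend on $\mlvec{H}_{k}$, and a short single-unitary Weingarten calculation (together with $\tr\mlvec{H}=0$) shows the symmetrized summand $\imagi[\mlvec{H}_{k},\mlvec{H}_{l}]\otimes\mlvec{H}_{l} + \mlvec{H}_{l}\otimes\imagi[\mlvec{H}_{k},\mlvec{H}_{l}]$ has zero conditional expectation; the summands also form a bounded martingale difference sequence for the filtration generated by $\mlvec{U}_{0},\mlvec{U}_{1},\dots$. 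Matrix Azuma/Freedman then gives $\opnorm{\partial\mlvec{Y}/\partial\theta_{k}} = \widetilde{O}\bigl(\sqrt{p-k}/(pZ(\mlvec{H},d))\bigr) = \widetilde{O}(1/(\sqrt{p}\,Z(\mlvec{H},d)))$ with high probability at that fixed $\mlvec\theta$; a union bound over $k$ costs only a logarithmic factor. Since $\mlvec\theta(t)$ is itself random (it depends on the $\mlvec{U}_{l}$ through the gradient flow), one upgrades this to a bound valid simultaneously over the ball $\{\mlvec\theta:\norm{\mlvec\theta-\mlvec\theta(0)}_{2}\le R\}$ by a standard $\epsilon$-net argument, where Lemma~\ref{lm:qnn-slow-theta} shows the flow stays inside this ball for $R = O(T/\sqrt{p})$; the $\mlvec\theta$-Lipschitzness of $\partial\mlvec{Y}/\partial\theta_{k}$ controls the discretization error and the union bound over the net is again only logarithmic. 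This is precisely the content of Lemma~3.5 of \cite{ourvqe2022}, which we invoke under the identification that $\mlvec{Y}(\mlvec\theta(t))$ here plays the role of their analogous normalized second-moment matrix, with Lemma~\ref{lm:qnn-slow-theta} furnishing the required slow-variation hypothesis.

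Putting the pieces together, on the high-probability event we obtain, for every $0\le t\le T$, $\opnorm{\frac{d}{dt}\mlvec{Y}(\mlvec\theta(t))}\le \bigl(\sum_{k}|\tfrac{d\theta_{k}}{dt}|\bigr)\max_{k}\opnorm{\partial\mlvec{Y}/\partial\theta_{k}} = O(1/\sqrt{p})$, and integrating, $\opnorm{\mlvec{Y}(\mlvec\theta(t))-\mlvec{Y}(\mlvec\theta(0))}\le\int_{0}^{t}\opnorm{\frac{d}{ds}\mlvec{Y}(\mlvec\theta(s))}\,ds\le C_{3}\,T/\sqrt{p}$, where $C_{3}$ depends only on $m,d,\mlvec{H},\qnnmeasure,a$ and not on $p$ or $T$ (one may in fact replace $T$ by a constant using the assumed exponential decay of $L$, but the stated form suffices). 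I expect the main obstacle to be exactly the $\widetilde{O}(1/\sqrt{p})$ control of $\partial\mlvec{Y}/\partial\theta_{k}$ uniformly along the random trajectory: the deterministic estimates are off by $\sqrt{p}$, so one genuinely needs the matrix-concentration input, and it must be made uniform over $\mlvec\theta$ via a net whose radius is calibrated to the trajectory bound of Lemma~\ref{lm:qnn-slow-theta}.
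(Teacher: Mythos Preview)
Your route differs from the paper's and has a gap at the uniformity step. The paper does not differentiate $\mlvec{Y}$ and concentrate $\partial_{\theta_k}\mlvec{Y}$; it bounds the increment $\mlvec{Y}(\mlvec\theta)-\mlvec{Y}(\mlvec 0)$ directly. After reparameterizing so that the $\mlvec{H}_l$ are i.i.d.\ Haar conjugates of $\mlvec H$, the paper shows \emph{deterministically} that resampling $\mlvec{H}_k$ changes each summand $\mlvec{Y}_l(\mlvec\theta)$ (for $l\ge k$) by at most $O(|\theta_k|\opnorm{\mlvec H}^3)$; McDiarmid's inequality then makes $\mlvec\theta\mapsto\mlvec{Y}(\mlvec\theta)-\mlvec{Y}(\mlvec 0)$ a sub-Gaussian process with increment parameter proportional to $\|\mlvec\theta\|_2$. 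Combined with the trajectory Lipschitz bound $\|\mlvec\theta(t_2)-\mlvec\theta(t_1)\|_2\le (C_L/\sqrt p)\,|t_2-t_1|$ from Lemma~\ref{lm:qnn-slow-theta}, a matrix Dudley inequality over the one-dimensional index set $[0,T]$ (diameter $O(T/\sqrt p)$, logarithmic metric entropy) gives the claimed bound. No martingale concentration on a derivative is used.

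Your pointwise matrix-Azuma estimate $\opnorm{\partial_{\theta_k}\mlvec{Y}(\mlvec\theta)}=\widetilde O(1/\sqrt p)$ at a \emph{fixed} $\mlvec\theta$ is correct, and the zero-mean check for the symmetrized summand (via $[A\otimes\mlvec I+\mlvec I\otimes A,\mlvec W_{d^2\times d^2}]=0$) is right. The gap is the claim that ``the union bound over the net is again only logarithmic.'' The ball $\{\|\mlvec\theta-\mlvec\theta(0)\|_2\le R\}\subset\real^p$ has $\epsilon$-covering number $(CR/\epsilon)^p$, so union-bounding the Azuma tail over it inflates the required deviation level by a factor $\sqrt p$, exactly cancelling the $1/\sqrt p$ you gained from concentration; uniformly on the ball you recover only $\opnorm{\partial_{\theta_k}\mlvec{Y}}=\widetilde O(1)$---the deterministic bound you dismissed---and after integrating you get $O(T)$, not $O(T/\sqrt p)$. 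Netting the interval $[0,T]$ instead does not rescue the argument as written, since at each net time $t_i$ the point $\mlvec\theta(t_i)$ is random and correlated with the $\mlvec U_l$'s, so the fixed-$\mlvec\theta$ concentration does not apply. This is precisely why the paper packages the $\mlvec\theta$-dependence into a sub-Gaussian increment parameter and uses chaining (Dudley) over the one-dimensional time axis rather than a single-scale net in $\real^p$.
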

\begin{proof}
  To bound the supremum of the matrix-valued random field, we use an adapted
  version of the Dudley's inequality:

  \textit{Claim 1 (Dudley's inequality for matrix-valued random fields, adapted
    from Theorem 8.1.6 in High-dimensional probability (Vershynin, 2018).).} Let $\mlvec{\mathcal{R}}$ be a metric space equipped with a metric $\mathbf{d}(\cdot,\cdot)$, and $\mlvec{X}: \mlvec{\mathcal{R}} \mapsto \real^{D \times D}$ with subgaussian increments i.e. it satisfies
    $\Prob[\| \mlvec{X}(r_1) - \mlvec{X}(r_2)\|_{\mathsf{op}} > t] \le 2D\exp\left(-\frac{t^2}{C_{\sigma}^2\mathbf{d}(r_1,r_2)^2}\right)$.
  Then with probability at least $1 - 2D\exp(-u^2)$ for any subset $\mathcal{S} \subseteq \mlvec{\mathcal{R}}$:
   $
    \sup_{(r_1,r_2) \in \mathcal{S}} \|\mlvec{X}(r_1) - \mlvec{X}(r_2)\|_{\mathsf{op}} \le C \cdot C_{\sigma} \left[\int_0^{\mathrm{diam}(\mathcal{S})} \sqrt{\mathcal{N}(\mathcal{S},\mathbf{d},\epsilon)}\,d\epsilon + u \cdot \mathrm{diam}(\mathcal{S})\right]
   $
  for some constant $C$, where $\mathcal{N}(\mathcal{S},\mathbf{d},\epsilon)$ is the metric entropy defined as the logarithm of the $\epsilon$-covering number of $\mathcal{S}$ using metric $d$.

  To make use of \emph{Claim 1}, we now establish the sub-gaussian increment of
  $\mlvec{Y}(\mlvec\theta(t))$ through the following \emph{Claim 2} by applying \emph{McDiarmid inequality}:

  \textit{Claim 2 (Sub-gaussianity of $\mlvec{Y}$)}
    $\Prob[\opnorm{\mlvec{Y}(\mlvec{\theta}) - \mlvec{Y}(\mlvec{0})} > t] \le 2\exp\left(-\frac{-t^2 Z(\mlvec{H},d)^2}{2C_1\|\mlvec{\theta}\|_{2}^2}\right)$
    for some constant $C_{1}$. Then due to the Haar distribution of the unitaries $\{\mlvec{U}_{l}\}_{l=0}^{p}$,
    \begin{align}
    \Prob[\opnorm{\mlvec{Y}(\mlvec{\theta}_2) - \mlvec{Y}(\mlvec{\theta}_1)} > t] \le 2\exp\left(-\frac{-t^2 Z(\mlvec{H},d)^2}{2C_1\|\mlvec{\theta}_2 - \mlvec{\theta}_1\|_{2}^2}\right).
    \end{align}

  To see that \emph{Claim 2} is true, consider an alternative description of
  $\mlvec{Y}(\mlvec\theta)$. Recall that $\mlvec{Y}(\mlvec\theta)$ is defined as
  $\mlvec{Y}(\mlvec{\theta}) = \frac{1}{pZ(\mlvec{H},d)}\sum_{l=1}^{p} \mlvec{Y}_l$
  with$\mlvec{Y}_l(\mlvec{\theta})$ being $\mlvec{H}_l^{\otimes 2}$.
  We consider a re-parameterization of the random variables
  $\mlvec{H}_l(\theta)$ by constructing random variables that are identically
  distributed, but are functions on a different latent probability space.
  Defining $\mlvec{H}_{l}$ as
  $\mlvec{U}_{0}^{\dagger}\cdots \mlvec{U}_{l-1}^{\dagger}\mlvec{H}\mlvec{U}_{l-1}\cdots\mlvec{U}_{0}$,  $\mlvec{Y}$ can be rewritten as:
  \begin{align}
    \mlvec{Y}(\mlvec\theta) = \frac{1}{pZ} \sum_{l=1}^{p}
    \big(
    e^{i\theta_{1}\mlvec{H}_{1}}\cdots e^{i\theta_{l-1}\mlvec{H}_{l-1}}
    \mlvec{H}_{l}
    e^{-i\theta_{l-1}\mlvec{H}_{l-1}}\cdots e^{-i\theta_{1}\mlvec{H}_{1}}
    \big)^{\otimes 2}.
  \end{align}
  By the Haar randomness of $\{\mlvec{U}_{l}\}_{l=1}^{p}$, we can view
  $\{\mlvec{H}_{l}\}_{l=1}^{p}$ as random Hermitians generated by
  $\{\mlvec{V}_{l}\mlvec{H}\mlvec{V}^{\dagger}_{l}\}$ for \textit{i.i.d.} Haar
  random $\{\mlvec{V}_{l}\}_{l=1}^{p}$. This variable is identically distributed to $\mlvec{Y}$ and $\mlvec{Y}_{l}$ can be defined as each term in the sum.

We will apply the well-known McDiarmid inequality (e.g. Theorem~2.9.1 in High-dimensional probability (Vershynin, 2018)) that can be stated as follows: Consider independent random variables $X_1,\dots,X_k \in \mathcal{X}$. Suppose a random variable $\phi \colon \mathcal{X}^{k} \to \real$ satisfies the condition that for all $1 \le j \le k$ and for all $x_1,\dots,x_j,\dots,x_k,x'_j \in \mathcal{X}$,
\begin{align}
    |\phi(x_1,\dots,x_j,\dots,x_k) - \phi(x_1,\dots,x'_j,\dots,x_k)| \le c_j,
\end{align}
then the tails of the distribution satisfy
\begin{align}
    \Prob[|\phi(X_1,\dots,X_k) - \EXP\phi| \ge t] \le \exp\left(\frac{-2t^2}{\sum_{i=1}^{k}c_i^2}\right).
\end{align}

With our earlier re-parameterization we can consider $\mlvec{Y}$ and consequently $\mlvec{Y}_l$ as functions of the randomly sampled Hermitian operators $\mlvec{H}_l$. Define the variable $\mlvec{Y}^{(k)}$ as that obtained by resampling $\mlvec{H}_k$ independently, and $\mlvec{Y}_l^{(k)}$ correspondingly. Finally we define
\begin{align}
\Delta^{(k)} \mlvec{Y} = \opnorm{(\mlvec{Y}(\mlvec{\theta}) - \mlvec{Y}(0)) - (\mlvec{Y}^{(k)}(\mlvec{\theta}) - \mlvec{Y}^{(k)}(0))} = \opnorm{\mlvec{Y}(\mlvec{\theta}) - \mlvec{Y}^{(k)}(\mlvec{\theta})}.
\end{align}
Via the triangle inequality,
\begin{align}
    \Delta^{(k)} \mlvec{Y} &= \lVert \mlvec{Y}(\mlvec{\theta}) - \mlvec{Y}^{(k)}(\mlvec{\theta}) \rVert
    = \frac{1}{pZ}\lVert \sum_{l\geq k} \mlvec{Y}_{l}(\mlvec{\theta}) - \mlvec{Y}_{l}^{(k)}(\mlvec{\theta}) \rVert \\
    &\le \frac{1}{pZ} \sum_{l\geq k} \lVert \mlvec{Y}_{l}(\mlvec{\theta}) - \mlvec{Y}_{l}^{(k)}(\mlvec{\theta}) \rVert.
\end{align}
Then by definition,
    \begin{align*}
    &\|\mathbf{Y}_{l}(\mathbf{\theta}) - \mathbf{Y}_{l}^{(k)}(\boldsymbol{\theta})\| \nonumber\\
    =& \|(e^{\imagi\theta_1\mathbf{H}_1}\cdots e^{\imagi\theta_{k-1}\mathbf{H}_{k-1}})^{\otimes 2}
    \big((e^{\imagi{\theta}_{k}\mathbf{H}_{k}} \mathbf{K} e^{-\imagi{\theta}_{k}\mathbf{H}_{k}})^{\otimes 2}\nonumber\\
    -&(e^{\imagi{\theta}_{k}\mathbf{H}^{\prime}_{k}} \mathbf{K} e^{-\imagi{\theta}_{k}\mathbf{H}^{\prime}_{k}})^{\otimes 2}
     \big)
     (e^{-\imagi{\theta}_{k-1}\mathbf{H}_{k-1}}\cdots e^{-\imagi{\theta}_{1}\mathbf{H}_{1}})^{\otimes 2}\|\\
     =&\|(e^{\imagi{\theta}_{k}\mathbf{H}_{k}} \mathbf{K} e^{-\imagi{\theta}_{k}\mathbf{H}_{k}})^{\otimes 2} -(e^{\imagi{\theta}_{k}\mathbf{H}^{\prime}_{k}} \mathbf{K} e^{-\imagi{\theta}_{k}\mathbf{H}^{\prime}_{k}})^{\otimes 2} \|\\
     \leq&\|(e^{\imagi{\theta}_{k}\mathbf{H}_{k}} \mathbf{K} e^{-\imagi{\theta}_{k}\mathbf{H}_{k}})^{\otimes 2} - \mathbf{K}^{\otimes 2}\|+\|(e^{\imagi{\theta}_{k}\mathbf{H}^{\prime}_{k}} \mathbf{K} e^{-i{\theta}_{k}\mathbf{H}^{\prime}_{k}})^{\otimes 2} -\mathbf{K}^{\otimes2}\|.
\end{align*}
where $\mathbf{K}:=
    e^{\imagi{\theta}_{k+1}\mathbf{H}_{k+1}}\cdots e^{\imagi{\theta}_{l-1}\mathbf{H}_{l-1}}
    \mathbf{H}_{l}
    e^{-\imagi{\theta}_{l-1}\mathbf{H}_{l-1}}\cdots e^{-\imagi{\theta}_{k+1}\mathbf{H}_{k+1}}$.
    Let $\mathbf{K}(\phi)$ denote $e^{\imagi\phi\mathbf{H}_k}\mathbf{K} e^{-\imagi\phi\mathbf{H}_k}$, we can bound the first term on the righthand side as follows:
    \begin{align*}
        &\|(e^{\imagi{\theta}_{k}\mathbf{H}_{k}} \mathbf{K} e^{-\imagi{\theta}_{k}\mathbf{H}_{k}})^{\otimes 2} - \mathbf{K}^{\otimes 2}\|\\
        =& \| \mathbf{K}(\theta_k)^{\otimes 2} - \mathbf{K}(0)^{\otimes 2}\|\\
        =&\|\int_0^{\theta_k}d\phi \frac{d}{d\phi}(\mathbf{K}(\phi)^{\otimes 2})\|\\
        \leq&\int_0^{\theta_k}d\phi\| \frac{d}{d\phi}(\mathbf{K}(\phi)^{\otimes 2})\|\\
        \leq & 4|\theta_k|\|\mathbf{H}_k\|\|\mathbf{K}\|^2.
    \end{align*}
    The last inequality follows from the fact that
    \begin{align*}
        &\|\frac{d}{d\phi} \mathbf{K}(\phi)^{\otimes 2}\| \\
        =& \|(\exp(\imagi\phi\mathbf{H}_k))^{\otimes 2}
        \big(
        [-\imagi\mathbf{H}_k, \mathbf{K}]\otimes \mathbf{K}
        +
        \mathbf{K}\otimes[-\imagi\mathbf{H}_k, \mathbf{K}]
        \big)
        (\exp(-\imagi\phi\mathbf{H}_k))^{\otimes 2}\|\\
        =& \|
        [-\imagi\mathbf{H}_k, \mathbf{K}]\otimes \mathbf{K}
        +
        \mathbf{K}\otimes[-\imagi\mathbf{H}_k, \mathbf{K}]\|\\
        \leq& 4\|\mathbf{H}_k\|\|\mathbf{K}\|^2.
    \end{align*}
    The same reasoning holds for the term with $\mathbf{H}'_k$. Using the fact
    that $\|\mathbf{H}_k\|=\|\mathbf{H}'_k\| = \|\mathbf{H}\|$, and we have
    \begin{align*}
    \|\big(\mathbf{Y}_{l}(\boldsymbol{\theta}) - \mathbf{Y}_{l}(\boldsymbol{0})\big)
  - \big(\mathbf{Y}^{(k)}_{l}(\boldsymbol{\theta}) - \mathbf{Y}^{(k)}_{l}(\boldsymbol{0})\big)\|
  \leq 8|{\theta}_{k}|\|\mathbf{H}\|\|{\mathbf{K}\|^{2}
  = 8|\theta}_{k}|\|\mathbf{H}\|^{3}.
    \end{align*}
\emph{Claim 2} follows from the direct application of
McDiarmid inequality.

By Lemma~\ref{lm:qnn-slow-theta},
$\|\mlvec{\theta}(t_2) - \mlvec{\theta}(t_1)\|_2 \le \frac{C_{L}}{\sqrt{p}}|t_{2}-t_{1}|$
with $C_{L}$ being a constant with respect to $p$. Plugging this into
\textit{Claim 2}, we see that $\mlvec{Y}$ has sub-gaussian increments if we
define the metric
$\mathbf{d}(t_2,t_1) = \frac{C_{L}}{\sqrt{p}}\cdot |t_2 - t_1|$, thereby
satisfying the conditions for \textit{Claim 1}. Under this metric, the diameter
of the interval $[0,T]$ is of order $\frac{T}{\sqrt{p}}$. Applying \emph{Claim 1},
with $u = \sqrt{\log(2d/\delta)}$ to ensure a failure probability at most
$\delta$ we have
\begin{align}
  \sup_{t \in [0,T]} \|\mlvec{Y}(\theta(t)) - \mlvec{Y}(\theta(0))\|_{\mathsf{op}} \le C_3\cdot\frac{T}{\sqrt{p}},
\end{align}
where $C_3$ is a constant of $p$ and $T$ and depends polynomially on other
quantities including $d$ and $\log(1/\delta)$.
\end{proof}


\section{Proof for Theorem~\ref{thm:smallest_eig}}
\label{sec:app_globalminima}
In this section, we present the proof for Theorem~\ref{thm:smallest_eig} for
characterizing the rate of convergence at global minima:
\asympeig*

We start by presenting a few helper lemma:
\subsection{Helper lemma for \texorpdfstring{$\Kasym$}{Kasym}}
\begin{lemma}
  \label{lm:hadamard}
  Let $\mlvec{A},\mlvec{B}$ be $d\times d$ Hermitians. Let $\opnorm{\cdot}$
  denote the operator norm of a given Hermitian and let $\circ$ denote the
  Hadamard product (i.e. the elementwise multiplication) of two matrices, we have
  \begin{align}
    \opnorm{\mlvec{A}\circ\mlvec{B}} \leq \opnorm{\mlvec{A}}\opnorm{\mlvec{B}}.
  \end{align}
\end{lemma}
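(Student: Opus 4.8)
The asserted inequality is homogeneous of degree one in each of $\mlvec{A}$ and $\mlvec{B}$, and it holds trivially if either matrix vanishes, so I would first rescale and assume $\opnorm{\mlvec{A}} = \opnorm{\mlvec{B}} = 1$; it then suffices to prove $\opnorm{\mlvec{A}\circ\mlvec{B}}\le 1$. The plan is to lift the two matrices to $2d\times 2d$ positive semidefinite block matrices and apply the Schur product theorem (the Hadamard product of two positive semidefinite matrices is again positive semidefinite).

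Concretely, for any Hermitian $\mlvec{C}$ with $\opnorm{\mlvec{C}}\le 1$ the block matrix $\left(\begin{smallmatrix}\mlvec{I} & \mlvec{C}\\ \mlvec{C} & \mlvec{I}\end{smallmatrix}\right)$ is positive semidefinite: for any vectors $x,y$ one has $\langle x,x\rangle + 2\operatorname{Re}\langle x,\mlvec{C}y\rangle + \langle y,y\rangle \ge \norm{x}^{2} - 2\norm{x}\norm{y} + \norm{y}^{2} = (\norm{x}-\norm{y})^{2}\ge 0$, using Cauchy--Schwarz and $\opnorm{\mlvec{C}}\le 1$ (equivalently, $\mlvec{I}-\mlvec{C}^{2}\succeq 0$). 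Applying this with $\mlvec{C}=\mlvec{A}$ and with $\mlvec{C}=\mlvec{B}$ produces two positive semidefinite $2d\times 2d$ matrices, and I would then take their Hadamard product. Since the Hadamard product acts entrywise, $\mlvec{I}\circ\mlvec{I}=\mlvec{I}$, and $\mlvec{A}\circ\mlvec{B}$ is itself Hermitian (from the entrywise identity $\overline{(\mlvec{A}\circ\mlvec{B})_{ji}} = \overline{A_{ji}}\,\overline{B_{ji}} = A_{ij}B_{ij}$), the resulting matrix is $\left(\begin{smallmatrix}\mlvec{I} & \mlvec{A}\circ\mlvec{B}\\ \mlvec{A}\circ\mlvec{B} & \mlvec{I}\end{smallmatrix}\right)$, which is positive semidefinite by the Schur product theorem. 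Testing this matrix against vectors of the form $(x,\pm x)$ gives $\norm{x}^{2}\pm\langle x,(\mlvec{A}\circ\mlvec{B})x\rangle\ge 0$, hence $\abs{\langle x,(\mlvec{A}\circ\mlvec{B})x\rangle}\le\norm{x}^{2}$ for all $x$, i.e. $\opnorm{\mlvec{A}\circ\mlvec{B}}\le 1$. Undoing the rescaling yields the lemma.

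I do not anticipate a real obstacle: the argument only invokes the Schur product theorem together with the elementary positive-semidefiniteness of the $2\times 2$ block form. The one place needing care is checking that the entrywise product of the two block matrices has diagonal blocks exactly $\mlvec{I}$ and off-diagonal blocks exactly the Hermitian matrix $\mlvec{A}\circ\mlvec{B}$. If one prefers a fully self-contained proof not citing the Schur product theorem, I would instead use the eigendecomposition of $\mlvec{A}$ (resp. $\mlvec{B}$) to write $A_{ij}=\langle u_{i},v_{j}\rangle$ with $\norm{u_{i}}^{2}=\norm{v_{j}}^{2}=|\mlvec{A}|_{ii}\le\opnorm{\mlvec{A}}$, and then bound $\langle x,(\mlvec{A}\circ\mlvec{B})y\rangle$ directly, viewing $\mlvec{A}$ as a Schur multiplier acting on $\mlvec{B}$: expanding $\langle u_{i},v_{j}\rangle$ into a sum over an auxiliary index and regrouping produces a sum of terms $\langle \xi^{(\alpha)},\mlvec{B}\,\eta^{(\alpha)}\rangle$, which is controlled by $\opnorm{\mlvec{B}}$ and two applications of Cauchy--Schwarz.
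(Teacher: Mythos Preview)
Your proof is correct but takes a genuinely different route from the paper's. The paper observes that $\mlvec{A}\circ\mlvec{B}$ is a $d\times d$ principal submatrix of the Kronecker product $\mlvec{A}\otimes\mlvec{B}$ (obtained by restricting to the diagonal indices $(i,i)$ of the $d^{2}$-dimensional space) and then applies the Poincar\'e separation theorem to interlace the eigenvalues of $\mlvec{A}\circ\mlvec{B}$ between $\lambda_{\min}(\mlvec{A}\otimes\mlvec{B})$ and $\lambda_{\max}(\mlvec{A}\otimes\mlvec{B})$; since the eigenvalues of $\mlvec{A}\otimes\mlvec{B}$ are the products $\lambda_{i}(\mlvec{A})\lambda_{j}(\mlvec{B})$, the bound follows immediately. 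Your argument instead lifts each matrix to a positive semidefinite $2d\times 2d$ block form $\left(\begin{smallmatrix}\mlvec{I}&\mlvec{C}\\ \mlvec{C}&\mlvec{I}\end{smallmatrix}\right)$ and invokes the Schur product theorem on the pair. The paper's approach is a bit shorter and, as a bonus, yields two-sided bounds on \emph{every} eigenvalue of $\mlvec{A}\circ\mlvec{B}$, not just the spectral radius. Your approach has the advantage of extending with essentially no change to the non-Hermitian case (replacing the off-diagonal blocks by $\mlvec{C}$ and $\mlvec{C}^{\dagger}$), and it connects naturally to the Schur-multiplier viewpoint you sketch at the end. Either route is perfectly adequate here.
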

\begin{proof}
  For any $d\times d$ Hermitian matrix, let $\lambda_{i}(\cdot)$ denote its $i$-th
  smallest eigenvalue. The Hadamard product $\mlvec{A}\circ\mlvec{B}$ is a
  $d\times d$ principal submatrix of the Kronecker product
  $\mlvec{A}\otimes \mlvec{B}$, and by the Poincar\'e separation theorem (see
  e.g. Corollary 4.3.37 in \citet{horn2012matrix}):
  \begin{align}
    \lambda_{1}(\mlvec{A}\otimes \mlvec{B})\leq \lambda_{i}(\mlvec{A}\circ\mlvec{B})\leq \lambda_{d^{2}}(\mlvec{A}\otimes \mlvec{B}).
  \end{align}
  The statement follows from the fact that the eigenvalues of
  $\mlvec{A}\otimes \mlvec{B}$ take the form of
  $\lambda_{i}(\mlvec{A})\lambda_{j}(\mlvec{B})$ for $i,j\in[d]$.
\end{proof}
\begin{restatable}[$\Kasym$ for asymptotic dynamics]{lemma}{asympestimation}
\label{lm:asympkernelestimate}
Let $\mathcal{S}$ be a $m$-sample training set composed of pure states
$\{\mlvec{\rho}_{j} = \mlvec{v}_{j}\mlvec{v}_{j}^{\dagger}\}_{j=1}^{m}$. Let
$\qnnmeasure$ be a Pauli-like measurement with eigenvalues $\pm 1$ and
trace-$0$. Consider training a QNN with $\mathcal{S}$, measurement $\qnnmeasure$
and a scaling factor of $\gamma$.
The positive semidefinite matrix $\Kasym$ can be expressed entry-wise as
\begin{align}
  (\Kasym)_{ij}(\mlvec{M}(t)) = 8 \gamma^{2}Re(\mlvec{u}_{j}^{\dagger}(t)\mlvec{u}_{i}(t) \mlvec{w}_{i}^{\dagger}(t)\mlvec{w}_{j}(t)),
\end{align}
where $\mlvec{u}_{i}(t):= \mlvec{\Pi}_{+}(t)\mlvec{v}_{i} $ (resp.
$\mlvec{w}_{i}(t):=\mlvec{\Pi}_{-}(t)\mlvec{v}_{i}$) is the projection of
$\mlvec{v}_{i}$ into the postive (resp. negative) subspace of
$\mlvec{M}(t)=\gamma(\mlvec{\Pi}_{+}(t) - \mlvec{\Pi}_{-}(t))$.
Let $\mlvec{P}(t):=(\mlvec{u}_{i}^{\dagger}(t)\mlvec{u}_{j}(t))_{i.j\in[m]}$ and
$\mlvec{N}(t):=(\mlvec{w}_{i}^{\dagger}(t)\mlvec{w}_{j}(t))_{i,j\in[m]}$ be the Gram matrices of
$\{\mlvec{u}_{i}(t)\}_{i=1}^{m}$ and $\{\mlvec{w}_{i}(t)\}_{i=1}^{m}$, we have:
\begin{align}
  \lambda_{\min}(\Kasym(t)) \geq  8\gamma^{2}\lambda_{\min}(\mlvec{P}(t)) \min_{i\in[m]}(\mlvec{N}_{ii}(t)) \geq 8\gamma^{2}\lambda_{\min}(\mlvec{P}(t)) \lambda_{\min}(\mlvec{N}(t)).
\end{align}
\end{restatable}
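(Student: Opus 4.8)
The plan is to obtain the entrywise formula by a direct commutator computation, and then to derive the eigenvalue bound from a Schur/Hadamard-product estimate for Gram matrices. First I would start from the expression $(\Kasym(t))_{ij} = \tr\big(\imagi[\mlvec{M}(t),\mlvec{\rho}_i]\,\imagi[\mlvec{M}(t),\mlvec{\rho}_j]\big)$ supplied by Lemma~\ref{lm:resid_dyn_decomp}. Writing $\mlvec{M}(t)=\gamma(\mlvec{\Pi}_+(t)-\mlvec{\Pi}_-(t))$ and $\mlvec{\rho}_i=\mlvec{v}_i\mlvec{v}_i^\dagger$ with $\mlvec{v}_i=\mlvec{u}_i(t)+\mlvec{w}_i(t)$, the same short computation as in the proof of Theorem~\ref{thm:sublinear-convergence}, now carrying the extra factor $\gamma$, gives $[\mlvec{M}(t),\mlvec{\rho}_i]=2\gamma(\mlvec{u}_i\mlvec{w}_i^\dagger-\mlvec{w}_i\mlvec{u}_i^\dagger)$. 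Substituting this and expanding the product of the four rank-one terms, the two ``cross'' products vanish because $\mlvec{u}_i$ lies in the positive and $\mlvec{w}_j$ in the negative eigenspace of $\mlvec{M}(t)$, so $\mlvec{u}_i^\dagger\mlvec{w}_j=0=\mlvec{w}_i^\dagger\mlvec{u}_j$; the two surviving terms combine, after using cyclicity of the trace and $\overline{\mlvec{a}^\dagger\mlvec{b}}=\mlvec{b}^\dagger\mlvec{a}$, into $(\Kasym(t))_{ij}=8\gamma^2\operatorname{Re}\big((\mlvec{u}_j^\dagger(t)\mlvec{u}_i(t))(\mlvec{w}_i^\dagger(t)\mlvec{w}_j(t))\big)$, which is the claimed formula. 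Its positive semidefiniteness, already asserted in Lemma~\ref{lm:resid_dyn_decomp}, is also visible here: for real $\mlvec{c}$ one recognizes $\mlvec{c}^\t\Kasym(t)\mlvec{c}=8\gamma^2\fronorm{\sum_i c_i\,\mlvec{w}_i\mlvec{u}_i^\dagger}^2$.

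For the eigenvalue bound, since $\Kasym(t)$ is real symmetric I would write $\lambda_{\min}(\Kasym(t))=\min_{\mlvec{c}\in\real^m,\,\|\mlvec{c}\|=1}\mlvec{c}^\t\Kasym(t)\mlvec{c}$. With $\mlvec{P}_{ij}=\mlvec{u}_i^\dagger\mlvec{u}_j$ we have $\mlvec{u}_j^\dagger\mlvec{u}_i=\overline{\mlvec{P}_{ij}}=(\mlvec{P}^\t)_{ij}$, so the inner quantity is real for real $\mlvec{c}$ and $\mlvec{c}^\t\Kasym(t)\mlvec{c}=8\gamma^2\sum_{ij}c_ic_j\,(\mlvec{P}^\t)_{ij}\,\mlvec{N}_{ij}$. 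Here $\mlvec{P}^\t$ is PSD with $\lambda_{\min}(\mlvec{P}^\t)=\lambda_{\min}(\mlvec{P})\ge 0$. Factoring the PSD Gram matrix $\mlvec{N}=\mlvec{R}^\dagger\mlvec{R}$ with columns $\mlvec{r}_k$, so $\mlvec{N}_{ij}=\sum_k\overline{(\mlvec{r}_i)_k}(\mlvec{r}_j)_k$, and setting $z^{(k)}_i:=c_i(\mlvec{r}_i)_k$, this sum equals $\sum_k\mlvec{z}^{(k)\dagger}\mlvec{P}^\t\mlvec{z}^{(k)}\ge\lambda_{\min}(\mlvec{P}^\t)\sum_k\|\mlvec{z}^{(k)}\|^2=\lambda_{\min}(\mlvec{P})\sum_i c_i^2\|\mlvec{r}_i\|^2=\lambda_{\min}(\mlvec{P})\sum_i c_i^2\mlvec{N}_{ii}$. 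This gives $\lambda_{\min}(\Kasym(t))\ge 8\gamma^2\lambda_{\min}(\mlvec{P}(t))\min_{i\in[m]}\mlvec{N}_{ii}(t)$ — a Schur-product refinement in the same spirit as Lemma~\ref{lm:hadamard}, but for the smallest rather than the largest eigenvalue. The final inequality of the statement then follows from $\mlvec{N}_{ii}=\mlvec{e}_i^\dagger\mlvec{N}\mlvec{e}_i\ge\lambda_{\min}(\mlvec{N})$ for the PSD matrix $\mlvec{N}$, again together with $\lambda_{\min}(\mlvec{P})\ge 0$.

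I do not expect a genuinely hard step here; the work is bookkeeping — tracking the constant $8\gamma^2$, checking that every cross-subspace inner product vanishes, and justifying that the $\operatorname{Re}$ can be dropped when passing to the real quadratic form (which is legitimate because $\mlvec{P}^\t\circ\mlvec{N}$ is Hermitian and $\mlvec{c}$ is real, so $\mlvec{c}^\t(\mlvec{P}^\t\circ\mlvec{N})\mlvec{c}$ is automatically real). The lemma is a computational stepping stone; combined with concentration estimates for the random unit vectors parameterizing the global minima, it will feed into the proof of Theorem~\ref{thm:smallest_eig}.
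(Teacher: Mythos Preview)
Your proposal is correct and follows essentially the same approach as the paper: a direct computation of the commutator via the $\pm$-eigenspace decomposition of $\mlvec{M}(t)$, followed by the Schur--Hadamard product bound $\lambda_{\min}(\mlvec{A}\circ\mlvec{B})\ge\lambda_{\min}(\mlvec{A})\min_i\mlvec{B}_{ii}$ for PSD matrices. The only cosmetic differences are that the paper expands $\tr(\imagi[\mlvec{M},\mlvec{\rho}_i]\,\imagi[\mlvec{M},\mlvec{\rho}_j])$ using $\mlvec{M}^2=\gamma^2\mlvec{I}$ rather than the rank-one formula $[\mlvec{M},\mlvec{\rho}_i]=2\gamma(\mlvec{u}_i\mlvec{w}_i^\dagger-\mlvec{w}_i\mlvec{u}_i^\dagger)$, and cites the Schur bound from \cite{oymak2020toward} instead of proving it inline as you do.
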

\begin{proof}
  For succinctness, we drop the time dependency $t$ when there are no ambiguities.
Calculate the expression of $(\Kasym)_{ij}$ for pure states
$\mlvec{\rho}_{i} = \mlvec{v}_{i}\mlvec{v}_{i}^{\dagger}$:
\begin{align}
  (\Kasym(\mlvec{M}(t)))_{ij}
  &= \tr\big(\imagi[\mlvec{M}, \mlvec{\rho}_{i}] \ \imagi[\mlvec{M}, \mlvec{\rho}_{j}]\big)\\
  &= \tr\big(\mlvec{M}^{2}\mlvec{\rho}_{i} \mlvec{\rho}_{j}\big) + \tr\big(\mlvec{M}^{2}\mlvec{\rho}_{j} \mlvec{\rho}_{i}\big)
    - 2\tr\big(\mlvec{M}\mlvec{\rho}_{i}\mlvec{M} \mlvec{\rho}_{j}\big)\\
  &= 2\gamma^{2}\big(\tr(\mlvec{\rho}_{i} \mlvec{\rho}_{j}\big)
    -
    \tr((\mlvec{\Pi}_{+} - \mlvec{\Pi}_{-})\mlvec{\rho}_{i}(\mlvec{\Pi}_{+} - \mlvec{\Pi}_{-}) \mlvec{\rho}_{j})
    \big)
\end{align}
Plugging in $\mlvec{\rho}_{i} = \mlvec{v}_{i}\mlvec{v}_{i}^{\dagger}$, we have:
\begin{align}
  \frac{1}{2\gamma^{2}}(\Kasym(\mlvec{M}(t)))_{ij}
  &= |\mlvec{u}_{i}^{\dagger}\mlvec{u}_{j} + \mlvec{w}_{i}^{\dagger}\mlvec{w}_{j}|^{2} - |(\mlvec{u}_{i} + \mlvec{w}_{i})^{\dagger}(\mlvec{\Pi}_{+} - \mlvec{\Pi}_{-}) (\mlvec{u}_{j} + \mlvec{w}_{j})|^{2}\\
  &= |\mlvec{u}_{i}^{\dagger}\mlvec{u}_{j} + \mlvec{w}_{i}^{\dagger}\mlvec{w}_{j}|^{2} - |(\mlvec{u}_{i} + \mlvec{w}_{i})^{\dagger} (\mlvec{u}_{j} - \mlvec{w}_{j})|^{2}\\
  &= |\mlvec{u}_{i}^{\dagger}\mlvec{u}_{j} + \mlvec{w}_{i}^{\dagger}\mlvec{w}_{j}|^{2} - |\mlvec{u}_{i}^{\dagger}\mlvec{u}_{j} - \mlvec{w}_{i}^{\dagger}\mlvec{w}_{j}|^{2}\\
  &= 2 \mlvec{u}_{i}^{\dagger}\mlvec{u}_{j} \cdot \mlvec{w}_{j}^{\dagger}\mlvec{w}_{i} + 2 \mlvec{u}_{j}^{\dagger}\mlvec{u}_{i} \cdot \mlvec{w}_{i}^{\dagger}\mlvec{w}_{j}\\
  &= 4 Re(\mlvec{u}_{j}^{\dagger}\mlvec{u}_{i} \mlvec{w}_{i}^{\dagger}\mlvec{w}_{j}),
\end{align}
or $(\Kasym(\mlvec{M}(t)))_{ij} = {8\gamma^{2}} Re(\mlvec{u}_{j}^{\dagger}\mlvec{u}_{i} \mlvec{w}_{i}^{\dagger}\mlvec{w}_{j})$.

Let $\mlvec{P}(t)$ and $\mlvec{N}(t)$ be the Gram matrices for $\{\mlvec{u}_{i}(t)\}_{i=1}^{m}$
and $\{\mlvec{w}_{i}(t)\}_{i=1}^{m}$:
\begin{align}
  (\mlvec{P}(t))_{ij}  = \mlvec{u}(t)_{i}^{\dagger}\mlvec{u}(t)_{j},\ (\mlvec{N}(t))_{ij}  = \mlvec{w}(t)_{i}^{\dagger}\mlvec{w}(t)_{j},
\end{align}
the matrix $\Kasym$ can be expressed as
$\Kasym = 4\gamma^{2} \mlvec{P}\circ\mlvec{N}^{T} + 4\gamma^{2} \mlvec{P}^{T}\circ\mlvec{N}$,
where $\circ$ denotes the Hadamard product, with $\mlvec{P}$ and $\mlvec{N}$
being positive semidefinite matrices. Following a result of Schur's (e.g. see
Lemma 6.5 in \cite{oymak2020toward}), we estimate the smallest eigenvalue of
$\Kasym$ as
\begin{align}
  \lambda_{\min}(\Kasym(\mlvec{M}(\mlvec\theta))) \geq 8\gamma^{2}\max\big(\min_{i\in[m]}(\mlvec{N}_{ii})\lambda_{\min}(\mlvec{P}), \min_{i\in[m]}(\mlvec{P}_{ii})\lambda_{\min}(\mlvec{N})\big).
\end{align}
\end{proof}
The second statement in the limit suggests that the $\Kasym$ is positive definite unless the subspaces spanned by $\mlvec{u}_j$ or $\mlvec{w}_j$ are not full rank, though we do not make use of this fact in the proof of Theorem~\ref{thm:smallest_eig}.
\subsection{Proof of Theorem~\ref{thm:smallest_eig}}
\begin{proof}
For each input state $\mlvec{\rho}_{j}=\mlvec{v}_{j}\mlvec{v}_{j}^{\dagger}$, let $\mlvec{u}_{j}$
and $\mlvec{w}_{j}$ denote the projection of $\mlvec{v}_{j}$ onto the positive
and negative subspaces of the measurement. Since the measurment is updated
throughout the training, $\mlvec{u}_{j}$  and $\mlvec{w}_{j}$ are functions of
time. For a QNN with the scaling factor $\gamma$, the QNN prediction for the
input state $\mlvec{\rho}_{j}$ at time $t$ is
$\hat{y}_{j} = \gamma(\mlvec{u}_{j}^{\dagger}(t)\mlvec{u}_{j}(t) - \mlvec{w}_{j}^{\dagger}(t)\mlvec{w}_{j}(t))$.
Additionally by the normalization of quantum states and the orthogonality of the
training sample, we have
$\mlvec{u}_{j}^{\dagger}(t)\mlvec{u}_{j}(t) + \mlvec{w}_{j}^{\dagger}(t)\mlvec{w}_{j}(t) = \delta_{ij}$,
where $\delta_{ij}$ is the Kronecker delta function.
Combining these two conditions, we can solve that
$\mlvec{u}_{j}^{\dagger}\mlvec{u}_{j} = \frac{1}{2}(1\pm 1/\gamma)$ and
$\mlvec{v}_{j}^{\dagger}\mlvec{v}_{j} = \frac{1}{2}(1\mp 1/\gamma)$ for
$y_{j} = \pm 1$.

By Lemma~\ref{lm:asympkernelestimate}, the diagonal entries
$(\Kasym)_{jj} = 8\gamma^{2}Re(\mlvec{u}_{j}^{\dagger}\mlvec{u}_{j}\mlvec{w}_{j}^{\dagger}\mlvec{w}_{j}) = 8\gamma^{2}\cdot \frac{1}{2}(1\pm 1/\gamma) \cdot\frac{1}{2}(1\mp 1/\gamma) = 2\gamma^{2}(1-1/\gamma^{2})$.

Without loss of generality, assume $y_{1}=y_{2}=\cdots=y_{m/2} = 1$ and
$y_{m/2+1}=y_{m/2+2}=\cdots y_{m} = -1$. Then
$\mlvec{u}_{j}=\sqrt{\frac{1+1/\gamma}{2}}\hat{\mlvec{u}}_{j}$ for
$1\leq j\leq m/2$ and
$\mlvec{u}_{j}=\sqrt{\frac{1-1/\gamma}{2}}\hat{\mlvec{u}}_{j}$ for
$m/2+1\leq j \leq m$. Here $\hat{\mlvec{u}}_{j}$ are unit vectors defined as
$\mlvec{u}_{j}/\sqrt{\mlvec{u}_{j}^{\dagger}\mlvec{u}_{j}}$. For the
off-diagonal entries,
$(\Kasym)_{ij} = 8\gamma^{2}Re(\mlvec{u}_{i}^{\dagger}\mlvec{u}_{j}\mlvec{w}_{j}^{\dagger}\mlvec{w}_{i}) = 8\gamma^{2}Re(\mlvec{u}_{i}^{\dagger}\mlvec{u}_{j} \cdot  (-\mlvec{u}_{j}^{\dagger}\mlvec{u}_{i})) =  -8\gamma^{2}|\mlvec{u}_{i}^{\dagger}\mlvec{u}_{j}|^{2}$.
For the first equality we use the orthogonality among $\{\mlvec{v}_{j}\}_{j=1}^{m}$.

Define $m\times m$ Hermitian $\mlvec{G}$ such that
$G_{ij}=\hat{\mlvec{u}}_{i}^{\dagger}\hat{\mlvec{u}}_{j}$ and $\mlvec{R}$ such
that
$R_{ij} = \frac{1}{2}(1+1/\gamma)$ for $1\leq i,j\leq m/2$,
$R_{ij} = \frac{1}{2}(1-1/\gamma)$ for $m/2+1\leq i,j\leq m$, and
$R_{ij} = \frac{1}{2}\sqrt{1-1/\gamma^{2}}$ for
$1\leq i\leq m/2, m/2+1\leq j\leq m$ or
$m/2+1\leq i\leq m, 1\leq j\leq m/2$. The off-diagonal entries can be expressed $-8\gamma^{2}R_{ij}G_{ij}G_{ji}$.

Using the notations of $\mlvec{R}$ and $\mlvec{G}$, the matrix $\Kasym$ at the
global minima can be expressed as
\begin{align}
\Kasym = 2\gamma^{2}(1-1/\gamma^{2})\mlvec{I} - 8\gamma^{2}\mlvec{R}\circ(\mlvec{G}-\mlvec{I})\circ(\mlvec{G}^{T}-\mlvec{I}),
\end{align}
where $\mlvec{I}$ is the $m\times m$ identity matrix.

\paragraph{Eigenvalues of $\mathbf{R}$}
Let $\mlvec{e}_{1}$ and $\mlvec{e}_{2}$ denote the unit vectors
\begin{align}
  \mlvec{e}_{1} &= \sqrt{\frac{2}{m}}(1,1,\cdots, 1, 0, 0, \cdots, 0)^{T}\\
  \mlvec{e}_{2} &= \sqrt{\frac{2}{m}}(0,0 \cdots, 0, 1,1,\cdots, 1)^{T}
\end{align}
that are zero in the first (last) $m/2$ entries. The matrix $\mlvec{R}$ can be
written as
\begin{align}
  \frac{m}{2}(
  \frac{1}{2}(1+1/\gamma)\mlvec{e}_{1}\mlvec{e}_{1}^{\dagger} +
  \frac{1}{2}(1-1/\gamma)\mlvec{e}_{2}\mlvec{e}_{2}^{\dagger} +
  \frac{1}{2}\sqrt{1-1/\gamma^{2}}\mlvec{e}_{1}\mlvec{e}_{2}^{\dagger}+
  \frac{1}{2}\sqrt{1-1/\gamma^{2}}\mlvec{e}_{2}\mlvec{e}_{1}^{\dagger}
  )
\end{align}
and can be shown to have eigenvalues $(\frac{m}{2},0,\cdots,0)$ by
straight-forward calculation.
\paragraph{Eigenvalues of $\mathbf{G}$}
Over the uniform measure over all the global minima, the direction vectors
$\hat{\mlvec{u}}_{i}$ are sampled independently and uniformly from a
$d/2$-dimensional (complex) sphere. By the approximate isometric properties (see
e.g. Theorem 5.58 in \cite{vershynin2010introduction}), the gram matrix
$\mlvec{G}$ of $\{\hat{\mlvec{u}}_{j}\}_{j=1}^{m}$ is approximately an isometry:
with probability $\geq 1-2\exp(-c_{p}t^{2})$
\begin{align}
  \opnorm{\mlvec{G}-\mlvec{I}} \leq c_{m}\frac{\max\{\sqrt{m}, t\}}{\sqrt{d}}
\end{align}
for constants $c_{p}$ and $c_{m}$.

Applying Lemma~\ref{lm:hadamard} to $\mlvec{R}$, $\mlvec{G}-\mlvec{I}$ and
$\mlvec{G}^{T}-\mlvec{I}$, we have that with probability $\geq 1-\delta$, the
smallest eigenvalues of $\Kasym$ at global minima is greater than or equal to
\begin{align}
 2\gamma^{2}(1-1/\gamma^{2} - C_{2}\max\{\frac{m^{2}}{d}, \frac{m\log(2/\delta)}{d}\})
\end{align}
for some constant $C_2>0$.
\end{proof}

\section{Experiments}
\label{sec:app_exp}
\subsection{Experiment details}
\label{subsec:app_exp_details}
Our numerical experiments involve simulating both quantum neural networks and
the asymptotic dynamics.
\paragraph{QNN simulation}
We simulate the QNN experiments using
Pytorch~\citep{pytorchcite} with the periodic ansatze defined in
Definition~\ref{def:partial-ansatz}. The generating Hamiltonian $\mlvec{H}$ are
chosen to be a $d$-dimensional diagonal matrix with $d/2$ $\sqrt{d-d^{-1}}$ and $d/2$ $-\sqrt{d-d^{-1}}$ on
the diagonal (normalized such that $\tr(\mlvec{H^{2}})/(d^{2}-1)=1$). Each instance of the experiments is specified by the number of
samples $m$, system dimension $d$, number of parameters $p$ and the scaling
factor $\gamma$. A $m$-sample dataset is generated by randomly
sampled $m$ orthogonal pure states $\{\mlvec{v}_{i}\}_{i=1}^{m}\in\complex^{d}$ and randomly assigned
half of the samples with label $+1$ and the other half label $-1$ (i.e. $\{y_{i}\}_{i=1}^{m}\subset\{\pm 1\}^{m}$).

The optimizer we use is the standard gradient descent optimizer. To simulate the
dynamics of gradient flow, we choose the learning rate to be $0.001 / p$ and the
maximum number of epochs is set to be $10000$. We run the experiments on Amazon EC2 C5 Instances.
\paragraph{Asymptotic dynamics simulation}
Theorem~\ref{thm:qnn-mse-linear} allows us to examine the behavior of
QNN dynamics when $p\rightarrow \infty$ by studying the asymptotic dynamics:
\begin{align}
  \frac{d\mlvec{M}(t)}{dt} = - \eta\sum_{j=1}^{m}r_{j}[\mlvec{M}(t), [\mlvec{M}(t), \mlvec{\rho}_{j}]],
  \quad
  \text{where }
  \forall j\in[m], r_{j} := \tr(\mlvec{M}(t)\mlvec{\rho}_{j}) - y_{j}.
\end{align}
For a QNN asymptotic dynamics with number of samples $m$, system dimension $d$
and scaling
factor $\gamma$, we initialize $\mlvec{M}(0)$ as
\begin{align}
\gamma\mlvec{U}
\begin{bmatrix}
  +1 &  0 & \cdots & 0 & 0\\
   0 & +1 & \cdots & 0 & 0\\
   \vdots & \vdots & \ddots & \vdots & \vdots\\
   0 & 0 & \cdots & -1 & 0\\
   0 & 0 & \cdots & 0& -1
\end{bmatrix}
\mlvec{U}^{\dagger}
\end{align}
with $\mlvec{U}$ being a $d\times d$ haar random unitary. Similar to the QNN
simulation, the training set is chosen to be $m$ orthogonal pure states with labels randomly sampled from $\{\pm 1\}$.
The simulation of the asymptotic dynamics is run on Intel Core i7-7700HQ Processor (2.80Ghz) with 16G memory.
\subsection{\texorpdfstring{$\Kasym$}{Kasym} as a function of \texorpdfstring{$t$}{t}}
In Corollary~\ref{cor:onesample}, we see that the convergence rates for
one-sample QNNs change significantly during training.
Theorem~\ref{thm:qnn-mse-linear} allows us further verify this observation for
training sets with $m>1$ by simulating the asymptotic dynamics.

In Figure~\ref{fig:reKchg_vary}, we plot the relative change of the
$\mlvec{K}_{asym}(t)$ defined as
\begin{align}
  (\mlvec{K}_{asym}(t))_{ij}:=\tr\big(i[\mlvec{M}(t), \mlvec{\rho}_{i}] i [\mlvec{M}(t), \mlvec{\rho}_{j}]\big).
\end{align}
Each of the data point is averaged over 100 random initialization of
$\mlvec{M}(0)$. It is observed that $\mlvec{K}_{asym}(t)$ changes significantly
($\geq 5\%$) for each of the hyperparameters $d$, $m$ and $\gamma$. Therefore we
conclude that the deviation from the neural tangent kernel regression is
ubiquitous in general for practical settings. Particularly it rules out the
existing belief that the $d\rightarrow\infty$ alone can lead to a neural tangent
kernel-like behavior in QNNs. Same is observed for over-parameterized QNNs (Figure~\ref{fig:Kchg_vary_qnn})

\begin{figure}[!htbp]
  \centering
  \includegraphics[width=.9\linewidth]{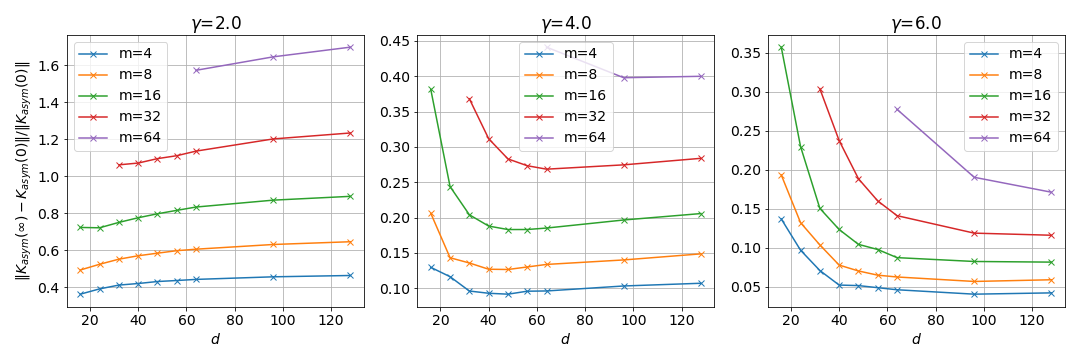}\\
  \includegraphics[width=.91\linewidth]{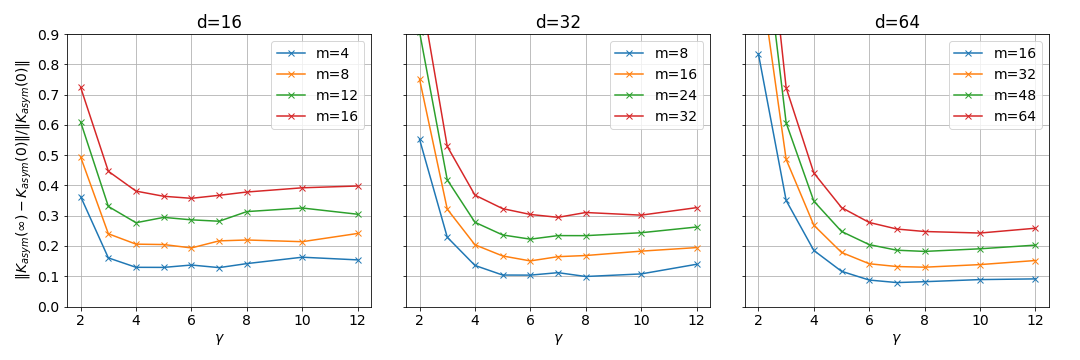}
  \includegraphics[width=.91\linewidth]{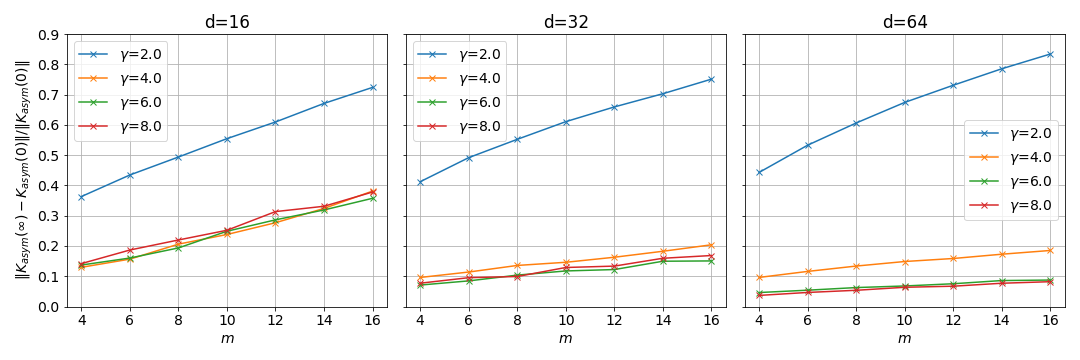}
  \caption{
    Relative change of $\Kasym(t)$ in the QNN asymptotic dynamics
    for varying system dimension $d$, scaling factor $\gamma$ and number of
    training samples $m$.
    $\mlvec{K}_{asym}(t)$ changes significantly ($\geq 5\%$) throughout training.
  }
  \label{fig:reKchg_vary}
\end{figure}

\begin{figure}[!htbp]
  \centering
  \includegraphics[width=.91\linewidth]{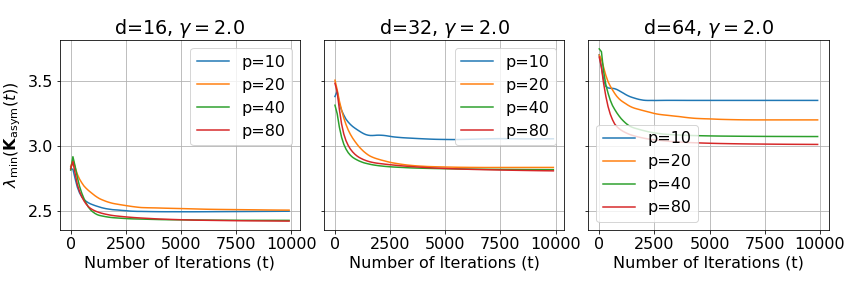}
  \caption{
     Change of the $\lambda_{\min}(\Kasym(t))$ during the training in QNNs
     with $m=4, \gamma=2.0$ and varying $d$.
  }
  \label{fig:Kchg_vary_qnn}
\end{figure}


\end{document}